\definecolor{gray230}{RGB}{240,240,240}
\providecommand{\keywords}[1]{\textbf{\textit{Index terms---}} #1}
\DeclareMathOperator*{\argmax}{arg\,max}
\DeclareMathOperator*{\argmin}{arg\,min}
\newtheorem{theorem}{Theorem}
\newtheorem{lemma}{Lemma}
\newtheorem{corollary}{Corollary}
\newtheorem{definition}{Definition}
\newtheorem{observation}{Observation}
\newtheorem*{example}{Example}
\newtheorem{proposition}{Proposition}
\title{\bfseries Influence before Hiring: A Two-tired Incentive Compatible Mechanism for IoT-based Crowdsourcing in Strategic Setting}
\author{Chattu Bhargavi\thanks{\textcolor{blue}{School of Computer Science and Engineering, VIT-AP University, Amaravati, India.} {\tt \textcolor{blue}{bhargavi.chattu506@gmail.com}}}~\href{https://orcid.org/0000-0003-4481-827X}{\includegraphics[scale=0.0045]{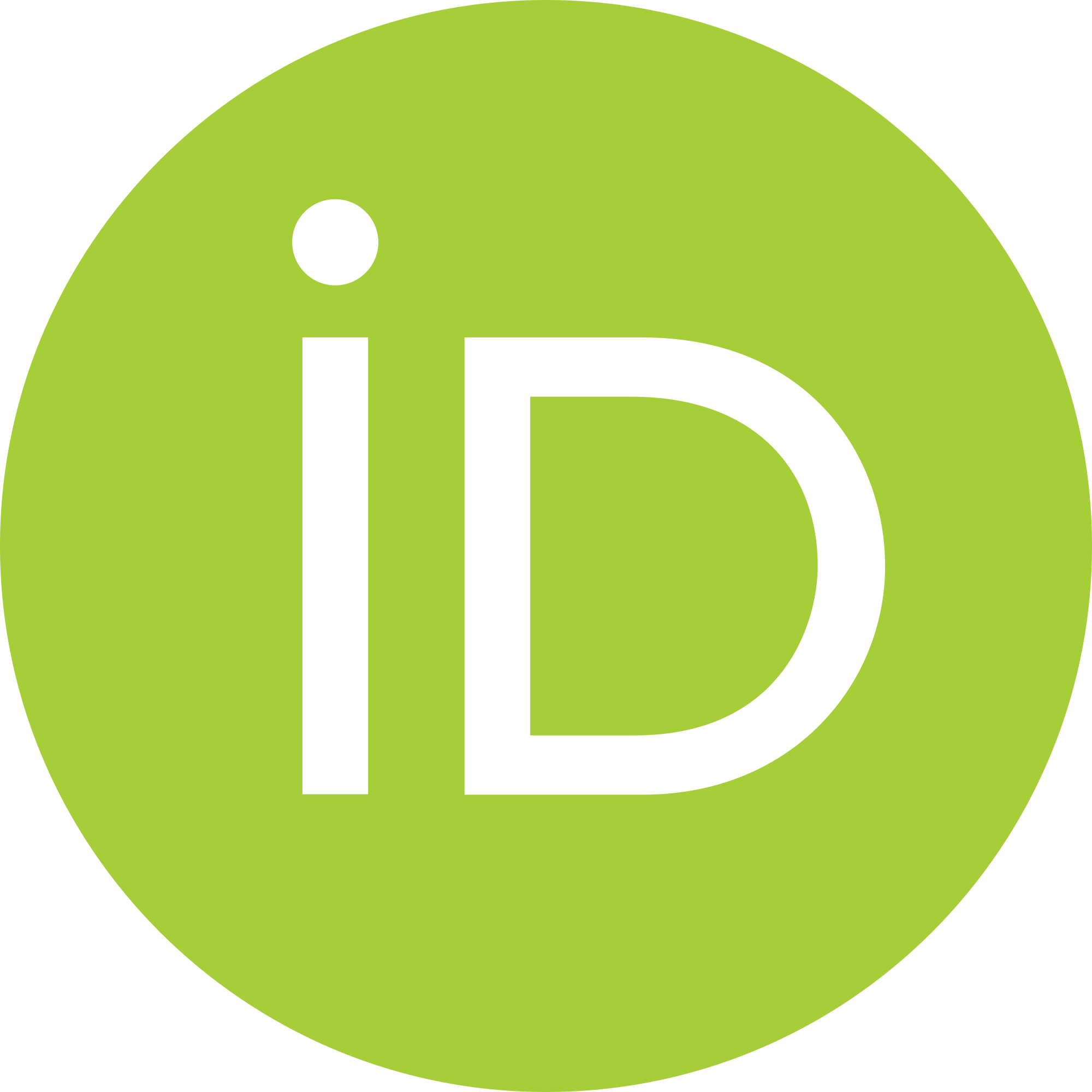}}\and Vikash Kumar Singh\thanks{\textcolor{blue}{School of Computer Science and Engineering, VIT-AP University, Amaravati, India.} {\tt \textcolor{blue}{vikash.singh@vitap.ac.in}}}~\href{https://orcid.org/0000-0002-8747-1627}{\includegraphics[scale=0.0047]{orcid.png}} }
\date{}
\begin{document}
\maketitle

\begin{abstract}
In crowdsourcing, a group of common people is asked to execute the tasks and in return will receive some incentives (maybe monetary benefits or getting social recognition). In this article, one of the crowdsourcing scenarios with multiple heterogeneous tasks and multiple IoT devices (as task executors) is studied as a two-tiered process.\\
\indent In the first tier of the proposed model, it is assumed that a substantial number of IoT devices are not aware of the hiring process and are made aware by utilizing their social connections. Each of the IoT devices is endowed with a cost (\emph{private value}) that it will charge in return for its services. The participating IoT devices are \emph{rational} and \emph{strategic} in nature. The objective of the first tier is to select the subset of IoT devices as \emph{initial notifiers} (helps in spreading awareness among the IoT devices about the task execution process) such that the total number of IoT devices notified is maximized with the stopping condition that the total payment offered to the notifiers is less than or equals to the available budget. For this purpose, an incentive compatible mechanism is proposed that also ensures the total payment made to the initial notifiers is less than or equal to the budget. Once the substantial number of IoT devices got intimated about the hiring process, in the second tier, a subset of quality IoT devices is determined by utilizing the idea of \emph{single-peaked preferences}. Once the quality of IoT devices is determined, the next objective of the second tier is to hire quality IoT devices for the floated tasks. For this purpose, each of the quality IoT devices reports private valuation along with their favorite bundle of tasks that they are interested in executing. In the second tier, it is assumed that the valuation of the IoT devices satisfies \emph{gross substitute} criteria and is \emph{private}. For the second tier, the truthful mechanisms are designed independently for determining the quality of IoT devices and for hiring quality IoT devices and deciding their payment respectively.\\
\indent Theoretical analysis is carried out for the two tiers independently. It is shown that the proposed mechanisms are \emph{computationally efficient}, \emph{truthful}, \emph{correct}, and \emph{budget feasible} (only in the case of the mechanism proposed in the first tier). Further, the probabilistic analysis is carried out to have an estimate of the expected number of IoT devices got notified about the task execution process. The simulation is done to measure the efficacy of the proposed mechanisms with the benchmark mechanisms based on \emph{truthfulness}, \emph{budget feasibility}, and \emph{running time}.  
 \end{abstract}
\keywords{Crowdsourcing, Internet of Things, Quality, Strategic, Budget feasible mechanism, Incentive compatible}
\section{Introduction}
\label{s:intro}
\emph{Crowdsourcing} is a process of completing the floated tasks by a group of common people through an open call \cite{liu2022budget, vahdat2022survey, amour2022crowdsourcing, ang2022towards}. It mainly consists of players such as: (1) \textcolor{red}{\emph{task requester(s)}}, (2) \textcolor{blue}{\emph{platform}} (or \emph{third party}), and (3)  \textcolor{green}{\emph{crowd workers}}. The workflow of the crowdsourcing system is, firstly, the task requester(s) will submit their tasks to some \emph{third party}. On receiving the tasks from the \emph{task requesters}, the \emph{third party} provides the tasks to the crowd workers that are present on the other side of the crowdsourcing market. The crowd workers execute the tasks and submit back the completed tasks to the platform. The third party returns the executed tasks to the respective task requester(s) and the crowd workers get some incentives (maybe monetary benefits or some social recognition) in exchange for their services (in this case executing the floated tasks). The above-discussed scenario is said to be ``\emph{crowdsourcing}" \cite{elsokkary2023crowdsourced,cricelli2021crowdsourcing, ang2022towards, Singh_2020, Singh2019}. However, when crowdsourcing is done using smart devices, it gives rise to a field called ``\emph{mobile crowdsourcing}" (or \emph{mobile crowdsensing} or \emph{participatory sensing} (PS)) \cite{Mukhopadhyay2021, singh2022quad, jaimes2012location, kim2022privacy,Fujihara2020PoWaPPO}.  \\
\indent One of the challenging aspects of crowdsourcing and PS is to have a large number of common people as the task executors in the system. Now, the question is: \emph{how to drag a large group of common people into such systems}? One of the solutions could be to provide them with some incentives (maybe money or some social recognition). In the past, the works have been carried out for designing the \emph{mechanisms} (a.k.a \emph{algorithms}) that will offer incentives to the crowd workers in return for their services, in \emph{strategic} setting \cite{singh2022quad, Mukhopadhyay2021, Singh_2020, Singh2019,DBLP:conf/hcomp/GoelNS14, 9741370}. In \cite{singh2022quad} for the set-up with multiples task requesters and task executors a \emph{truthful} mechanism is discussed. Each task requester is endowed with multiple homogeneous tasks and a bid. Both these quantities are private and are reported to the platform. On the other hand, the IoT devices in the mobile crowdsourcing market report the ask and the number of tasks they can execute, on the platform. The proposed truthful mechanism selects the subset of quality IoT devices (as task executors) for the set of tasks. In \cite{Singh_2020} a quality-based truthful mechanism is proposed for assigning a subset of tasks to the IoT devices in a mutually exclusive manner such that the sum of the valuations of the IoT devices gets maximized. In \cite{Singh2019} an effort has been made to design a quality-adaptive budget feasible truthful mechanism for the set-up consisting of multiple task requesters and multiple task executors. Each task requester has a single task along with the budget. On the other side, there are multiple IoT devices (as crowd workers) that report the bid values (cost they will charge in exchange for their services). Further, the more realistic flavor of the discussed set-up is studied where the tasks are divisible in nature. For the extended version of the problem, a non-truthful budget feasible mechanism is discussed. In \cite{Mukhopadhyay2021}, a mobile crowdsourcing scenario with a single task requester and multiple IoT devices (as task executors) is investigated in \emph{strategic} setting. The task requester has the set of tasks and the budget associated with the tasks. The task executors report a bid for executing the tasks and are private. In the proposed model, the overall budget is not available apriori and is made available in an incremental manner in multiple rounds. For this set-up, a \emph{truthful} mechanism is proposed that also considers that the total payment made to the task executors is within the budget. In \cite{DBLP:conf/hcomp/GoelNS14} an incentive-compatible mechanism is designed for one of the scenarios in crowdsourcing with a single task requester and multiple task executors. The task requester is having multiple tasks and a fixed budget. The goal is to select a set of task executors for executing the set of tasks such that the total payment offered to the selected task executors is less than or equal to the fixed budget. It is to be noted that, in the above-discussed scenarios, the task executors were already aware of the task execution process, but it may not be the case always. It means that only fewer task executors may be aware of the task execution process (or event). Now the question is, how to inform others about the ongoing event? To address the above-discussed realistic scenario, some works have been carried out in the past \cite{9741370, 10.1145/3487580, 8999584}. A two-tiered social crowdsourcing architecture is proposed in \cite{9741370} that allows the task executors to forward the floated tasks that are to be executed to their neighbors in the social connections. In this setup, the tasks are having different end times. For this scenario, the three different system models are discussed based on the arrival modes of the registered users and social neighbors. For the three different models, a truthful mechanism is proposed. In \cite{10.1145/3487580} to increase the crowd workers the floated tasks are diffused in the social network (representing the social connections of the crowd workers). The objective is to diffuse the tasks to as many crowd workers as possible with the constraint that the total payment made to the task diffusers is within the available budget. For the discussed set-up a truthful budget feasible mechanism is developed that takes into account the enhanced classic \emph{independent cascade model}. In \cite{8999584} an effort has been made to design a dynamic incentive mechanism that transfers information about the task execution process through the social connections of the task executors.  
  
\begin{figure*}
\centering
\includegraphics[scale=0.77]{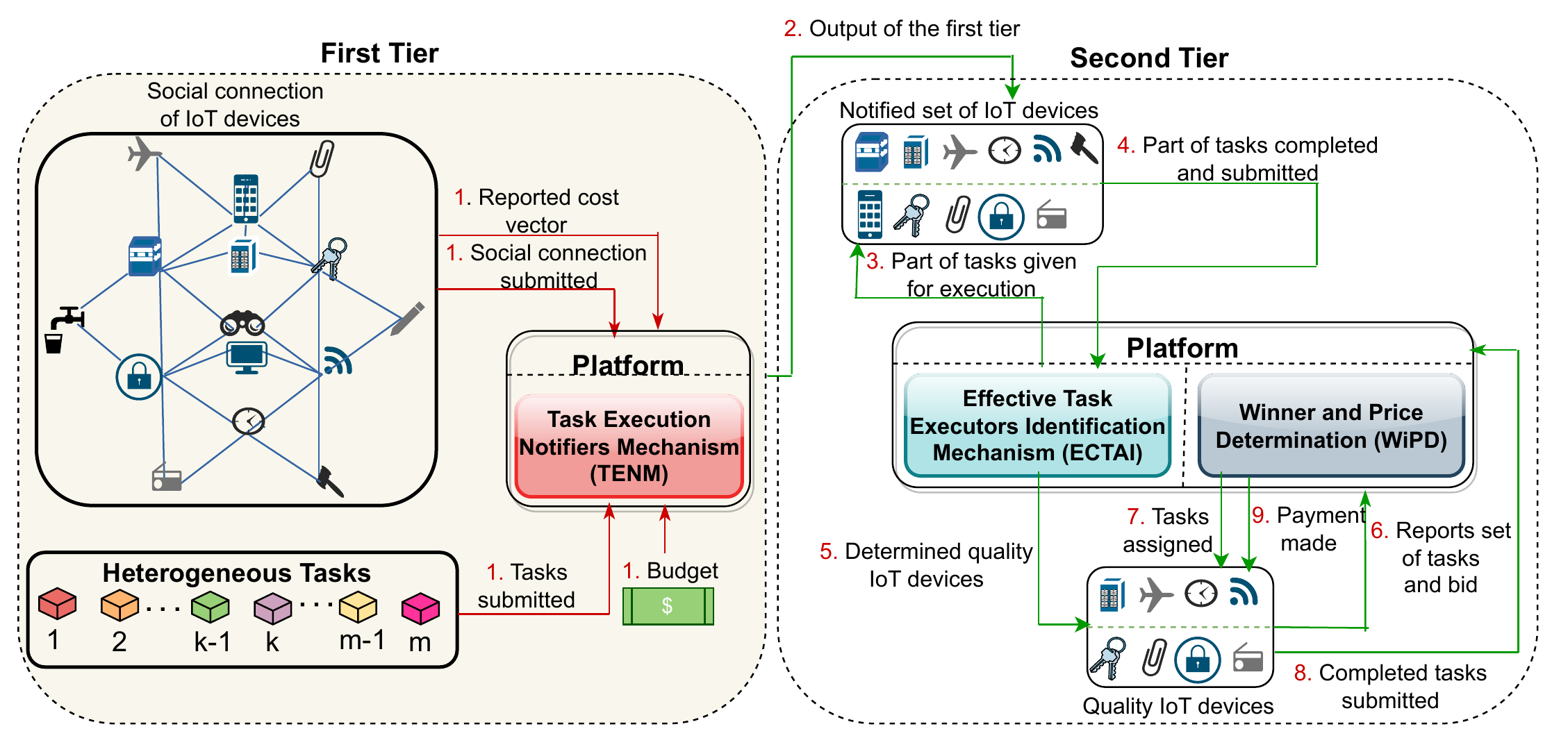}
\caption{A two-tiered framework for IoT-based crowdsourcing}
\label{fig:01}
\end{figure*}
Motivated by the above discussed crowdsourcing scenarios, in this paper, one of the scenarios of IoT-based crowdsourcing is studied as a two-tiered process in \emph{strategic} setting\footnote{\textcolor{blue}{By strategic, it is meant that the agents will try to manipulate the system by misreporting their private information.}} as shown in Figure \ref{fig:01}. In the proposed model, we have multiple heterogeneous tasks and multiple task executors (as \emph{IoT devices}). Firstly, the tasks are submitted to the platform for execution purposes. One of the assumptions is that an insufficient number of task executors are aware of the task execution process. So, on receiving the tasks, one of the challenges of the platform is: \emph{how to inform the sufficient number of task executors about the task execution process?} One of the solutions could be to utilize the social connections of the task executors for notifying (or informing) the sufficient number of task executors for the task execution process. Once notified, the next objective is to hire quality task executors for each of the tasks. For the above-discussed scenario, the discussed model is studied as a two-tier process. In the first tier, the social connections of the task executors are considered to inform the substantial number of task executors about the task execution process. The input to the first tier is the social connection of the task executors, the cost vector\footnote{\textcolor{blue}{The maximum price the IoT devices will charge in exchange for their services.}}, the set of tasks to be executed, and the available budget\footnote{\textcolor{blue}{The money to be invested as the payment for the notifiers in return for their `\emph{word of mouth}' in their social network.}}. The cost for notifying the task execution process to the task executors is \emph{private} and the task executors can act \emph{strategically} to gain. The objective of the first tier is to select the subset of task executors from the given social graph such that the number of task executors that got notified is \emph{maximized} with the constraint that the total payment made to the notifiers is within the fixed budget. Once the substantial number of task executors got notified about the task execution process, in the second tier, the challenges are (1) \emph{to determine the quality task executors among the notified task executors}, and (2) \emph{to hire the quality task executors and decide their payment in exchange for their services}. In the second tier, the challenges mentioned in points 1 and 2 are taken care of. Firstly, to have an idea about the quality of the task executors, an infinitesimally small part of the tasks are given to the task executors for execution purposes. Once the tasks get executed, the executed tasks are given to the peers for estimating the quality of the completed tasks by the task executors ($i.e.$ the quality of the task executors). Once the quality of the task executors is determined, the next objective is to handle the challenge mentioned in point 2 above. For that purpose, each of the quality task executors will be asked for a set of tasks and the valuation that they will charge for executing the requested set of tasks. Given the discussed setup, the output of the second tier is to allocate the tasks to the quality task executors and decide their payment.\\
\indent In this paper for the above-discussed set-up a \emph{\underline{\textbf{T}}wo-tiered \underline{\textbf{I}}ncentive \underline{\textbf{CO}}mpatible \underline{\textbf{M}}echanism} (TICOM) is proposed that consists of three components: (1) \emph{\underline{\textbf{T}}ask \underline{\textbf{E}}xecution \underline{\textbf{N}}otifiers \underline{\textbf{M}}echanism} (TENM), (2) \emph{\underline{\textbf{E}}ffe\underline{\textbf{C}}tive \underline{\textbf{T}}\underline{\textbf{A}}sk executors \underline{\textbf{I}}dentification mechanism} (ECTAI), and (3) \emph{\underline{\textbf{Wi}}nner and \underline{\textbf{P}}rice \underline{\textbf{D}}etermination} (WiPD). In Section \ref{s:prelim} the above-discussed scenario is formulated using mechanism design.  
\subsection{Our Contributions}
The contribution of this paper is:
\begin{enumerate}
    \item The task execution process in crowdsourcing is studied as a two-tiered process. Firstly, a sufficient number of IoT devices are made aware of the hiring process (or task execution process) by utilizing the social connections of the IoT devices (using Algorithm \ref{algo:1}). Further, the IoT devices that helped in spreading awareness about the task execution process in their social connections are paid such that the total payment made to them is within the fixed budget (using Algorithm \ref{algo:2}).  
    \item Once a sufficient number of IoT devices got notified about the task execution process, in the second tier, (1) the quality of the IoT devices is determined (using Algorithm \ref{algo:3}), and (2) for the floated tasks the quality IoT devices are selected and their payments are decided (using Algorithm \ref{algo:14}).      
    \item For the discussed set-up, TICOM is proposed that consists of the following components: (1) \underline{\textbf{T}}ask \underline{\textbf{E}}xecution \underline{\textbf{N}}otifiers \underline{\textbf{M}}echanism (TENM) (Algorithm \ref{algo:1} and Algorithm \ref{algo:2}), (2) \underline{\textbf{E}}ffe\underline{\textbf{C}}tive \underline{\textbf{T}}\underline{\textbf{A}}sk executors \underline{\textbf{I}}dentification mechanism (ECTAI) (Algorithm \ref{algo:3}), and (3) \underline{\textbf{Wi}}nner and \underline{\textbf{P}}rice \underline{\textbf{D}}etermination (WiPD) (Algorithm \ref{algo:14}).
    \item The theoretical analysis of the two tiers is carried out independently. Firstly, in the first tier, through theoretical analysis it is shown that TENM is \emph{computationally efficient} (Lemma \ref{lemma:p1}) \emph{correct} (Lemma \ref{lemma:p2}), \emph{truthful} (Corollary \ref{lemma:p3}), and \emph{budget feasible} (Corollary \ref{corr:p2}). Further, in the first tier, the probabilistic analysis is carried out to have an estimate of the number of IoT devices that got notified in expectation in a social graph. In the second tier, through theoretical analysis, it is shown that ECTAI and WiPD are \emph{computationally efficient} (Lemma \ref{lemma:sp1}), \emph{correct} (Lemmas \ref{lemma:sp2} and \ref{lemma:sp3}), and \emph{truthful} (Lemmas \ref{lemma:8} and \ref{lemma:ttu}).  
    \item The simulation for the two tiers is done independently. In the first tier, TENM is compared with two baseline mechanisms, namely, \emph{\textbf{\underline{n}}on-\textbf{\underline{t}}ruthful \textbf{\underline{b}}udget \textbf{\underline{f}}easible \textbf{\underline{m}}echanism} (NTBFM) and \emph{\textbf{\underline{p}}roportional \textbf{\underline{s}}hare \textbf{\underline{m}}echnaism} (PSM) \cite{Singer_2016}. The comparison is done based on (1) the utility of notifiers, (2) budget feasibility, (3) the number of IoT devices selected as initial notifiers in the social network, and (4) running time. In the second tier of the proposed framework, the experiments are carried out to compare ECTAI and WiPD with the already existing mechanism \emph{\textbf{\underline{a}}verage \textbf{\underline{v}}oting \textbf{\underline{r}}ule} (AVR) and greedy mechanism (outline of this mechanism is given in Subsection \ref{subsec:bl}) respectively on the ground of \emph{truthfulness}, \emph{individual rationality}, and \emph{running time}. 
\end{enumerate}

\subsection{Paper Organization}
 The remainder of the paper is structured as follows. The works carried out in the fields of crowdsourcing and mobile crowdsourcing is discussed in Section \ref{sec:rw}. Section \ref{s:prelim} describes the notations and preliminaries that are utilized throughout the paper. The proposed mechanisms are illustrated in section \ref{section:algorithm}. In section \ref{section:PM} the game theoretic and probabilistic analysis of the proposed mechanisms for the two tiers is carried out independently. The simulation and result analysis are carried out in Section \ref{sec:Sim}. Finally, the paper is concluded with the possible future directions in section \ref{se:conc}. 

\section{Related Prior Works}\label{sec:rw} In this section, the works carried out in crowdsourcing and PS in \emph{strategic} setting are discussed. The readers can go through \cite{10.1145/3494522, kim2022privacy, cricelli2021crowdsourcing, s20072055, abualsaud2018survey,  phuttharak2018review} to get an idea of the recent works carried out in crowdsourcing and PS. \\
\indent In \cite{s20072055} a comprehensive
review of different game theoretic solutions is done, that address the following issues in PS such as \emph{sensing cost}, \emph{quality of data}, and \emph{incentives}. In the past several incentive mechanisms are developed for different crowdsourcing and PS scenarios in \emph{strategic} setting \cite{qiao2022truthful, 8031314, DBLP:conf/hcomp/GoelNS14, 9369249, 10.1109/TCSS.2022.3149000, singh2022quad}. In \cite{qiao2022truthful} an incentive-compatible profit-oriented mechanism is designed for the setup with a single crowdsourcer and multiple workers. The workers will submit the bids (the amount they will charge in return for their services). Along with truthfulness, the proposed mechanism is \emph{individually rational} and \emph{computationally efficient}. In \cite{8031314} the incentive mechanisms were designed for IoT-based mobile crowdsourcing systems (MCSs) for surveillance applications. In \cite{DBLP:conf/hcomp/GoelNS14} paper the setup consists of a set of heterogeneous tasks such that it requires certain skills from the crowd workers to get completed. For this purpose, the crowd workers show interest in the set of tasks that they can perform based on their skills. The goal is to design a mechanism that along with \emph{truthfulness} satisfies \emph{budget feasibility}. In \cite{9369249} article the problem of allocating heterogeneous tasks with multiple skill requirements in crowdsourcing is tackled. The objective is to determine the mutually exclusive, quality set of workers who can successfully complete the tasks within a given \emph{deadline} and \emph{budget}. In \cite{10.1109/TCSS.2022.3149000} the goal is to crowdsource the small tasks such as \emph{image labeling} and \emph{voice recording} that gives rise to several challenges: (1) crowd workers may have different capacities for doing the works and may misreport it with their bid, (2) if the auction is running multiple times, then there is a chance that some sufficient number of workers may leave the market that reduces the competition in the system. To tackle the above challenges a \emph{truthful} mechanism is developed. In \cite{xiao2017mobile} the vehicles choose their capability for sensing the tasks based on sensing and transmission cost and the expected payment that will be received from the server. The Nash equilibrium (NE) of the static vehicular crowdsensing game had been determined for the sensing task and gave the condition that leads to the existence of NE. For the dynamic mobile crowdsensing game the solution is based on reinforcement learning.\\
\indent In \cite{fang2022selecting}, the crowdsourcing system is studied as a two-stage problem that consists of a task assignment stage and a truth discovery stage. Utilizing the prior knowledge about the domain of the tasks, firstly, the tasks are classified based on the domain and then allocated to the respective expert domains using a mechanism based on \emph{greedy} algorithm. To identify the copiers, the Bayesian model is utilized. Further for truth discovery, the iterative method is adopted.  A two-tiered social crowdsourcing architecture is proposed in \cite{9741370} that allows the task executors to forward the floated tasks that are to be executed to their neighbors in the social connections. In this setup, the tasks are having different end times. For this scenario, the three different system models are discussed based on the arrival modes of the registered users and social neighbors. For the three different models, a truthful mechanism is proposed. In \cite{10.1145/3487580} to increase the crowd workers the floated tasks are diffused in the social network (representing the social connections of the crowd workers). The objective is to diffuse the tasks to as many crowd workers as possible with the constraint that the total payment made to the task diffusers is within the available budget. For the discussed set-up a truthful budget feasible mechanism is developed that takes into account the enhanced classic \emph{independent cascade model}. In \cite{8999584} an effort has been made to design a dynamic incentive mechanism that transfers information about the task execution process through the social connections of the task executors. \cite{9416787} developed an incentive-based mechanism for truth discovery, with the primary objective being minimizing the copiers.\\ 
\indent Several quality-based incentive schemes are developed for different scenarios in crowdsourcing and PS \cite{Mukhopadhyay2021, Singh2019, Singh_2020, 8667369, singh2022quad}. In \cite{Mukhopadhyay2021} the setup consists of a single task requester and multiple task executors, where a task requester is endowed with multiple tasks and the budget. It is assumed that the overall budget is not available apriori and will be available in an incremental fashion. On the other side, we are having multiple task executors along with the charges that they will ask in return for their services. The objective is to select the subset of quality task executors for the given tasks such that the total payment made to the task executors is within the budget. In \cite{Singh2019} the heterogeneous task assignment problem is investigated in \emph{strategic} setting. The setup consists of multiple task requesters, each having a single task and multiple IoT devices (as task executors). In this, there is a publicly known budget that will be utilized for payment to the task executors in exchange for their services. The objective is to select the subset of quality  task executors for each task such that the total payment made to the task executors is within the budget. The setup with multiple task requesters and multiple task executors, where each task requester is endowed with multiple tasks is discussed in \cite{Singh_2020}. Here, each of the tasks has start and finish times associated with it. On the other side, we have multiple task executors that ask for the set of tasks they are interested in executing along with the cost they will charge. For the purpose of allocating the subset of task executors to each task in a non-conflicting manner a \emph{truthful} mechanism is proposed. Gong et al. \cite{8667369} considered the data quality and data accuracy, and proposed a truthful mechanism.  In \cite{singh2022quad} there are multiple task requesters and multiple IoT devices (as \emph{task executors}). Each task requester reports a set of homogeneous tasks and the bids (the amount they are willing to pay to the task executors in exchange for completing the tasks). On the other side, each of the  available IoT devices reports the number of tasks it can execute and the cost it will charge for imparting its services. The bids and asks of the task requesters and task executors respectively are \emph{private} information. For this scenario, a \emph{truthful} mechanism is proposed for allocating the quality IoT devices to the tasks carried by task requesters. Some other research works \cite{6120180, 10.1145/2796314.2745871, 7892023, electronics12040960} in crowdsourcing have focused on learning the data quality of crowd workers.  \\
\indent From the above-discussed literature reviews that the scenario discussed in this paper, in IoT-based crowdsourcing in strategic settings has not been considered in the past. In this paper, a \emph{truthful} mechanism is proposed that first provides awareness about the task execution process among its social connections. After that, each task is assigned quality IoT devices and payment for the quality IoT devices is decided.      

\section{Notation and Preliminaries}\label{s:prelim}
In this section, a crowdsourcing scenario discussed in this paper will be formulated using  \emph{mechanism design}. There are \emph{m} heterogeneous tasks and \emph{n} IoT devices (as \emph{task executors}). Here, $n \gg m$.  The set of tasks is given as $\boldsymbol{t}$, where $\boldsymbol{t}$ = $\{\boldsymbol{t}_1, \boldsymbol{t}_2, \ldots, \boldsymbol{t}_m\}$, and $\boldsymbol{t}_i$ represents $i^{th}$ task. The set of IoT devices is given as $\mathcal{I}$, where $\mathcal{I}$ = $\{\mathcal{I}_1, \mathcal{I}_2, \ldots, \mathcal{I}_n\}$, and $\mathcal{I}_j$ represents $j^{th}$ IoT device. Our proposed model is a two-tier model. Let us see each of the tiers one by one.\\

\noindent \textbf{\underline{First tier}:} In this model, it is assumed that a substantial number of IoT devices may not be aware of the \emph{task execution} event. For this reason, in the first tier, the social connection of the IoT devices is utilized to notify about the \emph{task execution event} to other IoT devices. In our model, the social connections of the IoT devices are depicted through a graph $\mathcal{G}(\boldsymbol{\mathcal{N}}^T, \boldsymbol{\mathcal{R}}^T)$, here, $\boldsymbol{\mathcal{N}}^T$ represents the set of IoT devices and are acting as the nodes (or vertices) of the graph, and $\boldsymbol{\mathcal{R}}^T$ represents the set of edges between the IoT devices in a graph $\mathcal{G}$. We say that $\mathcal{I}_i$ and $\mathcal{I}_j$ are socially connected, if and only if there exists an edge $(i,~j) \in \boldsymbol{\mathcal{R}}^T$, otherwise not. The \emph{notify function} is given as $h$, and is represented as $h:2^{\boldsymbol{\mathcal{N}}^T}\rightarrow\boldsymbol{\Re}$.  Given a set $U \subseteq \boldsymbol{\mathcal{N}}^T$ the value $h(U)$ represents the expected number of IoT devices getting notified about the \emph{task execution event} in the social graph $\mathcal{G}$. It is considered that the function $h:2^{\boldsymbol{\mathcal{N}}^T}\rightarrow\boldsymbol{\Re}$ is \emph{monotone (non-decreasing) submodular function}. By \emph{monotone}, It is meant that, for any $\mathcal{H} \subseteq \mathcal{J}$, $h(\mathcal{H}) \leq h(\mathcal{J})$.
\begin{definition}
$h:2^{\boldsymbol{\mathcal{N}}^T}\rightarrow\boldsymbol{\Re}$ is submodular if $h(\mathcal{H} \cup \{i\}) - h(\mathcal{H}) \geq h(\mathcal{J} \cup \{i\}) - h(\mathcal{J})$, $\forall$ $\mathcal{H} \subseteq \mathcal{J}$.
\end{definition}
\noindent Each IoT device $\mathcal{I}_i$ in graph  $\mathcal{G}(\boldsymbol{\mathcal{N}}^T, \boldsymbol{\mathcal{R}}^T)$ has a private \emph{bid} (or \emph{cost}) and is given as $c_i$. It is the amount that any $i^{th}$ IoT device will charge for being an initial \emph{notifier}. By private \emph{cost}, it is meant that the cost is only known to it and not known to other IoT devices and the mechanism designer. It is assumed that the IoT devices are \emph{strategic} and \emph{rational}. It means that they will try to manipulate their \emph{private} information (in this case, the \emph{cost}) to gain. For example, the reported cost by any $i^{th}$ IoT device could be $\hat{c}_i$ such that $\hat{c}_i=c_i$ or $\hat{c}_i \neq c_i$. $\hat{c}_i=c_i$ represents the fact that the IoT device $\mathcal{I}_i$ report bid in a \emph{truthful} manner. The \emph{cost vector} of the IoT devices is given as $c = \{c_1, c_2, \ldots, c_n\}$. For the purpose of notifying the IoT devices, the coverage influence model in social networks is utilized. In the graph, say, if each IoT device $\mathcal{I}_i$ is connected with the subset of IoT devices $\boldsymbol{\mathcal{Z}}_i$, then the number of IoT devices notified about the \emph{task execution event} by the subset $\boldsymbol{U} \subseteq \mathcal{I}$ is given as $h(\boldsymbol{U}) = \bigg|\bigcup\limits_{i\in \boldsymbol{U}} \boldsymbol{\mathcal{Z}}_i \bigg|$. The payment vector of the initial \emph{notifiers} is given as $\boldsymbol{\bar{\rho}}$, where $\boldsymbol{\bar{\rho}} = \{\boldsymbol{\bar{\rho}}_1, \boldsymbol{\bar{\rho}}_2, \ldots, \boldsymbol{\bar{\rho}}_n\}$ and $\boldsymbol{\bar{\rho}}_{i}$ is the payment of any $i^{th}$ notifier. The utility of any $i^{th}$ IoT device as notifier is given as: 
\begin{equation}
\label{equ:1aaaa}
    \mathcal{U}_i(c, \boldsymbol{\bar{\rho}}) = \begin{cases} \boldsymbol{\bar{\rho}}_i - c_i,\quad \text{if $\mathcal{I}_i$ acts as initial notifiers. }  \\
     0, \quad \text{otherwise }
  \end{cases}
  \end{equation}
Given the above-discussed scenario and the publicly known budget $\mathcal{B}$, the objective is to select the subset of IoT devices as the \emph{initial notifier} such that the total payment made to them is within the available budget $\mathcal{B}$. The output of the first tier is the \emph{subset of IoT devices as the initial notifiers}, the \emph{subset of notified IoT devices}, and the \emph{payment vector} that contains the payment of each of the IoT devices that are acting as the \emph{notifiers}.\\ 
\indent From the above discussion, it is clear that it is a single-parametric mechanism design problem, as each IoT device has only a single private information $i.e.$ \emph{cost}. So, for designing a \emph{truthful} mechanism for the above-discussed set-up, a \emph{greedy} technique-based mechanism is one of the viable solutions. It is due to the reason that the \emph{greedy} technique will be \emph{monotone} when sorted according to \emph{marginal notification} (see Definition \ref{def:mn}) per cost.
\begin{definition}[\textbf{Marginal notification \cite{Singer:2012:WFI:2124295.2124381}}]
\label{def:mn}
Marginal notification of j given $\mathcal{S}_{i-1}$ is: $h_{j| \mathcal{S}_{i-1}} = h(\mathcal{S}_{i-1} \cup \{j\}) - h(\mathcal{S}_{i-1})$. In this, $h_{j| \mathcal{S}_{i-1}}$ and $h_{\mathcal{I}_j| \mathcal{S}_{i-1}}$ will be used interchangeably where ever required. Here, $\mathcal{S}_{i-1}$ represents the set of $i-1$ IoT devices that are already selected using the same rule.   
\end{definition}

\begin{definition}
Marginal notification of IoT device $\mathcal{I}_i$ at position $k$ is $h_{i,k}$ = $h({T}_{k-1}\cup \{\mathcal{I}_i\}) - h({T}_{k-1})$ where $T_k$ denotes the subset of first $k$ IoT devices in the marginal notification-per-cost sorting over the subset of IoT devices $\mathcal{I}/\{\mathcal{I}_i\}.$
\end{definition}
\noindent Given the above-discussed set-up, for the first tier, a \emph{truthful} mechanism is proposed that satisfies the constraint that the total payment made to the initial notifiers is within the available budget.\\
%  Once a sufficient number of task executors got aware of the task execution event, in the second fold, the allocation of the task executors is done and the payments are decided. Each task $\boldsymbol{t}_i$ has start time $s_i$ and finish time $f_i$\footnote{The start time means the time at which a task is available for execution and the finish time is the time after which the task is not available for execution.}. The second fold is a two-step process. The objective of the first step is to distribute the tasks into different time slots so that in a slot no two tasks should be incompatible.
%  The two tasks $\boldsymbol{t_i}$ and $\boldsymbol{t_j}$ are said to be incompatible if the following holds: (1) $s_i$$\leq$ $s_j$$\leq$ $f_i$$\leq$ $f_j$, (2) $s_j$$\leq$$s_i$$\leq$$f_i$$\leq$$f_j$, (3) $s_j$$\leq$ $s_i$$\leq$ $f_j$$\leq$ $f_i$ and (4) $s_i$$\leq$ $s_j$$\leq$$f_j$$\leq$ $f_i$. The reason behind distributing the tasks into different time slots so as to provide flexibility to the task executors to execute multiple bundles of tasks throughout the day. If is assumed that in each slot there exists a sufficient number of task executors for performing the tasks.\\

\noindent \textbf{\underline{Second Tier}:} Once a sufficient number of IoT devices are made aware of the floated event, our next primary objective is to have a set of quality IoT devices from among the available ones. For that purpose, the idea of peer assessment is utilized. The general idea of the peer assessment is that the completed work(s) (or in our case the completed task(s)) by the IoT devices are assessed by their peers and the reports are submitted. Based on the submitted reports, the quality of IoT devices is determined. In the proposed model, the peer assessment is implemented by utilizing the idea of \emph{single-peaked preference}. In this, firstly, the IoT devices that are to be ranked are placed on the scale of $[0,1]$ randomly. After that infinitesimally small part of tasks are provided to these IoT devices for execution purposes. After execution, the completed tasks of each of the IoT devices are given to some randomly selected IoT devices (other than those present on the scale of $[0,1]$) for assessment purposes. As an assessment process, each $i^{th}$ IoT device provides a peak value $\alpha_i \in [0,1]$ that is \emph{private}. The reported peak value $\alpha_i$ of any $i^{th}$ IoT device will be the peak value at which its favorite IoT device is placed or closer to its favorite IoT device. Let us take an example to understand it in a better way. For example, let us say, there are 4 IoT devices $\mathcal{I}_{i-1}$, $\mathcal{I}_i$, $\mathcal{I}_{j}$, and $\mathcal{I}_{j+1}$ that are placed at 0.34, 0.47, 0.52, and 0.65 respectively on the scale of $[0,1]$ as shown in Figure \ref{fig:21a}. The peak values of the other 4 IoT devices $i.e.$ $\mathcal{I}_k$, $\mathcal{I}_{k+1}$, $\mathcal{I}_{i+1}$, and $\mathcal{I}_{j-1}$ are given as 0.37, 0.34, 0.58, and 0.65 respectively as shown in Figure \ref{fig:21b}. It means that $\mathcal{I}_{i+1}$'s most preferred IoT device is $\mathcal{I}_j$, $\mathcal{I}_{k+1}$'s most preferred IoT device is $\mathcal{I}_{i-1}$ and likewise. After getting the peak values from the IoT devices, in this paper, the quality of IoT devices is determined using Algorithm \ref{algo:3} (see Subsection \ref{sec:42}). 
  \begin{figure}[H]
     \centering
     \begin{subfigure}[h]{0.48\textwidth}
         \centering
         \includegraphics[scale = 0.85]{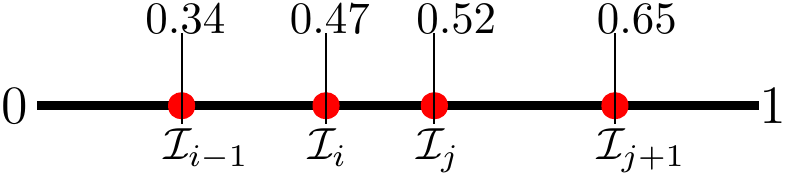}
         \caption{IoT devices placed on $[0,1]$ scale}
         \label{fig:21a}
     \end{subfigure}
     \begin{subfigure}[h]{0.49\textwidth}
         \centering
         \includegraphics[scale = 0.85]{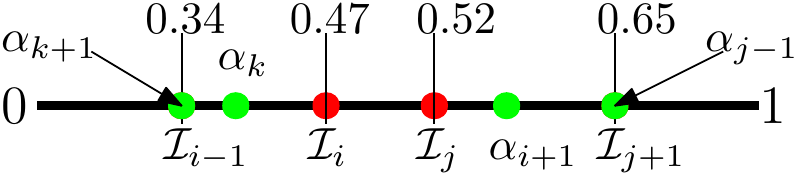}
         \caption{Peak values reported by $\mathcal{I}_k$, $\mathcal{I}_{k+1}$, $\mathcal{I}_{i+1}$, and $\mathcal{I}_{j-1}$}
         \label{fig:21b}
     \end{subfigure}
     \caption{Illustrating the meaning of peak values}
     \label{fig:21}
\end{figure}
\noindent After the determination of quality IoT devices, further challenges that need to be handled are:
 \begin{enumerate}
   \item Hire the subset of IoT devices from the available quality IoT devices.
     \item Distribute the set of available tasks to the subset of quality IoT devices for execution purposes.
     \item  What pricing strategy is to be followed for deciding the payment of the winning quality IoT devices?
 \end{enumerate}
The second step of the second tier of the proposed model takes care of the above-coined questions. It is assumed that each IoT device requests for a bundle of tasks at a time instead of a single task. Each IoT device $\mathcal{I}_i$ has a private valuation $v_i(S)$ for each bundle $S\subseteq \mathcal{F}$ of tasks that it might receive. The valuation function utilized in this step of the second tier satisfies the monotonicity condition i.e., $v_i(S) \leq v_i(\mathcal{F})$ for $S\subseteq \mathcal{F}$\footnote{\textcolor{blue}{It means that more tasks can only be better.}}. For an empty set of tasks, $v_i(\phi)=0$. For the discussed setup for a mechanism to work, it is assumed  that the valuation function satisfies the \emph{gross-substitute} condition (See Definition \ref{def:gs1}). 
 \begin{definition}[\textbf{Gross Substitute (GS) \cite{T.roughgarden_20145, NNisa_Pre_2007}}]
 \label{def:gs1}
  For any IoT device $\mathcal{I}_i$, the valuation $v_i$ satisfies GS condition if and only if for every price vector $\boldsymbol{\rho}$, some set $S\in D_i(\boldsymbol{\rho})$ and for every price vector $\boldsymbol{r}\geq \boldsymbol{\rho}$,  $\exists$ $T$ with  
  \begin{equation*}
     (S \setminus \boldsymbol{\chi}) \bigcup T \in D_i(\boldsymbol{r})
  \end{equation*}
  where $\boldsymbol{\chi} =  \bigg \{j:\boldsymbol{r}(j)>\boldsymbol{\rho}(j)\bigg \}$ is the set of tasks whose prices have gone up, and $ S\setminus\boldsymbol{\chi}$ is the set of tasks for which the prices remains same and $\mathcal{I}_i$ still wants them. $D_i(\boldsymbol{\rho})$ is the supply of IoT device $\mathcal{I}_i$ at price vector $\boldsymbol{\rho}$.
\end{definition}
\noindent For each IoT device $\mathcal{I}_i$ the utility at price vector $\boldsymbol{\rho}$ is given as:
\begin{equation}
\label{equ:1a}
     u_i(S,\boldsymbol{\rho}) = \begin{cases} \sum\limits_{j \in S} \boldsymbol{\rho}_i(j) - v_i(S),\quad \text{if $S$ is assigned to IoT device $\mathcal{I}_i$. }  \\
     0, \quad \text{otherwise }
  \end{cases}
  \end{equation}
  Here, $\boldsymbol{\rho}_i(j)$ is the price paid to the IoT device $\mathcal{I}_i$ for executing its $j^{th}$ assigned task. The utility of IoT device $\mathcal{I}_i$ is 0 if it does not receive the requested set of tasks. The supply by any IoT device $\mathcal{I}_i$ at the given price vector $\boldsymbol{\rho}$ is given as:  
  \begin{equation}
  \label{equ:1b}
     D_i(\boldsymbol{\rho})=argmin \bigg \{ \sum_{j \in S} \boldsymbol{\rho}_i(j) - v_i(S) \bigg \}_{S \subseteq \boldsymbol{t}}
  \end{equation}
As the participating IoT devices are \emph{strategic} in nature so they will try to maximize their utility by misreporting their private valuation (in this case the valuation $v_i(S)$ of IoT device $\mathcal{I}_i$ for a set of tasks $\mathcal{S}$). The objective of the second tier is to design a mechanism in presence of \emph{strategic} agents, such that, it returns an allocation and payment vectors with high social welfare (see \textbf{Definition} \ref{def:sw}). An allocation for the discussed set-up is allocation vector $\mathcal{A} =  \{\mathcal{A}_1, \mathcal{A}_2, \ldots, \mathcal{A}_{\mathcal{N}} \}$, where $\mathcal{A}_i = (\mathcal{S}_i,  \mathcal{I}_i)$ and $\boldsymbol{\rho} = \{\boldsymbol{\rho}_1, \boldsymbol{\rho}_2, \ldots, \boldsymbol{\rho}_{\mathcal{N}}\}$. Here, $\mathcal{N}$ is the number of quality IoT devices.\\
\indent Given the above-discussed scenario of IoT-based crowdsourcing in a strategic setting, the goal is to design a mechanism that takes care of the following: (1) determine the set of initial notifiers that will notify the substantial number of task executors about the task execution process with the constraint that the total payment made to the notifiers is within the budget. (2) Determine a set of quality task executors. (3) Allocating the set of tasks to the quality task executors for execution purposes and deciding their payment.\\
\indent In the upcoming section, each of the components of the proposed mechanism $i.e$ TICOM is discussed and presented in a detailed manner.
  \begin{table}[t]
\caption{Notations used}
\label{tab:first table}
\centering
\begin{tabular}{|c|c|}
\hline
\textbf{Symbols} & \textbf{Descriptions}\\
\hline
$\emph{m}$ & Number of heterogeneous tasks\\
%\hline
$\emph{n}$ & Number of IoT devices\\
%\hline
$\boldsymbol{t}$  & $\boldsymbol{t}$ = $\{\boldsymbol{t}_1, \boldsymbol{t}_2, \ldots, \boldsymbol{t}_m\}$ : Set of heterogeneous tasks.\\
%\hline
$\boldsymbol{t}_i$ & $i^{th}$ task.\\
%\hline
$\mathcal{I}$ & $\mathcal{I}$ = $\{\mathcal{I}_1, \mathcal{I}_2, \ldots, \mathcal{I}_n\}$ : Set of available IoT devices.\\
%\hline
$\mathcal{I}_j$ & $j^{th}$ IoT device.\\
%\hline
$\mathcal{G}$ & $\mathcal{G}(\boldsymbol{\mathcal{N}}^T, \boldsymbol{\mathcal{R}}^T)$ : Represents the social connection of IoT devices.\\ 
%\hline
$\boldsymbol{\mathcal{N}}^T$ &  A set of IoT devices representing the nodes of a graph $\mathcal{G}$.\\
%\hline
$\boldsymbol{\mathcal{R}}^T$ & A set of edges between the IoT devices in a graph $\mathcal{G}$.\\
%\hline
$h$ & $h:2^{\boldsymbol{\mathcal{N}}^T}\rightarrow\boldsymbol{\Re}$ :  Represents a notify function.\\ 
$\boldsymbol{\mathcal{Z}}_i$ & Set of IoT devices notified by $i^{th}$ IoT device.\\
%\hline
$c_i$ & True cost of $i^{th}$ IoT device.\\
%\hline
$\hat{c}_i$ & Reported cost of $i^{th}$ IoT device.\\
%\hline
$c$ & $c = \{c_1, c_2, \ldots, c_n\}$ : Cost vector of the IoT devices for \\ & being the initial notifiers.\\
%\hline
%\hline
$\mathcal{B}$ & Available budget.\\
%\hline
$\boldsymbol{\bar{\rho}}$ &  $\boldsymbol{\bar{\rho}} = \{\boldsymbol{\bar{\rho}}_1, \boldsymbol{\bar{\rho}}_2, \ldots, \boldsymbol{\bar{\rho}}_n\}$ : Payment vector of IoT devices that are competing \\ & for being the initial notifiers.\\
%\hline
$\boldsymbol{\bar{\rho}}_{i}$ &  Represents the payment of any $i^{th}$ IoT device\\ & as an initial notifier.\\
$\mathcal{U}_i(c, \boldsymbol{\bar{\rho}})$ & Utility of any $i^{th}$ IoT device given cost and payment vectors. \\
%\hline
$\alpha_i$ & $\alpha_i \in [0,1]$: Peak value of $i^{th}$ IoT device.\\
%\hline
$v_i(\mathcal{S})$ & Private valuation of $i^{th}$ IoT device for set of tasks $\mathcal{S}$. \\
%\hline
%$v_i(\phi)$ & Empty bundle set.\\
%\hline
$\mathcal{N}$ & It is the number of quality IoT devices.\\
$\mathcal{A} $ &  $\mathcal{A} =  \{\mathcal{A}_1, \mathcal{A}_2, \ldots, \mathcal{A}_{\mathcal{N}}\}$: Allocation vector.\\
%\hline
$\mathcal{A}_i$ & $\mathcal{A}_i= (\mathcal{S}_i,\mathcal{I}_i)$: Allocation made to $i^{th}$ IoT device.\\
%\hline
$\boldsymbol{\rho}$ & $\boldsymbol{\rho} = \{\boldsymbol{\rho}_1, \boldsymbol{\rho}_2, \ldots, \boldsymbol{\rho}_{\mathcal{N}}\}$ : Payment vector of IoT devices selected \\ & for executing the tasks.\\
%\hline
$\boldsymbol{\rho}_{i}$ &  Represents the payment of any $i^{th}$ IoT device.\\
%\hline
$u_i(S, \boldsymbol{\rho})$ &  Utility of $i^{th}$ IoT device given the price vector $\boldsymbol{\rho}$.\\
$D_i(\boldsymbol{\rho})$ &  Supply by any $i^{th}$ IoT device given the price vector $\boldsymbol{\rho}$.\\
\hline
\end{tabular}
\end{table}

  \subsection{Additional Required Definitions}
\begin{definition}[\textbf{Truthful or Incentive Compatible (IC) \cite{NNisa_Pre_2007}}] A mechanism is said to be truthful or IC if for any $i^{th}$ IoT device~ $\mathcal{U}_i(c, \boldsymbol{\bar{\rho}})= \boldsymbol{\bar{\rho}}_i - c_i \geq \boldsymbol{\bar{\rho}}_i - \hat{c}_i =\hat{\mathcal{U}}_i(\hat{c}, \boldsymbol{\bar{\rho}})$.
  \end{definition}
  \begin{definition}[\textbf{Budget feasible \cite{singer2012win, Singer_2016}}]
 A mechanism is said to be budget feasible, if the total payment made to the winning initial notifiers is less than equal to the available budget $\mathcal{B}$ $i.e.$  $\displaystyle\sum_{i\in \mathcal{I}} \boldsymbol{\bar{\rho}}_i\leq \mathcal{B}.$ 
  \end{definition}
  
  \begin{definition}[\textbf{Individual rationality \cite{NNisa_Pre_2007}}]
  \label{def:2}
  A mechanism is said to be individually rational if, for every participating IoT device in the crowdsourcing market, the utility is non-negative. In other words $\mathcal{U}_i(c, \boldsymbol{\bar{\rho}}) \geq 0$ (for first tier) or $u_i(S,~\boldsymbol{\rho}) \geq 0$ (for second tier).
    \end{definition}
  
%  \begin{definition}[\textbf{Ex-Post Incentive Compatible (EPIC) \cite{}}]
 % \label{def:2}
  %A mechanism is said to be an EPIC if sincere bidding is an ex-post Nash equilibrium in which all bidders always receive non-negative utility.
  %\end{definition}
\begin{definition}[\textbf{Social Welfare \cite{NNisa_Pre_2007}}]
\label{def:sw}
It is the sum of the valuations of the IoT devices for their preferred set of tasks. Mathematically, it is given as: 
  \begin{equation*}
  \displaystyle\sum_{i=1}^{\mathcal{N}} v_i(S)
 \end{equation*}
 where, $S$ is the requested set of task from each IoT device $\mathcal{I}_i \in \mathcal{I}$.
 \end{definition}
 \begin{definition}[\textbf{Computational Efficiency}]  
 A  mechanism (a.k.a algorithm) is said to be computationally efficient if each step of the mechanism takes polynomial time.
  \end{definition}
\begin{figure}[H]
\centering
\includegraphics[scale = 0.88]{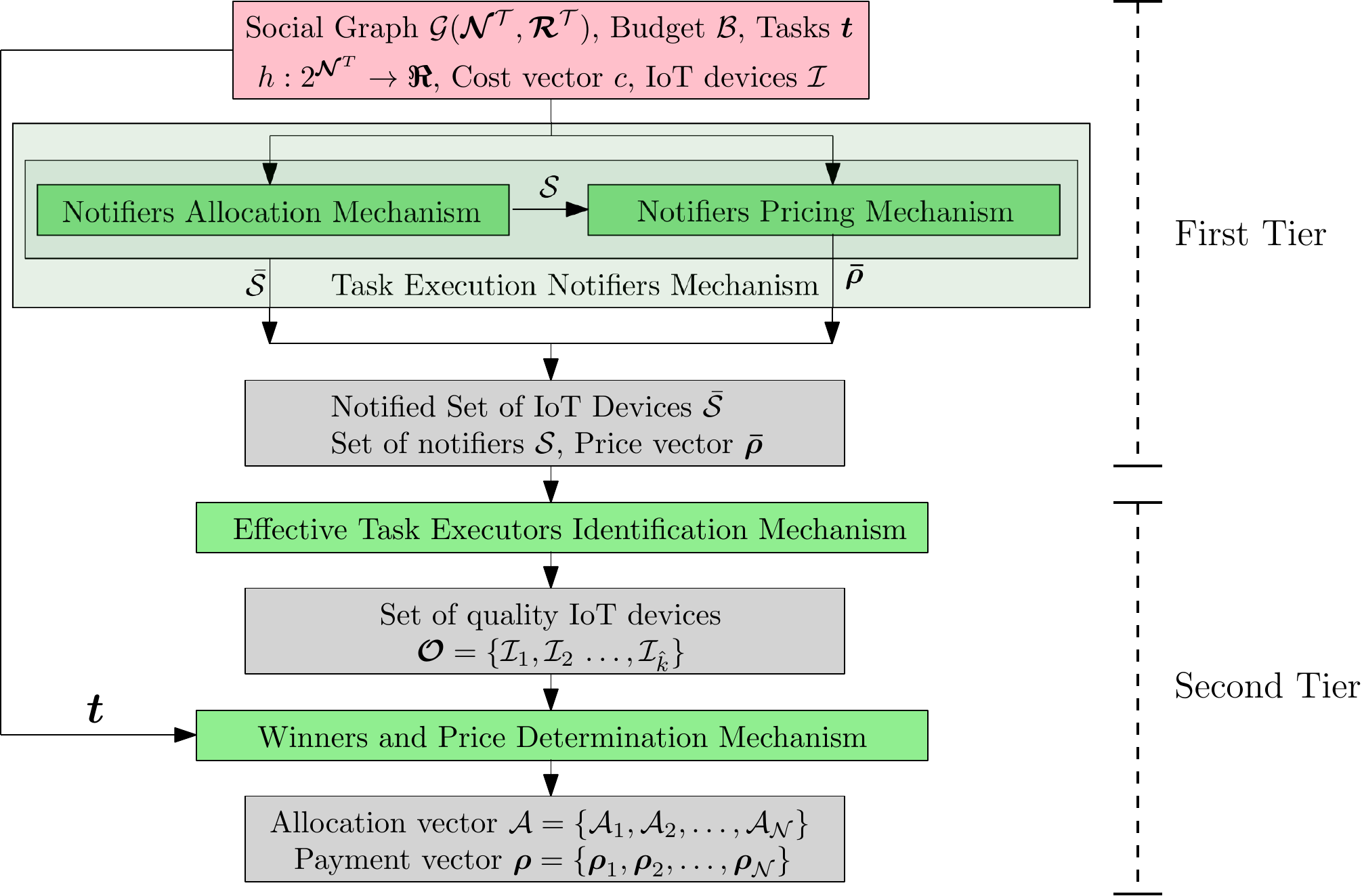}
\caption{Work Flow of Two-tiered Incentive COmpatible Mechanism (TICOM)}
\label{fig:21111}
\end{figure}
\section{\textsc{Two-tiered Incentive Compatible Mechanism (TICOM)}}
\label{section:algorithm}
In this section, TICOM is discussed in a detailed manner. The TICOM consists of:
\begin{itemize}
    \item \textcolor{blue}{\textbf{Task execution notifiers mechanism (TENM)}} $-$ It is useful in determining the set of initial notifiers that will notify a substantial number of IoT devices for the task execution event in the crowdsourcing market and decide their payment. The payment made to the initial notifiers is such that the total payment made to them should be within the available budget.    
    \item \textcolor{blue}{\textbf{Effective task executors identification mechanism (ECTAI)}} $-$ It helps to determine the quality/effective task executors among the available task executors in the crowdsourcing market.
    \item \textcolor{blue}{\textbf{Winners and price determination mechanism (WiPD)}} $-$ Hires the quality IoT devices and decide their payments.
\end{itemize}
The overall workflow of TICOM is depicted in Figure \ref{fig:21111}. 
In the upcoming subsections, each of the components of TICOM is discussed step-by-step in a detailed manner. 
\subsection{\textsc{Task Execution Notifiers Mechanism}}
In this section, a truthful budget feasible mechanism is proposed that is used to figure out the task executors in the given social graph that will propagate the event of task execution to the task executors in their social connection. It will also determine their payment in return for acting as the initial notifier. The task execution notifier mechanism consists of (1) \emph{notifier allocation mechanism} (NAM), and (2) \emph{notifier pricing mechanism} (NPM). The two components are discussed below.

\subsubsection{\textsc{Notifiers Allocation Mechanism} (NAM)}
The input to the NAM is the social graph $\mathcal{G}$, budget $\mathcal{B}$, the cost vector of the task executors $\hat{c}$, and the set of task executors $\mathcal{I}$. The output of NAM is the set of task executors as notifiers and the set of task executors that got notified about the task execution event.

\begin {algorithm}[H]
\caption{\textsc{Notifiers Allocation Mechanism} ($\mathcal{G}$, $\mathcal{B}$, $\hat {c}$, $\mathcal{I})$}
\label{algo:1}
\noindent
\textbf{Output:} $\mathcal{S} \leftarrow \phi$, $\bar{\mathcal{S}} \leftarrow \phi$
\begin{algorithmic}[1]
\FOR{each $\mathcal{I}_i \in \mathcal{I}$}
\STATE $h_{{i}|\mathcal{S}_{i-1}}$ = $h$($\mathcal{S}_{i-1}\cup \{i\}$) - $h(\mathcal{S}_{i-1})$ \COMMENT{\textcolor{blue}{Marginal notification of $i^{th}$ IoT device, given the set $\mathcal{S}_{i-1}$.}}
\ENDFOR
\STATE  $\boldsymbol{\Delta} \gets 2$ \COMMENT{\textcolor{blue}{$\boldsymbol{\Delta}$ is initialized to 2.}}
\STATE $i \gets argmax_{\mathcal{I}_k \in \mathcal{I} \setminus\mathcal {S}} \bigg(\frac{h_{k| \mathcal{S}_{k-1}}}{c_k}\bigg)$ \COMMENT{\textcolor{blue}{Selects an IoT device with maximum marginal notification per cost, given set $\mathcal{S}_{k-1}$.}}
\WHILE{$c_i$ $\leq$ $\frac{\mathcal{B}}{\boldsymbol{\Delta}}$   \bigg($\frac{h_{\mathcal{I}_i| \mathcal{S}_{i-1}}}{(h(\mathcal{S}_{i-1}) + h_{\mathcal{I}_i| \mathcal{S}_{i-1}})}$\bigg)}
\STATE $\mathcal{S} $ $\leftarrow$ $\mathcal{S} \cup \{\mathcal{I}_i\}$ \COMMENT{\textcolor{blue}{ $\mathcal{S}$ holds the set of IoT devices that satisfies the stopping condition in line 6.}}
\STATE$\tilde { \mathcal{S}} \leftarrow \tilde{\mathcal{S}} \cup \{\mathcal {\boldsymbol{\mathcal{Z}}}_i\}$ \COMMENT{\textcolor{blue}{ $\bar{\mathcal{S}}$ holds the set of IoT devices that got notified by the IoT devices in $\mathcal{S}$.}}
\STATE $i \gets argmax_{\mathcal{I}_k \in \mathcal{I} \setminus\mathcal {S}} \bigg(\frac{h_{\mathcal{I}_k| \mathcal{S}_{k-1}}}{c_k}\bigg)$ \COMMENT{\textcolor{blue}{Selects an IoT device with maximum marginal notification per cost among the available ones, given set $\mathcal{S}$.}} 
\ENDWHILE
\STATE return $\mathcal{S}$, $\bar{\mathcal{S}}$ \COMMENT{\textcolor{blue}{Returns the set of task executors as notifiers and the notified set of task executors.}}
\end{algorithmic}
\end{algorithm}
\noindent In Algorithm \ref{algo:1}, lines 1-3 calculate the marginal notification by each of the task executors $\mathcal{I}_i \in \mathcal{I}$. In line 2 the marginal notification of $\mathcal{I}_i$ given an already selected set of task executors as notifiers $\mathcal{S}_i$ is calculated. In line 4 variable $\boldsymbol{\Delta}$ is initialized to 2. In line 5, the task executor with maximum marginal notification per cost is selected among the available ones. In lines 6-10, the idea of determining the task executors as notifiers are presented. In line 6 the stopping condition checks if the cost of $i^{th}$ task executor is less than or equal to $\frac{\mathcal{B}}{\boldsymbol{\Delta}}$ times the ratio of the marginal notification of $i^{th}$ task executor to the number of task executors notified by the set of task executors in set $\mathcal{S} \cup \{\mathcal{I}_i\}$. If the stopping condition in line 6 is true, then in line 7 the task executor $\mathcal{I}_i$ is held in $\mathcal{S}$. In line 8 $\bar{\mathcal{S}}$ holds the set of IoT devices that got notified by the IoT devices in $\mathcal{S}$. For the next iteration, the task executor with the highest marginal notification per cost is selected among the available ones in line 9. Lines 6-10 iterate until the stopping condition in line 6 is true. In line 11 the set of task executors as notifiers and the notified set of task executors are returned.   

\subsubsection{\textsc{Notifiers Pricing Mechanism} (NPM)}
The input to the NPM is the set initial notifier $\mathcal{S}$, budget $\mathcal{B}$, and cost vector $\hat{c}$. The output of the NPM is the price vector $\boldsymbol \bar{\boldsymbol {\rho}}$  of the initial notifiers. In line 1 of Algorithm \ref{algo:2}, $\mathcal{S}'$ is initialized to $\phi$. In lines 2-19 the payment calculation for the \emph{initial notifiers} in set $\mathcal{S}$ is done. In line 3, a set $\boldsymbol{\xi}$ holds the set of all the IoT devices except $\mathcal{I}_j \in \mathcal{S}$. After that, the task executor with the highest marginal notification per cost is selected from $\boldsymbol{\xi}$ and is stored in variable $i$, as shown in line 4. Lines 5-9 determine the set of initial notifiers when task executor $\mathcal{I}_j$ is dragged out of the crowdsourcing market.  Now, any  $i^{th}$ IoT device will be added in the set $\mathcal{S}'$ only when the stopping condition in line 5 is true. Once added, the selected IoT device $\mathcal {I}_i$ will be removed from $\boldsymbol{\xi}$. For the next iteration, a task executor will be selected from the available task executors and will be held in $i$. Lines 5-9 iterate until the stopping condition in line 5 is true. Once terminated, in lines 10-16, the two quantities are calculated $\boldsymbol{\nabla}_{j,k}$ and $\boldsymbol{\rho}_{j,k}$. In line 11, the quantity $\boldsymbol{\nabla}_{j,k}$ is calculated and is held in $\boldsymbol{\nabla}_{j}$ in line 12. The quantity $\boldsymbol{\rho}_{j,k}$ is calculated in line 13 and is held in $\boldsymbol{\rho}_{j}$ in line 14. In line 15, the minimum of two quantities is determined. Line 17 gives us the maximum value present in $\boldsymbol{\rho'}_{j}$ and held in $\boldsymbol{\bar{\rho}}$.
 Finally, in line 20 the payment vector $\boldsymbol{\bar{\rho}}$ is returned.
\begin{example}
\emph{Let us understand the \emph{task execution notifier mechanism} with the help of an example. As discussed earlier  it consists of two components: (1) \emph{notifier allocation mechanism}, and (2) \emph{notifier pricing mechanism}. Both the components are elaborated in the order discussed above}. \emph{The graph shown in Figure \ref{fig:ex1} represents the social connections of the task executors and will be helpful in notifying the substantial number of task executors about the task execution event. The value inside the square box is the cost that will be charged by IoT devices in exchange for notifying the IoT devices about the task execution event.}
\begin{figure}[H]
     \centering
     \begin{subfigure}[h]{0.35\textwidth}
         \centering
         \includegraphics[width=\textwidth]{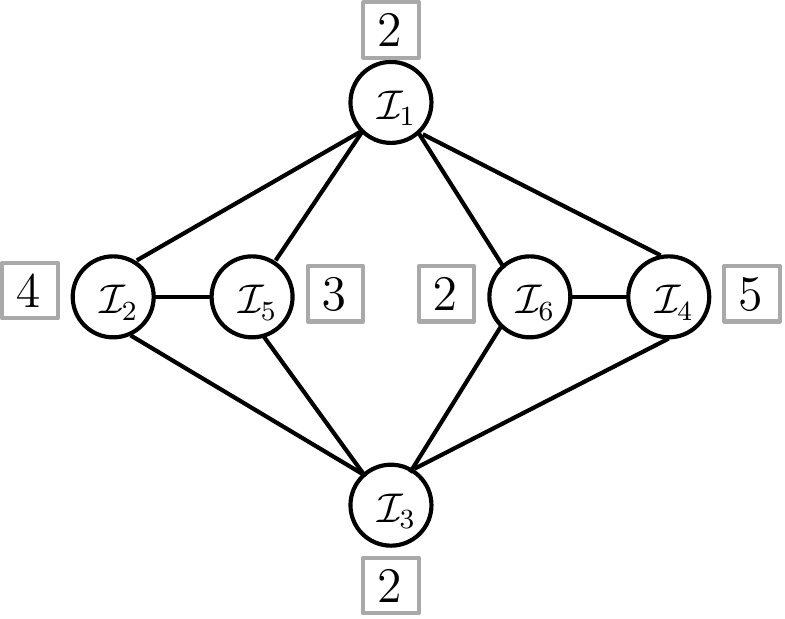}
         \caption{Graph Representing Social Connection of IoT Devices}
         \label{fig:ex1}
     \end{subfigure}
     \hfill
     \begin{subfigure}[h]{0.55\textwidth}
\begin{tabular}{c|c|c|c|c|c|c}
\hline
\textbf{Task Executors} & $\mathcal{I}_1$ & $\mathcal{I}_2$ & $\mathcal{I}_3$ & $\mathcal{I}_4$ & $\mathcal{I}_5$ & $\mathcal{I}_6$\\
\hline
$\boldsymbol{h_{i|S}}$ & 4 & 3 & 4 & 3 & 3 & 3\\
\hline
$\boldsymbol{\hat{c}}$ & 2 & 4 & 2 & 5 & 3 & 2\\
\hline
$\boldsymbol{\frac{h_{i|S}}{\hat{c}}}$ & 2 & 0.75 & 2 & 0.60 & 1 & 1.50\\
\hline
\end{tabular}
\caption{Calculation of Marginal notification given $\mathcal{S} = \phi$, Cost vector, and Marginal notification per cost}
         \label{fig:ex1b}
     \end{subfigure}
     \caption{Initial Set-up for Illustration of NAM}
\end{figure}
\begin {algorithm}[H]
\caption{\textsc{Notifier Pricing Mechanism} ($\mathcal{S}$, $\mathcal{B}$, $\hat {c}$)}
\label{algo:2}
\noindent
\textbf{Output:}  $\boldsymbol{\bar{\rho}} \leftarrow \phi$ 
\begin{algorithmic}[1]
\STATE $\mathcal{S}'\leftarrow \phi$
\FOR{each $ \mathcal {I}_j \in \mathcal{S}$}
\STATE $ \boldsymbol {\xi} \leftarrow \mathcal{I}\setminus \{\mathcal{I}_j\} $ \COMMENT{\textcolor{blue}{Removing $\mathcal{I}_j$ from the market and storing rest of the task executors in $\boldsymbol{\xi}$.}}
\STATE $i \gets \argmax\limits_{\mathcal{I}_k \in \boldsymbol{\xi}}  \bigg(\frac{h_{\mathcal{I}_k|\mathcal{S}'}}{\hat{c}_k}\bigg)$ \COMMENT{\textcolor{blue}{The $k^{th}$ task executor with maximum marginal notification to cost ratio is determined from $\boldsymbol{\xi}$ and is stored in $i$.}}
\WHILE{$\frac{\hat{c}_i}{\mathcal{B}}$ $\leq$    \bigg($\frac{h_{\mathcal{I}_i|\mathcal{S}'}}{h_{\mathcal{I}_i|\mathcal{S}'} + h({\mathcal{S}'})}$\bigg)} 
\STATE $\mathcal{S}' $ $\leftarrow$ $\mathcal{S}' \cup \{\mathcal{I}_i\}$ \COMMENT{\textcolor{blue}{$\mathcal{S}'$ holds the set of IoT devices that satisfies the stopping condition in line 5.}}
\STATE$ \boldsymbol{\xi} \leftarrow \boldsymbol{\xi} \setminus \{{\mathcal{I}_i}\}$ \COMMENT{\textcolor{blue}{$\mathcal{I}_i$ removed from $\boldsymbol{\xi}$.}}
\STATE $i \gets \argmax\limits_{\mathcal{I}_k \in \boldsymbol{\xi}}  \bigg(\frac{h_{\mathcal{I}_k|\mathcal{S}'}}{c_k}\bigg)$ \COMMENT{\textcolor{blue}{Selects an IoT device with maximum marginal notification per cost
among the available ones, given set $\mathcal{S}'$.}}
\ENDWHILE
\FOR{ $k\gets 1$ to $|\mathcal{S}'|+1 $ }
\STATE$\boldsymbol{\nabla}_{j, k}$ $\gets$ $h_{j,k}$ $\cdot$ $\bigg (\frac{c_k}{h'_{{k}|T_{k-1}}}\bigg)$ \COMMENT{\textcolor{blue}{Cost that any $j^{th}$ IoT device would have revealed when considered in place of IoT device that is already present at any $k^{th}$ position.}}
\STATE$\boldsymbol{\nabla}_{j}$ $\gets$ $\boldsymbol{\nabla}_{j}$ $\cup$ \{$\boldsymbol{\nabla}_{j, k}\}$ 
\STATE $\boldsymbol {\rho}_{j,k} $ $\gets$ $\mathcal{B}$ $\cdot$ $\bigg(\frac{ h_{j,k}}{h({T}_{k-1} \cup  \{\mathcal{I}_j\})}\bigg)$ \COMMENT{\textcolor{blue}{The fraction of budget that will be utilized as the payment of the $j^{th}$ IoT device at some $k^{th}$ position.}}
\STATE $\boldsymbol {\rho}_{j} $ $\gets$ $\boldsymbol {\rho}_{j}$ $\cup \{ \boldsymbol {\rho}_{j, k}\}$ \COMMENT{\textcolor{blue}{The payment of any $j^{th}$ IoT device at position $k$ is stored in $\boldsymbol {\rho}_{j}$.}}
\STATE $\boldsymbol {\rho'}_{j} $ $\gets$ $\min \{\boldsymbol{\nabla}_{j}, \boldsymbol {\rho}_{j}$\} \COMMENT{\textcolor{blue}{Minimum of $\boldsymbol{\nabla}_j$ and $\boldsymbol{\rho}_j$ is determined and is stored in $\boldsymbol{\rho}_j'$.}}
\ENDFOR
\STATE $\boldsymbol{\bar{\rho}}_j$ $\gets$ $\max \{\boldsymbol{\rho'}_{j}$\} \COMMENT{\textcolor{blue}{Determining the maximum value from $j = 1$ to $|\mathcal{S}'|+1$ and stored in $\bar{\boldsymbol{\rho}_j}$.}}
\STATE $\boldsymbol{\bar{\rho}} \gets$ $\boldsymbol{\bar{\rho}} \cup \{\boldsymbol{\bar{\rho}}_j\}$ \COMMENT{\textcolor{blue}{Payment of each of the $j^{th}$ IoT device in winning set is determined and is stored in $\bar{\boldsymbol{\rho}}$}}
\ENDFOR
\STATE return $\boldsymbol{\bar{\rho}}$ \COMMENT{\textcolor{blue}{The payment vector for the winning IoT devices is returned.}}
\end{algorithmic}
\end{algorithm}

\emph{From Figure \ref{fig:ex1} it is evident that the task executor $\mathcal{I}_1$ has social connection with task executors $\mathcal{I}_2$, $\mathcal{I}_4$, $\mathcal{I}_5$, and $\mathcal{I}_6$. In the similar way, task executor $\mathcal{I}_2$ has social connection with $\mathcal{I}_1$, $\mathcal{I}_3$, and $\mathcal{I}_5$ and so on. Further calculations in the running example will be done by considering the available budget $\mathcal{B}$ as 12.}
\begin{itemize}
    \item \emph{\textcolor{blue}{\underline{\textsc{\textbf{Notifiers Allocation Mechanism}}}}: Applying Algorithm \ref{algo:1} to  Figure \ref{fig:ex1}, the marginal notification of each of the task executors in the graph shown in Figure \ref{fig:ex1} are calculated given $\mathcal{S} = \phi$. After that, the ratio $h_{i|\mathcal{S}}$ per cost is calculated using line 2 of Algorithm \ref{algo:1}. The calculated values are depicted in Figure \ref{fig:ex1b} in tabular form. Following line 5, one of the two task executors $\mathcal{I}_1$ and $\mathcal{I}_3$ will be considered as both have the highest value for marginal notification per cost but are the same. Let us say $\mathcal{I}_1$ is considered randomly. For task executor $\mathcal{I}_1$ the stopping condition $2 \leq \large(\frac{12}{2}\large) \cdot \frac{4}{4} = 6$ in line 6 is true and is selected. So, $S= \{\mathcal{I}_1\}$, and $\tilde{S} = \{\mathcal{I}_2, \mathcal{I}_4, \mathcal{I}_5, \mathcal{I}_6\}$. In the next iteration, the graph configuration shown in Figure \ref{fig:ex2a} will be considered. In the second iteration, the marginal notification of each of the task executors is calculated given $\mathcal{S} = \{\mathcal{I}_1\}$ as shown in Figure \ref{fig:ex2b}. After that, the ratio $h_{i|\mathcal{S}}$ per cost is calculated. The calculated values are shown in Figure \ref{fig:ex2b}. Following lines 6-10, a task executor $\mathcal{I}_6$ will be considered as it is having the highest marginal notification per cost value among the available IoT devices. For task executor $\mathcal{I}_6$ the stopping condition $2 \leq \large(\frac{12}{2}\large) \cdot \frac{2}{6} = 2$ in line 6 is true and is selected. So, $S= \{\mathcal{I}_1, \mathcal{I}_6\}$, and $\tilde{S} = \{\mathcal{I}_1, \mathcal{I}_2, \mathcal{I}_3, \mathcal{I}_4, \mathcal{I}_5, \mathcal{I}_6\}$.}
    
    \begin{figure}[H]
     \centering
     \begin{subfigure}[h]{0.35\textwidth}
         \centering
         \includegraphics[scale = 0.70]{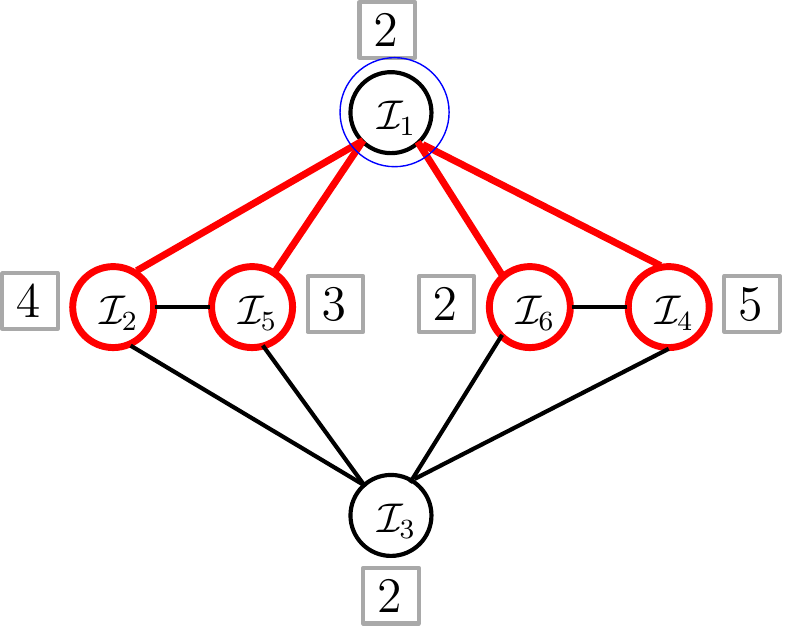}
         \caption{Graph Representing Social Connection of IoT Devices and Selected IoT Device $\mathcal{I}_1$}
         \label{fig:ex2a}
     \end{subfigure}
     \hfill
     \begin{subfigure}[h]{0.55\textwidth}
\begin{tabular}{c|c|c|c|c|c|c}
\hline
\textbf{Task Executors} & $\mathcal{I}_1$ & $\mathcal{I}_2$ & $\mathcal{I}_3$ & $\mathcal{I}_4$ & $\mathcal{I}_5$ & $\mathcal{I}_6$\\
\hline
$\boldsymbol{h_{i|S}}$ & 0 & 2 & 0 & 2 & 2 & 2\\
\hline
$\boldsymbol{\hat{c}}$ & 2 & 4 & 2 & 5 & 3 & 2\\
\hline
$\boldsymbol{\frac{h_{i|S}}{\hat{c}}}$ & 0 & 0.5 & 0 & 0.40 & 0.66 & 1\\
\hline
\end{tabular}
\caption{Calculation of Marginal notification given $\mathcal{S} = \{\mathcal{I}_1\}$, Cost vector, and Marginal notification per cost}
         \label{fig:ex2b}
     \end{subfigure}
     \caption{Illustration of $2^{nd}$ Iteration of While Loop of NAM}
     \label{fig:ex2}
\end{figure}
\emph{For the next iteration the configuration shown in Figure \ref{fig:ex3a} will be considered. In the third iteration, the marginal notification of each of the task executors is calculated given $\mathcal{S} = \{\mathcal{I}_1, \mathcal{I}_6\}$ as shown in Figure \ref{fig:ex3b}. After that, the quantity $h_{i|\mathcal{S}}$ per cost is calculated. The calculated values are shown in Figure \ref{fig:ex3b}. Following lines 6-10, for none of the task executor the stopping condition in line 6 will be satisfied as the marginal notification per cost value for the available task executors are 0. Hence, the \emph{while} loop in lines 6-10 will terminate, and line 11 will return $\mathcal{S} =\{\mathcal{I}_1, \mathcal{I}_6\}$, and $\tilde{S} = \{\mathcal{I}_1, \mathcal{I}_2, \mathcal{I}_3, \mathcal{I}_4, \mathcal{I}_5, \mathcal{I}_6\}$.}
 
    \begin{figure}[H]
     \centering
     \begin{subfigure}[h]{0.35\textwidth}
         \centering
         \includegraphics[scale = 0.70]{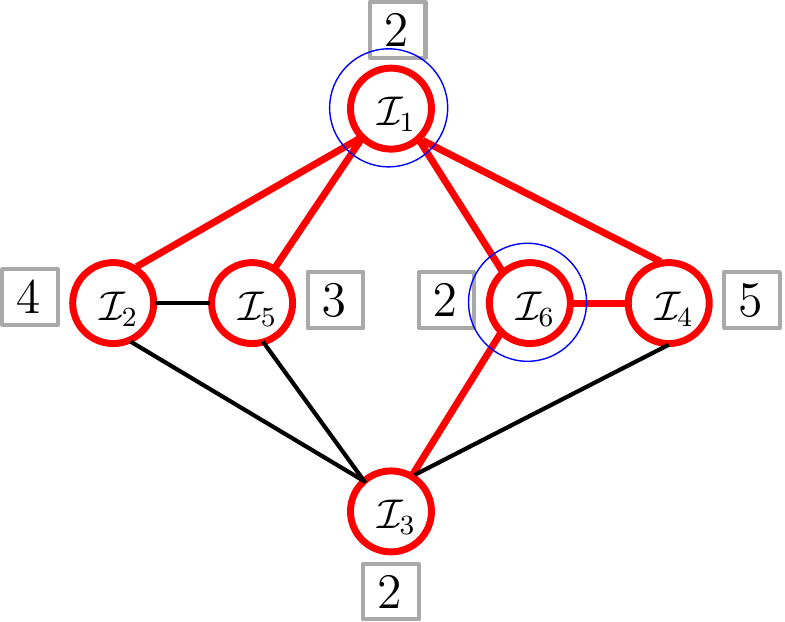}
         \caption{Graph Representing Social Connection of IoT Devices and Selected IoT Device $\mathcal{I}_1$ and $\mathcal{I}_6$}
         \label{fig:ex3a}
     \end{subfigure}
     \hfill
     \begin{subfigure}[h]{0.55\textwidth}
\begin{tabular}{c|c|c|c|c|c|c}
\hline
\textbf{Task Executors} & $\mathcal{I}_1$ & $\mathcal{I}_2$ & $\mathcal{I}_3$ & $\mathcal{I}_4$ & $\mathcal{I}_5$ & $\mathcal{I}_6$\\
\hline
$\boldsymbol{h_{i|S}}$ & 0 & 0 & 0 & 0 & 0 & 0\\
\hline
$\boldsymbol{\hat{c}}$ & 2 & 4 & 2 & 5 & 3 & 2\\
\hline
$\boldsymbol{\frac{h_{i|S}}{\hat{c}}}$ & 0 & 0 & 0 & 0 & 0 & 0\\
\hline
\end{tabular}
\caption{Calculation of Marginal notification given $\mathcal{S} = \{\mathcal{I}_1, \mathcal{I}_6\}$, Cost vector, and Marginal notification per cost}
         \label{fig:ex3b}
     \end{subfigure}
     \caption{Illustration of $3^{rd}$ Iteration of While Loop of NAM}
     \label{fig:ex3}
\end{figure}
\item \textsc{\textcolor{blue}{\underline{\textbf{Notifiers Pricing Mechanism}}:}} \emph{Using Algorithm \ref{algo:2}, the payment of IoT devices $\mathcal{I}_1$ and $\mathcal{I}_6$ will be calculated. From the construction of Algorithm \ref{algo:2}, to calculate payment of $\mathcal{I}_1$, the task executor $\mathcal{I}_1$ will be placed out of the social graph. After dragging out IoT device $\mathcal{I}_1$, the marginal notification of each of the task executors is calculated given nobody is selected $i.e.$ $\mathcal{S}' = \phi$. After that, the ratio $h_{i|\mathcal{S}'}$ per cost is calculated. The calculated values are shown in Figure \ref{fig:ex4b}. Following line 4, a task executor $\mathcal{I}_3$ will be picked up as it is having the maximum marginal notification per cost value among the available IoT devices. The stopping condition $2 \leq 12 \cdot (\frac{4}{4})$ = 12 in line 5 is satisfied. So, $\mathcal{I}_3$ is selected}. \emph{ For the next iteration, the configurations shown in Figure \ref{fig:ex5a} will be considered. In the second iteration, the marginal notification of each of the task executors given $\mathcal{S}' = \{\mathcal{I}_3\}$ is calculated. After that, the ratio $h_{i|\mathcal{S}'}$ per cost is calculated. The calculated values are shown in Figure \ref{fig:ex5b}. Following line 4, a task executor $\mathcal{I}_6$ will be considered as it is having the maximum marginal notification per cost value among the available IoT devices. For task executor $\mathcal{I}_6$ the stopping condition $2 \leq 12 \cdot (\frac{1}{5}) = 2.4$ in line 5 is true and is selected.} 

\begin{figure}[H]
     \centering
     \begin{subfigure}[h]{0.35\textwidth}
         \centering
         \includegraphics[scale = 0.72]{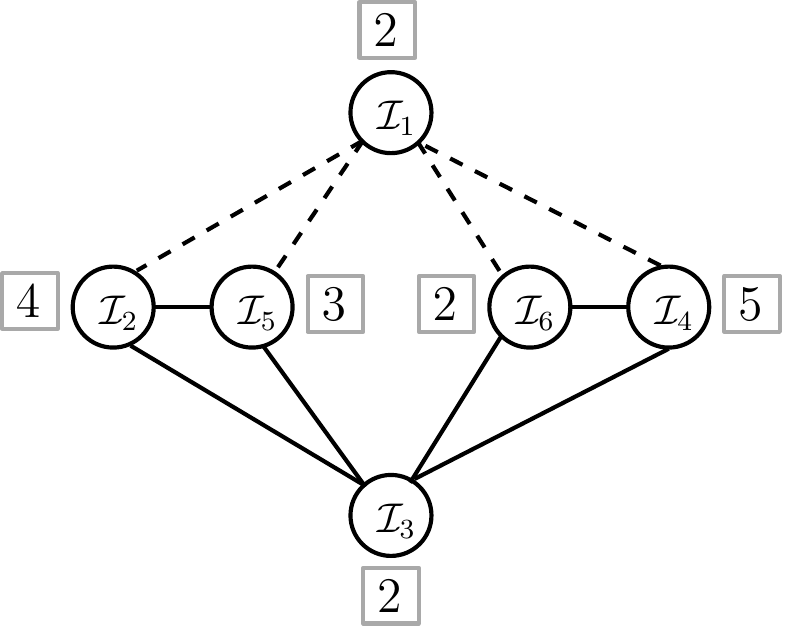}
         \caption{Graph Representing Social Connection of IoT Devices When $\mathcal{I}_1$ is Out of the Market}
         \label{fig:ex4a}
     \end{subfigure}
     \hfill
     \begin{subfigure}[h]{0.55\textwidth}
\begin{tabular}{c|c|c|c|c|c}
\hline
\textbf{Task Executors} & $\mathcal{I}_2$ & $\mathcal{I}_3$ & $\mathcal{I}_4$ & $\mathcal{I}_5$ & $\mathcal{I}_6$\\
\hline
$\boldsymbol{h_{i|S'}}$ & 2 & 4 & 2 & 2 & 2\\
\hline
$\boldsymbol{\hat{c}}$ & 4 & 2 & 5 & 3 & 2\\
\hline
$\boldsymbol{\frac{h_{i|S'}}{\hat{c}}}$ & 0.5 & 2 & 0.4 & 0.66 & 1\\
\hline
\end{tabular}
\caption{Calculation of Marginal notification given $\mathcal{S}' = \phi$, Cost vector, and Marginal notification per cost}
         \label{fig:ex4b}
     \end{subfigure}
     \caption{Illustration of $1^{st}$ Iteration of NPM}
     \label{fig:ex4}
\end{figure}
\emph{In the third iteration, the configuration shown in Figure \ref{fig:ex6b} will be utilized. The marginal notification of each of the task executors given $\mathcal{S}' = \{\mathcal{I}_3, \mathcal{I}_6\}$ are calculated. After that, the ratio $h_{i|\mathcal{S}'}$ per cost is calculated. The calculated values are shown in Figure \ref{fig:ex6b}. For this configuration, none of the task executors will be selected as the marginal notification of all the IoT devices given $\mathcal{S}' = \{\mathcal{I}_3, \mathcal{I}_6\}$ is 0. So, when $\mathcal{I}_1$ is out of the market then $\mathcal{S}' = \{\mathcal{I}_3, \mathcal{I}_6\}$ will be acting as the initial notifiers. As two IoT devices are selected, so $|S'|$ value is 2. In our running example, the first loser can be anyone out of $\mathcal{I}_2$, $\mathcal{I}_4$, and $\mathcal{I}_5$. Let us say $\mathcal{I}_2$. Now, following lines 10-16 of Algorithm \ref{algo:2}, at each index $k \in [1..|S'|+1]$ the maximal cost of $\mathcal{I}_1$ and its payment is determined. The minimum of the two quantities is taken and then the maximum of each of the points will be the payment made by the $\mathcal{I}_1$. Let us say $\mathcal{I}_1$ is considered in place of $\mathcal{I}_3 \in \mathcal{S}'$.}
\begin{figure}[H]
     \centering
     \begin{subfigure}[h]{0.40\textwidth}
         \centering
         \includegraphics[scale = 0.72]{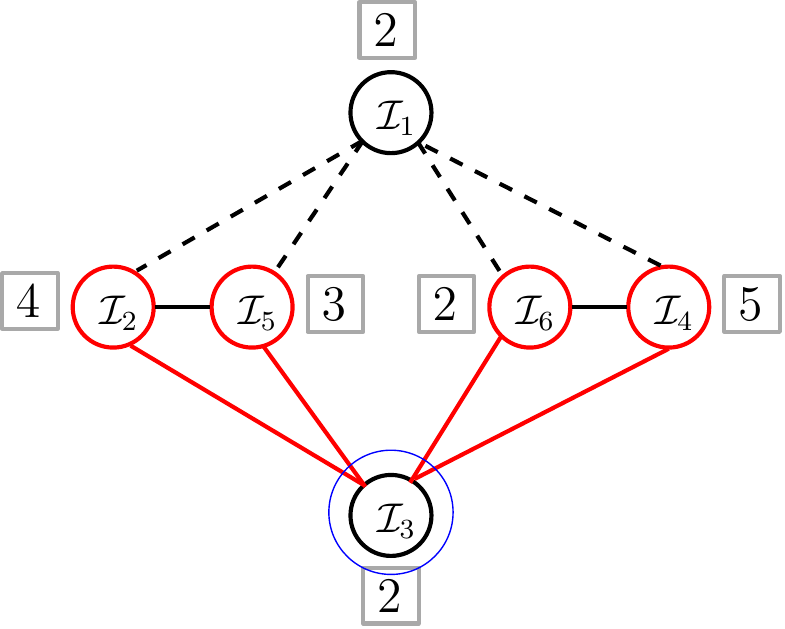}
         \caption{Graph Representing Social Connection of IoT Devices When $\mathcal{I}_1$ is Out of the Market and $\mathcal{I}_3$ got Selected}
         \label{fig:ex5a}
     \end{subfigure}
     \hfill
     \begin{subfigure}[h]{0.55\textwidth}
\begin{tabular}{c|c|c|c|c|c}
\hline
\textbf{Task Executors} & $\mathcal{I}_2$ & $\mathcal{I}_3$ & $\mathcal{I}_4$ & $\mathcal{I}_5$ & $\mathcal{I}_6$\\
\hline
$\boldsymbol{h_{i|S'}}$ & 1 & 0 & 1 & 1 & 1\\
\hline
$\boldsymbol{\hat{c}}$ & 4 & 2 & 5 & 3 & 2\\
\hline
$\boldsymbol{\frac{h_{i|S'}}{\hat{c}}}$ & 0.25 & 0 & 0.2 & 0.33 & 0.5\\
\hline
\end{tabular}
\caption{Calculation of Marginal notification given $\mathcal{S}' = \{\mathcal{I}_3\}$, Cost vector, and Marginal notification per cost}
         \label{fig:ex5b}
     \end{subfigure}
     \caption{Illustration of $2^{nd}$ Iteration of NPM}
     \label{fig:ex5}
\end{figure}
\emph{In such case, $\boldsymbol{\nabla}_{1, 1} = 4 \cdot (\frac{2}{4}) = 2$, and $\boldsymbol{\rho}_{1, 1} = 12 \cdot (\frac{4}{4}) = 12$. So, $\min\{2, 12\} = 2$. Next, if $\mathcal{I}_1$ is considered in place of $\mathcal{I}_6 \in \mathcal{S}'$ given $\mathcal{S}' = \{\mathcal{I}_3\}$. In such case, $\boldsymbol{\nabla}_{1, 2} = 0 \cdot (\frac{2}{4}) = 0$, and $\boldsymbol{\rho}_{1, 2} = 12 \cdot (\frac{0}{4}) = 0$. So, $\min\{0, 0\} = 0$. Finally, $\mathcal{I}_1$ is considered in place of $\mathcal{I}_2$ then $\boldsymbol{\nabla}_{1, 3} = 0 \cdot (\frac{4}{0}) = 0$, and $\boldsymbol{\rho}_{1, 3} = 12 \cdot (\frac{0}{4}) = 0$. So, $\min\{0, 0\} = 0$. Following line 17, we get $\max\{2, 0, 0\} = 2$. So, the payment of $\mathcal{I}_1 =2$. Similarly, the payment of $\mathcal{I}_6$ can be calculated and is given as 3. So, $\boldsymbol{\bar{\rho}} = \{2, 3\}$. The total payment is $2+3 = 5 \leq 12$.} 
\begin{figure}[H]
     \centering
     \begin{subfigure}[h]{0.35\textwidth}
         \centering
         \includegraphics[width=\textwidth]{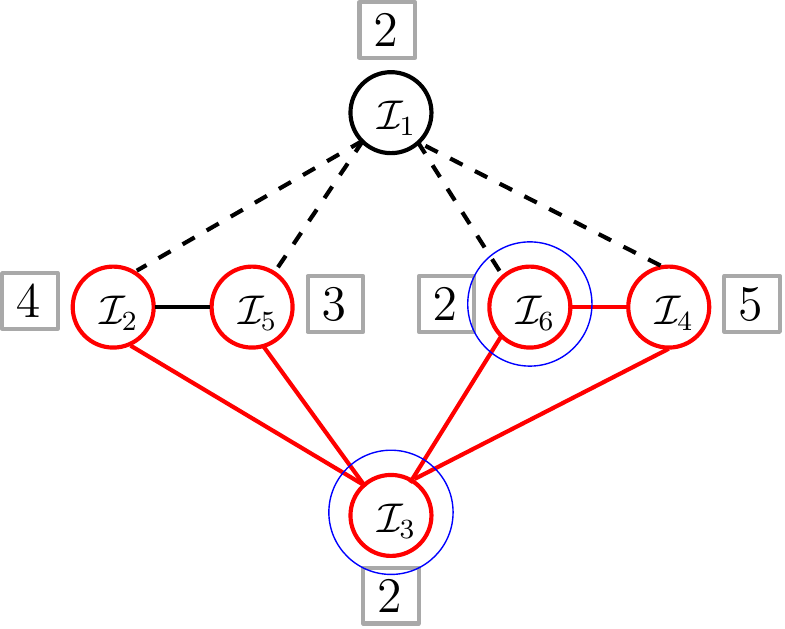}
         \caption{Graph Representing Social Connection of IoT Devices When $\mathcal{I}_1$ is Out of the Market and $\mathcal{I}_3$ and $\mathcal{I}_6$ got Selected}
         \label{fig:ex5a11}
     \end{subfigure}
     \hfill
     \begin{subfigure}[h]{0.55\textwidth}
\begin{tabular}{c|c|c|c|c|c}
\hline
\textbf{Task Executors} & $\mathcal{I}_2$ & $\mathcal{I}_3$ & $\mathcal{I}_4$ & $\mathcal{I}_5$ & $\mathcal{I}_6$\\
\hline
$\boldsymbol{h_{i|S'}}$ & 0 & 0 & 0 & 0 & 0\\
\hline
$\boldsymbol{\hat{c}}$ & 4 & 2 & 5 & 3 & 2\\
\hline
$\boldsymbol{\frac{h_{i|S'}}{\hat{c}}}$ & 0 & 0 & 0 & 0 & 0\\
\hline
\end{tabular}
\caption{Calculation of Marginal notification given $\mathcal{S}' = \{\mathcal{I}_3, \mathcal{I}_6\}$, Cost vector, and Marginal notification per cost}
         \label{fig:ex6b}
     \end{subfigure}
     \caption{Illustration of $3^{rd}$ Iteration of NPM}
     \label{fig:ex611}
\end{figure}
\end{itemize}
\end{example}

\subsection{\textsc{Effective Task Executors Identification Mechanism (ECTAI)}}
\label{sec:42}
Once the substantial number of IoT devices from the first tier is obtained, in the second tier, the first objective is to determine effective (or quality) IoT devices from among the available IoT devices. For this purpose, the idea of \emph{single-peaked preference} \cite{T.roughgarden_2016, 10.2307/30023824} is utilized. The general idea of the proposed mechanism $i.e.$ ECTAI is:

\begin{mdframed}[backgroundcolor=gray230]
\begin{center}\textbf{\underline{ECTAI}}\end{center}
\noindent In each iteration of the \emph{while} loop:
\begin{enumerate}
\item Firstly, some $f$ number of IoT devices are selected randomly from the set of IoT devices and are placed on the scale of $[0,1]$.
\item After that, an infinitesimally small part of tasks are given to those IoT devices that are placed on the scale of $[0,1]$.
\item The executed tasks of $f$ number of IoT devices are reviewed by $g$ number of other IoT devices and based on that the peak values are reported by $g$ number of IoT devices. 
\item Calculate the median of the peak values reported by $g$ number of IoT devices.
\item Determine the IoT device among the IoT devices that are placed on the scale of $[0,1]$ whose peak value lies closer to the median peak value. It will be considered a quality IoT device. 
\end{enumerate}
\end{mdframed}
Steps 1-5 are continued until the ranking tasks of each of the task executors are not ranked. The detailing of the above-discussed approach is presented in Algorithm \ref{algo:3}. The input to algorithm \ref{algo:3} is: (1) the set of IoT devices that are acting as the initial notifiers $i.e.$ $\mathcal{S}$, and (2) the set of IoT devices that got notified $i.e.$ $\bar{\mathcal{S}}$. In Algorithm \ref{algo:3}, line 1 initializes $\mathcal{I}''$ and $\mathcal {I}'$ by  $\mathcal {S} \cup \bar{\mathcal {S}}$, $\mathcal{R}$ to $\phi$, and  $\hat{\mathcal{R}}$ to 0. Lines 2-18 determine the quality of task executors. Lines 2-18 iterate until the condition in line 2 is satisfied. In line 3, the $f$ task executors that are to be ranked are randomly selected from the set $\mathcal {I}'$ and are held in $\eta$. From set $\mathcal{I}''\setminus \eta$, some task executors are randomly selected and are held in $\alpha$. The task executors in $\alpha$ rank the executed tasks of the task executors in $\eta$. For the ranking purpose, the idea of single peaked preference is utilized. The executed tasks by the task executors in $\eta$ are placed on a scale of 0 to 1 randomly.
\begin {algorithm}[!ht]
\caption{\textsc{Effective Task Executors Identification} ($\mathcal {S}$, $\bar{\mathcal{S}}$)}
\label{algo:3}
\noindent
%\textbf{Output:}  $\mathcal O \leftarrow \phi $
\textbf{Output:}  $\boldsymbol{\mathcal{O}}  \leftarrow \phi $
\begin{algorithmic}[1]
\STATE $\mathcal{I}'' = \mathcal {I}' = \mathcal {S} \cup \bar{\mathcal{S}} ,R \gets \phi, \hat{R} \gets 0 $ 
\WHILE{$\mathcal{I}' \neq \phi$ }
\STATE $\eta \gets$ \textsc{Random Selection} ($\mathcal {I}', f)$ \COMMENT{\textcolor{blue}{Selects \textit{f} task executors from $\mathcal{I}'$ that are to be ranked.}}
\STATE $\alpha \gets$  \textsc{Random Selection} ($\mathcal{I}''\setminus \eta,~ g$) 
\COMMENT{\textcolor{blue}{Selects $g$ task executors from $\mathcal{I}'' \setminus \eta$ that will provide ranking over the completed tasks of task executors in $\eta$.}}
\FOR{each $\mathcal{I}_i \in \alpha$}
\STATE $\alpha_i \gets$ \textsc{Random}~$(0,1)$
\COMMENT{\textcolor{blue}{Returns a random number between 0 and 1.}}
\STATE $\mathcal{R}$ $\gets$  $\mathcal{R}$ $\cup \{\alpha_i\}$ \COMMENT{\textcolor{blue}{Generated random number is held in $\mathcal{R}$.}}
\ENDFOR
\STATE sort ($\mathcal{R}$)
\COMMENT{\textcolor{blue}{Sort the peak values present in $\mathcal{R}$, in ascending order.}}
\IF{$(|\alpha|$ mod $2) \neq 0 $ }
\STATE $\hat{\mathcal{R}} \gets \mathcal{R} \Big[\frac{|\alpha|+1}{2}\Big]$ \COMMENT{\textcolor{blue}{When the number of peak values is odd then the median is calculated and is stored in $\hat{\mathcal{R}}$.}}
\ELSE
\STATE $\hat{\mathcal{R}}$ $\gets$ $\Bigg(\frac{\mathcal{R} \Big[\frac{|\alpha|}{2}\Big] + \mathcal{R} \Big[\frac{|\alpha|}{2}+1\Big]}{2}\Bigg)$ \COMMENT{\textcolor{blue}{When the number of peak values is even then the median is calculated and is stored in $\hat{\mathcal{R}}$.}}
\ENDIF
\STATE $i \gets \argmin\limits_{\mathcal{I}_i \in \eta}  |\alpha_i- \hat{\mathcal{R}}|$ \COMMENT{\textcolor{blue}{Determines the nearest IoT device to the median peak value.}}
\STATE $\boldsymbol{\mathcal{O}} \gets \boldsymbol{\mathcal{O}} \cup \{i\}$ \COMMENT{\textcolor{blue}{The nearest IoT device to the median peak value is held in $\boldsymbol{\mathcal{O}}$.}}
\STATE $\mathcal{I}' \gets \mathcal{I}'\setminus \eta $ \COMMENT{\textcolor{blue}{Removes the already ranked IoT devices from $\mathcal{I}'$.}}
\ENDWHILE
\STATE return $\boldsymbol{\mathcal{O}}$ \COMMENT{\textcolor{blue}{Returns the set of quality IoT devices.}}
\end{algorithmic}
\end{algorithm}
  Now, each task executor in $\alpha$ will provide a peak (or value) between 0 and 1. The peak of each of the task executors is stored in $\mathcal{R}$, as depicted in lines 5-8. In line 9 the peak value by each of the task executors in $\alpha$ is sorted in ascending order. In lines 10-14, the resultant peak value is determined. Here, the two cases can happen: (1) $|\alpha|$ could be even, in that case, the resultant peak is calculated by line 11 and is stored in $\hat{R}$, (2) $|\alpha|$ could be odd, in that case, the resultant peak value is calculated by line 13 and is stored in $\hat{R}$. In line 15, the IoT device in $\eta$ closer to the resultant peak is returned and is stored in $i$. The closer the IoT device in $\eta$ to the resultant peak value of the IoT devices in $\alpha$ better will be the quality of the IoT device $\mathcal {I}_i \ in \eta$. On the other hand, the farther the resultant peak value from the peak value of the task executor $\mathcal{I_j}$, the poorer will be the quality of the task executor  $\mathcal {I}_j$. In line 16, the task executor $\mathcal{I}_i$, with high quality is placed in $\boldsymbol{\mathcal O}$ in each iteration. In line 17, the task executors that are already ranked are removed from the set $\mathcal {I}'$. The \emph{while} loop in lines 2-18 will iterate until all the IoT devices got ranked. Line 19 returns the set of quality IoT devices. 
\begin{example}
Let us consider an example to understand Algorithm \ref{algo:3}. Let us suppose that there are 12 IoT devices $ \mathcal{I}$ = $\{\mathcal{I}_1, \mathcal{I}_2, \ldots, \mathcal{I}_{12}\}$ that got notified about the task execution process by their social connection. Now, out of 12 IoT devices, the objective is to select the subset of quality IoT devices. For that purpose, let us apply Algorithm \ref{algo:3}. For simplicity purposes, we have considered $f=3$, and $g=5$.\\
\indent In the first iteration of \emph{while} loop of Algorithm \ref{algo:3}, $\eta = \{\mathcal{I}_2, \mathcal{I}_4, \mathcal{I}_9\}$ are considered that are to be ranked and $\alpha = \{\mathcal{I}_1, \mathcal{I}_{3},\mathcal{I}_7,\mathcal{I}_8, \mathcal{I}_{11}\}$ are the IoT devices that will provide ranking on the set $\eta$ as shown in Figure \ref{fig:2a}. The peak values reported by the IoT devices in set $\alpha$ are depicted in the table shown in Figure \ref{fig:2a}. Following lines 10-14 of Algorithm \ref{algo:3}, we get $|\alpha|=5$. So, line 10 of Algorithm \ref{algo:3} is true. Using line 11, $\hat{\mathcal{R}} \gets \mathcal{R} \Big[\frac{5+1}{2}\Big]$ = $\hat{\mathcal{R}} \gets \mathcal{R} \Big[\frac{6}{2}\Big]$ = $\hat{\mathcal{R}} \gets \mathcal{R} \Big[{3}\Big]$ = $\hat{\mathcal{R}} \gets 0.50 = \alpha_3$. Using line 15, we get $i\gets \mathcal{I}_4$. So, $\boldsymbol{\mathcal O} = \{{\mathcal{I}_4}\}$. In the next iteration of \emph{while} loop of Algorithm \ref{algo:3}, $\eta = \{\mathcal{I}_1, \mathcal{I}_3, \mathcal{I}_7\}$ are considered that are to be ranked and $\alpha = \{\mathcal{I}_2, \mathcal{I}_{4},\mathcal{I}_6,\mathcal{I}_8, \mathcal{I}_{11}\}$ are the IoT devices that will provide ranking on the set $\eta$ as shown in Figure \ref{fig:2b}. The peak values reported by the IoT devices in set $\alpha$ are depicted in the table shown in Figure \ref{fig:2b}. Following lines 10-14 of Algorithm \ref{algo:3}, we get $|\alpha|=5$. So, line 10 of Algorithm \ref{algo:3} is true. Using line 11, $\hat{\mathcal{R}} \gets \mathcal{R} \Big[\frac{5+1}{2}\Big]$ = $\hat{\mathcal{R}} \gets \mathcal{R} \Big[\frac{6}{2}\Big]$ = $\hat{\mathcal{R}} \gets \mathcal{R} \Big[{3}\Big]$ = $\hat{\mathcal{R}} \gets 0.50 = \alpha_2$. Using line 15, we get $i\gets \mathcal{I}_3$. So, $\boldsymbol{\mathcal O} = \{{\mathcal{I}_4, \mathcal{I}_3}\}$. In the third iteration of \emph{while} loop of Algorithm \ref{algo:3}, $\eta = \{\mathcal{I}_6, \mathcal{I}_{10}, \mathcal{I}_{12}\}$ are considered that are to be ranked and $\alpha = \{\mathcal{I}_2, \mathcal{I}_{7},\mathcal{I}_9,\mathcal{I}_{10}, \mathcal{I}_{11}\}$ are the IoT devices that will provide ranking on the set $\eta$ as shown in Figure \ref{fig:2c}. The peak values reported by the IoT devices in set $\alpha$ are depicted in the table shown in Figure \ref{fig:2c}. Following lines 10-14 of Algorithm \ref{algo:3}, we get $|\alpha|=5$. So, line 10 of Algorithm \ref{algo:3} is true. Using line 11, $\hat{\mathcal{R}} \gets \mathcal{R} \Big[\frac{5+1}{2}\Big]$ = $\hat{\mathcal{R}} \gets \mathcal{R} \Big[\frac{6}{2}\Big]$ = $\hat{\mathcal{R}} \gets \mathcal{R} \Big[{3}\Big]$ = $\hat{\mathcal{R}} \gets 0.45 = \alpha_2$. Using line 15, we get $i\gets \mathcal{I}_{10}$. So, $\boldsymbol{\mathcal O} = \{{\mathcal{I}_3, \mathcal{I}_4, \mathcal{I}_{10}}\}$. In the final iteration of \emph{while} loop of Algorithm \ref{algo:3}, $\eta = \{\mathcal{I}_8, \mathcal{I}_{11}, \mathcal{I}_{9}\}$ are considered that are to be ranked and $\alpha = \{\mathcal{I}_1, \mathcal{I}_{5},\mathcal{I}_7,\mathcal{I}_{9}, \mathcal{I}_{10}\}$ are the IoT devices that will provide ranking on the set $\eta$ as shown in Figure \ref{fig:2d}. The peak values reported by the IoT devices in set $\alpha$ are depicted in the table shown in Figure \ref{fig:2d}. Following lines 10-14 of Algorithm \ref{algo:3}, we get $|\alpha|=5$. So, line 10 of Algorithm \ref{algo:3} is true. Using line 11, $\hat{\mathcal{R}} \gets \mathcal{R} \Big[\frac{5+1}{2}\Big]$ = $\hat{\mathcal{R}} \gets \mathcal{R} \Big[\frac{6}{2}\Big]$ = $\hat{\mathcal{R}} \gets \mathcal{R} \Big[{3}\Big]$ = $\hat{\mathcal{R}} \gets 0.35 = \alpha_9$. Using line 15, we get $i\gets \mathcal{I}_{8}$. So, $\boldsymbol{\mathcal O} = \{{\mathcal{I}_3, \mathcal{I}_4, \mathcal{I}_{8}, \mathcal{I}_{10}}\}$. As all the IoT devices in our running example are ranked, the Algorithm \ref{algo:3} will terminate by returning the set of quality IoT devices $\boldsymbol{\mathcal O} = \{{\mathcal{I}_3, \mathcal{I}_4, \mathcal{I}_{8}, \mathcal{I}_{10}}\}$. 

\begin{figure}[!htbp]
     \centering
     \begin{subfigure}[h]{0.48\textwidth}
         \centering
         \includegraphics[scale = 0.90]{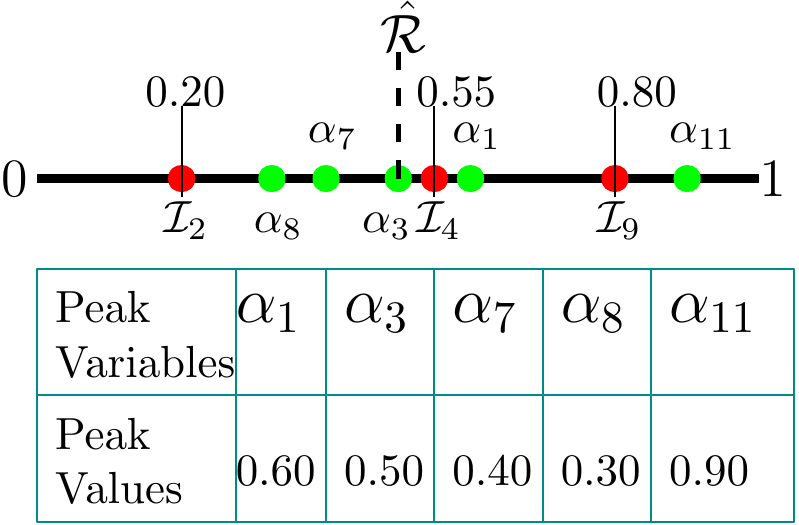}
         \caption{IoT devices represented on $[0,1]$ scale}
         \label{fig:2a}
     \end{subfigure}
     \begin{subfigure}[h]{0.49\textwidth}
         \centering
         \includegraphics[scale = 0.90]{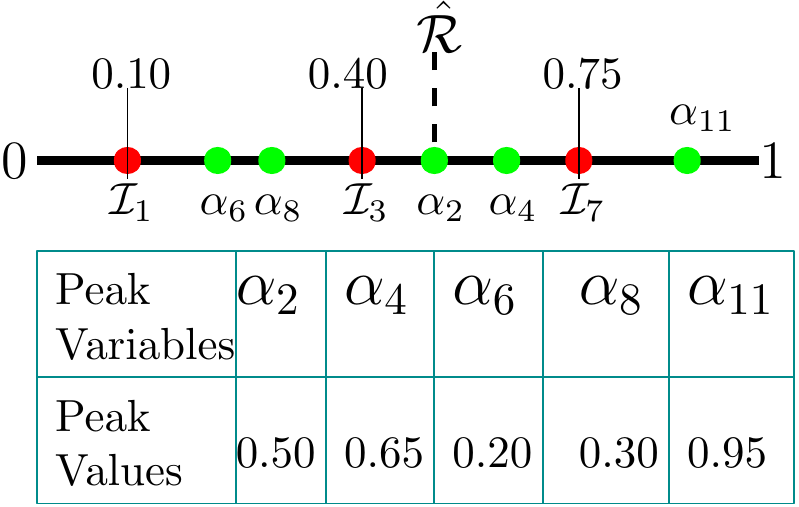}
         \caption{Peak values of $\mathcal{I}_j$ and $\mathcal{I}_k$}
         \label{fig:2b}
     \end{subfigure}
      \begin{subfigure}[h]{0.48\textwidth}
         \centering
         \includegraphics[scale = 0.90]{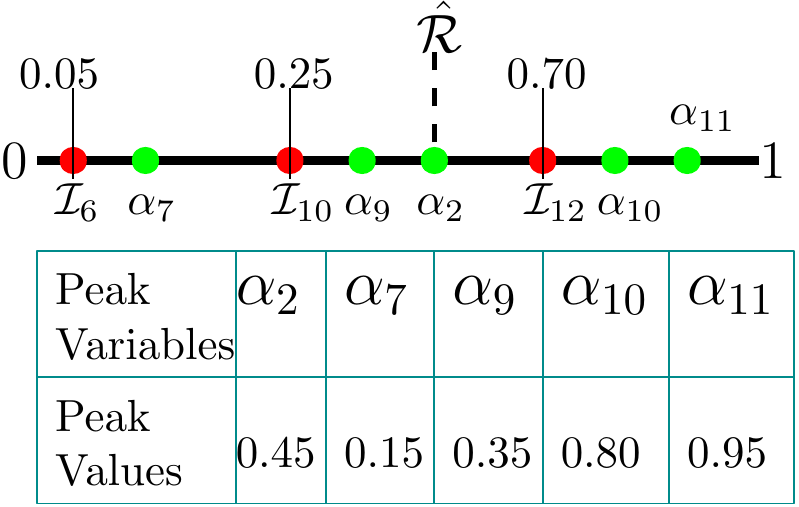}
         \caption{Task executors represented on $[0,1]$ scale}
         \label{fig:2c}
     \end{subfigure}
      \begin{subfigure}[h]{0.48\textwidth}
         \centering
         \includegraphics[scale = 0.90]{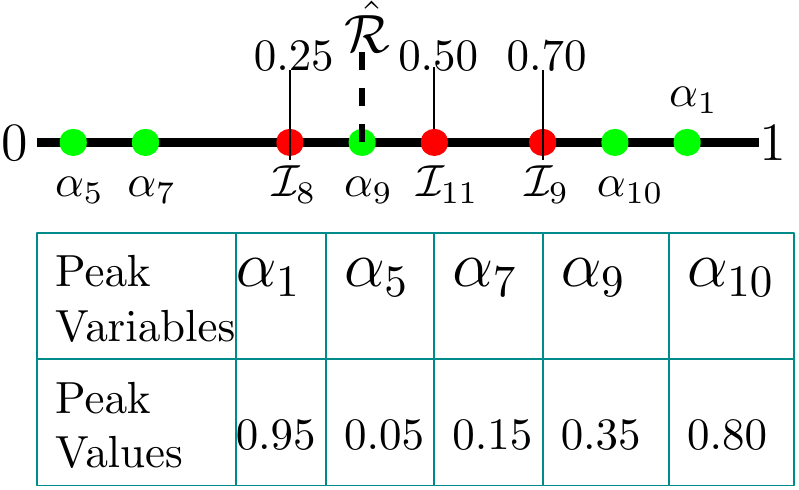}
         \caption{Task executors represented on $[0,1]$ scale}
         \label{fig:2d}
     \end{subfigure}
     \caption{Detailed Illustration of ECTAI}
     \label{fig:2}
\end{figure}
\end{example}

\subsection{Winners and Price Determination (WiPD)}
In subsection \ref{sec:42} using Algorithm \ref{algo:3}, the quality of IoT devices is determined. In this section, it is discussed that: (1) \emph{how the tasks will be allocated to the quality IoT devices?} and (2) \emph{what will be their payment?} The input to the Algorithm \ref{algo:14} is the set of quality IoT devices and the set of heterogeneous tasks. In line 1, the allocation and price vectors are set to $\phi$. In lines 2-4, the prices for all the tasks are set to 0. Using lines 5-7, the initial allocation and initial price for all the IoT devices are set to empty and 0 respectively.
\begin {algorithm}[!ht]
\caption{\textsc{Winners and Price Determination} ($\boldsymbol{\mathcal{O}}$, $\boldsymbol{t}$)} 
\label{algo:14}
\noindent
\begin{algorithmic}[1]
\STATE $\mathcal{A} \gets \phi$, $\boldsymbol {\rho} \gets \phi$ \COMMENT{\textcolor{blue}{Initially, the allocation  and price vectors are set to $\phi$.}}
\FOR{each $\boldsymbol{t_i} \in \boldsymbol{t}$}
\STATE $\boldsymbol {\rho}(i) \gets 0$ \COMMENT{\textcolor{blue}{Initially, the price of all the tasks is set to 0.}}
\ENDFOR
\FOR{each $\mathcal{I}_i \in \boldsymbol{\mathcal{O}}$ } 
\STATE $\mathcal{A}_i \gets \phi$, $\boldsymbol{\rho}_i \gets 0$ \COMMENT{\textcolor{blue}{Initially, the allocation and payment vectors of any $i^{th}$ IoT device is set to $\phi$ and $0$ respectively.}}
\ENDFOR
\WHILE{(True)}
\FOR{each $\mathcal{I}_i \in \boldsymbol{\mathcal{O}}$}
\STATE Ask for the preferred set of tasks not assigned to it, given the tasks they already have and the current prices $-$ an arbitrary set $\mathcal{F}_i$ in\\
\hspace*{15mm} $\argmin\limits_{\mathcal{F}_i \subseteq \boldsymbol{t} \setminus S_i }$ ${\Bigg\{\Big(\displaystyle\sum_{j\in S_i}}$  $\boldsymbol \rho(j)$ + ${\displaystyle\sum_{j\in \mathcal{F}_i } (\boldsymbol \rho(j)+\epsilon)\Big) }$ - $v_i(S_i \cup \mathcal{F}_i)\Bigg\}$ 
  \IF{$ \mathcal{F}_i=\phi $ }
  \STATE $\mathcal{A}_i \leftarrow \mathcal{S}_i$ \COMMENT{\textcolor{blue}{$\mathcal{A}_i$ holds the set of tasks assigned to $i^{th}$ IoT device.}}
\STATE $\mathcal{A} \leftarrow \mathcal{A} \cup \{\mathcal{A}_i\}$ \COMMENT{\textcolor{blue}{$\mathcal{A}$ holds the selected tasks of the respective IoT device.}}
\STATE $\boldsymbol{\rho}_i \leftarrow \displaystyle\sum_{i \in \mathcal{S}_i} \boldsymbol{\rho}(i)$ \COMMENT{\textcolor{blue}{The prices of all the tasks requested by IoT devices in $\mathcal{S}_i$ are added and is stored in $\boldsymbol{\rho}_i$.}}
\STATE $\boldsymbol {\rho} \leftarrow \boldsymbol{\rho} \cup \{\boldsymbol{\rho}_i\}$ \COMMENT{\textcolor{blue}{$\boldsymbol{\rho}$ holds the payment of the winning IoT devices.}}
\ELSE
%\STATE $\mathcal{S}_i \gets Random~(\mathcal{I}$) \COMMENT{\textcolor{blue}{ Picks ~ random ~ IoT ~ devices ~ for~ which ~ $\mathcal{I}_i \neq \phi$}}
\STATE  $\mathcal{S}_i \gets  \mathcal{S}_i \cup \mathcal{F}_i$ \COMMENT{\textcolor{blue}{The new set of tasks $\mathcal{F}_i$ is added to $\mathcal{S}_i$ and is stored in $\mathcal{S}_i$.}}
\STATE  ${S_l \gets S_l \setminus \mathcal{F}_i}$, $\forall l \neq i$ \COMMENT{\textcolor{blue}{$\mathcal{F}_i$ set of tasks is removed from the requested set of other task executors.}}\STATE $\boldsymbol{\rho}(j) \leftarrow \boldsymbol{\rho}(j) + \epsilon$, $\forall j \in \mathcal{F}_i$ \COMMENT{\textcolor{blue}{The prices of all the tasks held by IoT devices in $\mathcal{F}_i$ is increased by $\epsilon$.}}
\STATE $\boldsymbol{\rho}_i \leftarrow \displaystyle\sum_{i \in \mathcal{S}_i}\boldsymbol{\rho}(i)$ \COMMENT{\textcolor{blue}{The prices of all the tasks requested by IoT devices in $\mathcal{S}_i$ are added and is stored in $\boldsymbol{\rho}_i$.}}
\STATE $\mathcal{A}_i \leftarrow \mathcal{A}_i \cup \mathcal{S}_i$ \COMMENT{\textcolor{blue}{$\mathcal{A}_i$ holds the set of tasks assigned to IoT device $\mathcal{I}_i$.}}
\STATE $\mathcal{A} \leftarrow \mathcal{A} \cup \mathcal{A}_i$ and $\boldsymbol{\rho} \leftarrow \boldsymbol{\rho} \cup \boldsymbol{\rho}_i$ \COMMENT{\textcolor{blue}{$\mathcal{A}$ and $\boldsymbol{\rho}$ holds the set of assigned tasks of IoT devices and their payments respectively.}}
\ENDIF
\ENDFOR
\ENDWHILE
\STATE return $\mathcal{A}$, $\boldsymbol{\rho}$ \COMMENT{\textcolor{blue}{Returns $\mathcal{A}$ and $\boldsymbol{\rho}$ that holds the set of assigned tasks of IoT devices and their payments respectively.}}
\end{algorithmic}
\end{algorithm}
In \emph{while} loop in lines 8-25, it is asked from the IoT devices that \emph{given the tasks you already have at the given prices of the tasks, what set of tasks, in addition, would you want to bid on?} Now, if with the increase in price, it is seen that no IoT device is interested to modify its requested set of tasks then the \emph{while} loop in lines 8-25 terminates and the current allocations and the current payment vectors are returned in line 26. In line 10 on the other hand, it may happen that with the increase in price, some of the task executors may be ready to show interest in an additional set of tasks. If that is the case, then lines 16-23 of Algorithm \ref{algo:14} will be activated. In line 17, the overall demand of $i^{th}$ task executor is stored in $\mathcal{S}_i$. In line 18, the set of tasks $\mathcal{F}_i$ that got added in $\mathcal{S}_i$ is removed from the demand set of other IoT devices except for the demand set of $\mathcal{I}_i$. The price of the tasks in $\mathcal{F}_i$ is increased by $\epsilon$ again and is stored in $\boldsymbol{\rho}(j)$ in line 19. In line 20, the sum of the prices of tasks in $\mathcal{S}_i$ is held in $\boldsymbol{\rho}_i$. In line 21, $\mathcal{A}_i$ holds the set of tasks assigned to IoT device $\mathcal{I}_i$. In line 22, the tasks allocated to all the task executors are held in $\mathcal{A}$ and the payment vector of all the winning task executors is determined. Line 26 returns the final allocation and the payment vector.        

 \subsection{Illustrative Example}
In this subsection, WiPD is elaborated in a detailed manner with the help of an example. Let us say there are 8 heterogeneous tasks $\boldsymbol{t} = \{\boldsymbol{t}_1, \boldsymbol{t}_2, \boldsymbol{t}_3, \boldsymbol{t}_4, \boldsymbol{t}_5, \boldsymbol{t}_6, \boldsymbol{t}_7, \boldsymbol{t}_8\}$ and 3 quality IoT devices. Following lines 2-4 of Algorithm \ref{algo:14} the prices of the tasks in $\boldsymbol{t}$ are set to 0. In our running example, the $\epsilon$ value is taken as 1. Using lines 8-25 of Algorithm \ref{algo:14}, firstly IoT device $\mathcal{I}_1$ is asked that at price $\boldsymbol{\rho}(i) = 0$, for all $i = 1~to~8$, \emph{what are the tasks that you want to execute}? For $\mathcal{I}_1$ we have $v_1(\mathcal{S}_1) = 6$, where $S_1 = \{\boldsymbol{t}_1, \boldsymbol{t}_2, \boldsymbol{t}_3\}$. Next, the prices of tasks $\boldsymbol{t}_1$, $\boldsymbol{t}_2$, and $\boldsymbol{t}_3$ are increased by $\epsilon = 1$. So, at price $\boldsymbol{\rho}(i) = 1$, for $i = 1, 2,$ and $3$, and $\boldsymbol{\rho}(i) = 0$, for $i = 4, 5, 6, 7$ and $8$, the preferred set of tasks by $\mathcal{I}_2$ is asked. For $\mathcal{I}_2$ we have $v_2(\mathcal{S}_2) = 4$, where $S_2 = \{\boldsymbol{t}_4, \boldsymbol{t}_6, \boldsymbol{t}_8\}$. So, in the second iteration the tasks $\boldsymbol{t}_4$, $\boldsymbol{t}_6$, and $\boldsymbol{t}_8$ are given to $\mathcal{I}_2$. Next, the prices of the tasks $\boldsymbol{t}_4$, $\boldsymbol{t}_6$, and $\boldsymbol{t}_8$ are increased by $\epsilon = 1$ and prices became 1 for these tasks. So, at price $\boldsymbol{\rho}(i) = 1$, for $i = 1, 2, 3, 4, 6,$ and $8$, and $\boldsymbol{\rho}(i) = 0$, for $i = 5, 7$, the preferred set of tasks by $\mathcal{I}_3$ is asked. For $\mathcal{I}_3$ we have $v_3(\mathcal{S}_3) = 3$, where $S_3 = \{\boldsymbol{t}_5\}$. So, in the third iteration, the task $\boldsymbol{t}_5$ is given to $\mathcal{I}_3$. Next, the price of task  $\boldsymbol{t}_5$ is increased by $\epsilon = 1$ and the price became 1 for task $\boldsymbol{t}_5$. After the first iteration of the while loop, the allocation vector $\mathcal{A} = \{(\{\boldsymbol{t}_1, \boldsymbol{t}_2, \boldsymbol{t}_3\}, \mathcal{I}_1), (\{\boldsymbol{t}_4, \boldsymbol{t}_6, \boldsymbol{t}_8\}, \mathcal{I}_2), (\{\boldsymbol{t}_5\}, \mathcal{I}_3)\}$ and payment vector $\boldsymbol{\rho} = \{3, 3, 1\}$.
\begin{figure}[H]
\centering
\includegraphics[scale=0.85]{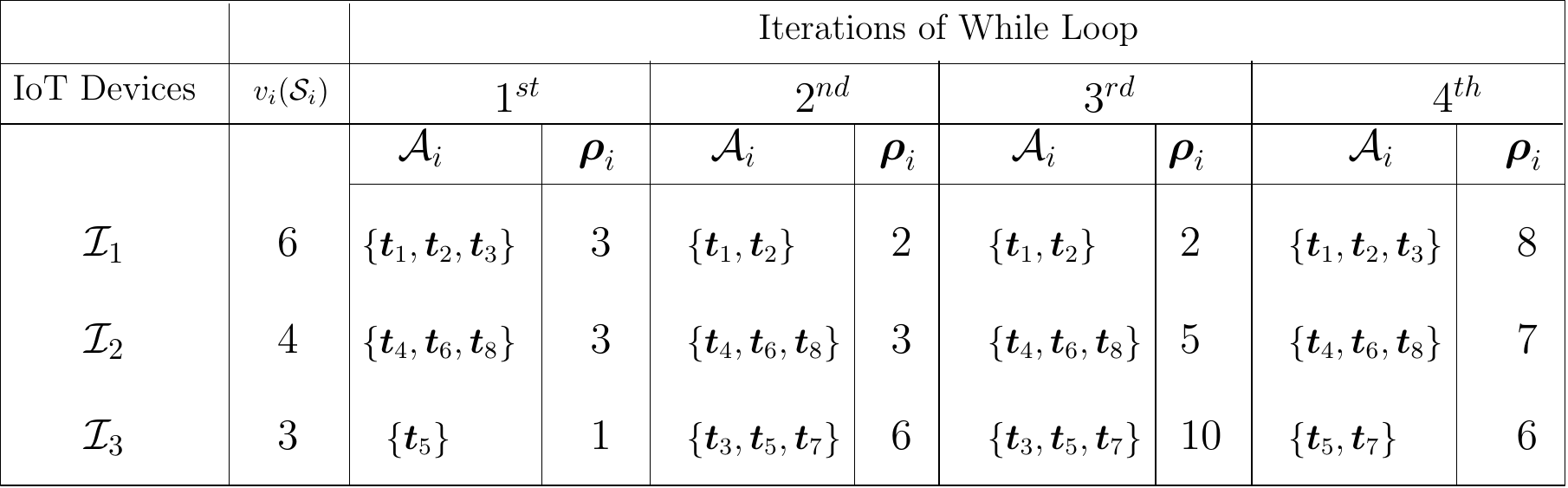}
\caption{Detailed Illustration of WiPD}
\label{fig:1}
\end{figure}
\noindent Now, at the given price vector $\boldsymbol{\rho}$, \emph{what are the additional task(s) you want}? The answer by IoT device $\mathcal{I}_1$ is: $\mathcal{F}_1 = \{\boldsymbol{t}_7\}$. So, $\mathcal{S}_1 = \{\boldsymbol{t}_1, \boldsymbol{t}_2, \boldsymbol{t}_3, \boldsymbol{t}_7\}$. Following line 19 of Algorithm \ref{algo:14}, the price of task $\boldsymbol{t}_7$ is increased by 1 and it became $\boldsymbol{\rho}(1) = 1$. Further, in the next iteration the IoT device $\mathcal{I}_2$ is asked about the additional task(s). $\mathcal{I}_2$ requested for additional task $\boldsymbol{t}_3$. So, $v_2(\mathcal{S}_2) = 4$, where $\mathcal{S}_2 = \{\boldsymbol{t}_3, \boldsymbol{t}_4, \boldsymbol{t}_6, \boldsymbol{t}_8\}$. As $\mathcal{F}_2 = \{\boldsymbol{t}_3\}$, so the price of task $\boldsymbol{t}_3$ will be increased by 1 and it became 2. In the next iteration of \emph{for} loop, IoT device $\mathcal{I}_3$ is asked about the additional task(s) and reported $\boldsymbol{t}_3$ and $\boldsymbol{t}_7$. So, $\mathcal{F}_3 = \{\boldsymbol{t}_3, \boldsymbol{t}_7\}$. Hence, $\mathcal{S}_3$ = $\{\boldsymbol{t}_3, \boldsymbol{t}_5, \boldsymbol{t}_7\}$. As $\mathcal{F}_3 = \{\boldsymbol{t}_3, \boldsymbol{t}_7\}$, so the prices of tasks $\boldsymbol{t}_3$ and $\boldsymbol{t}_7$ will be increased by 1 and it became 3 and 2 respectively. After the second iteration of the while loop, the allocation vector is  $\mathcal{A} = \{(\{\boldsymbol{t}_1, \boldsymbol{t}_2\}, \mathcal{I}_1), (\{\boldsymbol{t}_4, \boldsymbol{t}_6, \boldsymbol{t}_8\}, \mathcal{I}_2), (\{\boldsymbol{t}_3, \boldsymbol{t}_5, \boldsymbol{t}_7\}, \mathcal{I}_3)\}$ and payment vector $\boldsymbol{\rho} = \{2, 3, 6\}$.\\
\indent Now, at the given price vector $\boldsymbol{\rho}$, \emph{what are the additional tasks you want}? The answer by $\mathcal{I}_1$ IoT device is: $\mathcal{F}_1 = \{\boldsymbol{t}_3, \boldsymbol{t}_4, \boldsymbol{t}_6\}$. So, $\mathcal{S}_1 = \{\boldsymbol{t}_1, \boldsymbol{t}_2, \boldsymbol{t}_3, \boldsymbol{t}_4, \boldsymbol{t}_6\}$. Following line 19 of Algorithm \ref{algo:14}, the prices of tasks $\boldsymbol{t}_3$, $\boldsymbol{t}_4$, and $\boldsymbol{t}_6$ is increased by 1 and it became 4, 2, and 2 respectively. Further, in the next iteration the IoT device $\mathcal{I}_2$ is asked about the additional task(s). $\mathcal{I}_2$ requested for additional task $\boldsymbol{t}_7$. So, $v_2(\mathcal{S}_2) = 4$, where $\mathcal{S}_2 = \{\boldsymbol{t}_4, \boldsymbol{t}_6, \boldsymbol{t}_7, \boldsymbol{t}_8\}$. As $\mathcal{F}_2 = \{\boldsymbol{t}_7\}$, so the price of task $\boldsymbol{t}_7$ will be increased by 1 and it became 3. In the next iteration of \emph{for} loop, IoT device $\mathcal{I}_3$ is asked about the additional task(s) and reported $\boldsymbol{t}_3$ and $\boldsymbol{t}_7$. So, $\mathcal{F}_3 = \{\boldsymbol{t}_3, \boldsymbol{t}_7\}$. Hence, $\mathcal{S}_3$ = $\{\boldsymbol{t}_3, \boldsymbol{t}_5, \boldsymbol{t}_7\}$. As $\mathcal{F}_3 = \{\boldsymbol{t}_3, \boldsymbol{t}_7\}$, so the prices of tasks $\boldsymbol{t}_3$ and $\boldsymbol{t}_7$ will be increased by 1 and it became 5 and 4 respectively. After the third iteration of the while loop, the allocation vector is  $\mathcal{A} = \{(\{\boldsymbol{t}_1, \boldsymbol{t}_2\}, \mathcal{I}_1), (\{\boldsymbol{t}_4, \boldsymbol{t}_6, \boldsymbol{t}_8\}, \mathcal{I}_2), (\{\boldsymbol{t}_3, \boldsymbol{t}_5, \boldsymbol{t}_7\}, \mathcal{I}_3)\}$ and payment vector $\boldsymbol{\rho} = \{2, 5, 10\}$. \\
\indent Similarly, in the next iteration of while loop $\mathcal{F}_1 = \{\boldsymbol{t}_3, \boldsymbol{t}_7, \boldsymbol{t}_8\}$, $\mathcal{F}_2 = \{\boldsymbol{t}_8 \}$, and $\mathcal{F}_3 = \{\boldsymbol{t}_7\}$. After this iteration, the allocation and price vectors are $\mathcal{A} = \{(\{\boldsymbol{t}_1, \boldsymbol{t}_2, \boldsymbol{t}_3\}, \mathcal{I}_1), (\{\boldsymbol{t}_4, \boldsymbol{t}_6, \boldsymbol{t}_8\}, \mathcal{I}_2), (\{\boldsymbol{t}_5, \boldsymbol{t}_7\}, \mathcal{I}_3)\}$ and payment vector $\boldsymbol{\rho} = \{8, 7, 6\}$. In next iteration $\mathcal{F}_1$, $\mathcal{F}_2$, and $\mathcal{F}_3$ are $\phi$. So, the final allocation vector is $\mathcal{A} = \{(\{\boldsymbol{t}_1, \boldsymbol{t}_2, \boldsymbol{t}_3\}, \mathcal{I}_1), (\{\boldsymbol{t}_4, \boldsymbol{t}_6, \boldsymbol{t}_8\}, \mathcal{I}_2), (\{\boldsymbol{t}_5, \boldsymbol{t}_7\}, \mathcal{I}_3)\}$ and final payment vector is $\boldsymbol{\rho} = \{8, 7, 6\}$. The utility of $\mathcal{I}_1$, $\mathcal{I}_2$, and $\mathcal{I}_3$ is $u_1(\mathcal{S}_1, \boldsymbol{\rho}) = 8 - 6 = 2$, $u_2(\mathcal{S}_2, \boldsymbol{\rho}) = 7 - 4 = 3$, and $u_3(\mathcal{S}_3, \boldsymbol{\rho}) = 6 - 3 = 3$ respectively.    

\section{\textsc{Mechanism Analysis}}
\label{section:PM}
In this section, the analysis of the proposed mechanisms for the two tiers is discussed independently one by one. Firstly, in subsection \ref{subsec:AFT} the analysis of TENM is carried out. Next, the analysis of the proposed mechanisms for the second tier $i.e.$ ECTAI and WiPD are discussed in subsection \ref{subsec:AST}. 

\subsection{Analysis of First Tier}
\label{subsec:AFT}
In this subsection, the analysis of the first tier is depicted. In Lemma \ref{lemma:p1} it is proved that TENM runs in \emph{polynomial} time. The correctness of TENM has been discussed in Lemma \ref{lemma:p2}. It shows that on termination, TENM gives the desired output. Using Proposition \ref{propp:1}, Corollary \ref{lemma:p3} proves that IoT devices cannot improve their utility by misreporting their \emph{private} information. In other words, it shows that TENM is \emph{truthful} or \emph{incentive compatible}. By taking the help of Proposition \ref{prop:p1} in Corollary \ref{corr:p2} it is shown that TENM is \emph{budget feasible}.\\
\indent In Theorem \ref{the1} we are proving that the number of IoT devices that got notified about the task execution process by any $i^{th}$ IoT device in expectation is given as $E[X_i] = |\boldsymbol{\mathcal{Z}}_i| \cdot p$. Here, $X_i$ is an indicator random variable that captures the number of IoT devices that got notified about the task execution process by any $i^{th}$ IoT device, $|\boldsymbol{\mathcal{Z}}_i|$ is the number of IoT devices that are socially connected to $i^{th}$ IoT device and $p$ is the probability with which the $i^{th}$ IoT device will notify to any of its peers. This theorem will give us an estimate that the number of IoT devices notified by $i^{th}$ IoT device. Further, it is estimated in Theorem \ref{theorem:222} that the probability that any $i^{th}$ IoT device notifies about the task execution process to at least one IoT device out of $|\boldsymbol{\mathcal{Z}}_i|$ is given as $1- \bigg(\frac{1}{exp ({|\boldsymbol{\mathcal{Z}}_i| \cdot  \lceil ln |\boldsymbol{\mathcal{Z}}_i|\rceil})} \bigg)$. This theorem helps us to show with what probability any $i^{th}$ IoT device will be notifying at least one IoT device among the available ones. In a given social graph, the probability that at least $\sqrt{|\boldsymbol{\mathcal{Z}}_i|} \cdot \ln |\boldsymbol{\mathcal{Z}}_i|$ IoT devices got notified by any $i^{th}$ IoT device is given as $\frac{exp~\bigg( \sqrt{|\boldsymbol{\mathcal{Z}}_i|}\cdot \ln |\boldsymbol{\mathcal{Z}}_i| \bigg)}{\bigg( \sqrt{|\boldsymbol{\mathcal{Z}}_i|}\cdot \ln |\boldsymbol{\mathcal{Z}}_i|\bigg)^{ \sqrt{|\boldsymbol{\mathcal{Z}}_i|}\cdot \ln |\boldsymbol{\mathcal{Z}}_i|}}$. This theorem helps us to show that with what probability any $i^{th}$ IoT device will be notified to at least $\sqrt{|\boldsymbol{\mathcal{Z}}_i|} \cdot \ln |\boldsymbol{\mathcal{Z}}_i|$ IoT devices.              
\begin{lemma}
\label{lemma:p1}
TENM is computationally efficient.
\end{lemma}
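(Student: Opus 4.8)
The plan is to establish computational efficiency by bounding the running time of the two constituent procedures of TENM separately, namely the \emph{Notifiers Allocation Mechanism} (Algorithm \ref{algo:1}) and the \emph{Notifier Pricing Mechanism} (Algorithm \ref{algo:2}), and then observing that their composition remains polynomial. The crucial preliminary observation is that every evaluation of the notify function can be carried out in polynomial time: since $h(\boldsymbol{U}) = \big|\bigcup_{i\in \boldsymbol{U}} \boldsymbol{\mathcal{Z}}_i\big|$, a single evaluation amounts to forming a union of at most $n$ neighbor sets, each of size at most $n$, and hence costs $\mathcal{O}(n^2)$ in the worst case. Consequently, each marginal-notification value $h_{i\mid \mathcal{S}_{i-1}}$ is computable with two such evaluations.

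For Algorithm \ref{algo:1}, I would first note that the initial loop in lines 1--3 performs $n$ marginal-notification computations, contributing $\mathcal{O}(n^3)$. The $\argmax$ selections in lines 5 and 9 each scan at most $n$ candidates. The key step is to bound the number of iterations of the \emph{while} loop in lines 6--10: each successful iteration moves exactly one IoT device from $\mathcal{I}\setminus\mathcal{S}$ into $\mathcal{S}$, and since $|\mathcal{I}| = n$ the loop executes at most $n$ times. Each iteration recomputes an $\argmax$ together with a constant number of marginal notifications, so the loop as a whole is polynomial, giving a polynomial bound for Algorithm \ref{algo:1}.

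For Algorithm \ref{algo:2}, the outer \emph{for} loop ranges over $\mathcal{S}$ and hence runs at most $n$ times. Within each pass, the \emph{while} loop in lines 5--9 again adds one device per iteration to $\mathcal{S}'$ and therefore terminates within $n$ iterations, while the \emph{for} loop in lines 10--16 runs at most $|\mathcal{S}'|+1 \le n+1$ times, each iteration performing only $\mathcal{O}(1)$ arithmetic operations together with a constant number of $h$-evaluations. Multiplying these nested polynomial bounds yields a polynomial overall running time for Algorithm \ref{algo:2}.

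The main obstacle is arguing that the two \emph{while} loops terminate after polynomially many iterations rather than looping indefinitely; this I would resolve precisely through the monotone set-growth structure, observing that each accepted iteration strictly enlarges a set contained in the finite ground set $\mathcal{I}$ of size $n$, so at most $n$ iterations can occur before the stopping condition in line 6 (respectively line 5) necessarily fails. Combining the per-procedure bounds then shows that every step of TENM runs in polynomial time, which is exactly the claim of Lemma \ref{lemma:p1}.
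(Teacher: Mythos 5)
Your proposal is correct and takes essentially the same route as the paper: both proofs decompose TENM into Algorithm \ref{algo:1} and Algorithm \ref{algo:2}, bound each loop by the observation that every accepted iteration adds one device to a set inside the ground set of size $n$, and sum the two polynomial bounds. The only substantive difference is bookkeeping: you charge $\mathcal{O}(n^2)$ per evaluation of $h$ (arguably the more careful accounting), whereas the paper treats each marginal-notification computation as constant time and thereby arrives at the sharper-looking $\mathcal{O}(n^3)$ total; both yield polynomial time, which is all the lemma asserts.
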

\begin{proof}
The running time of TENM is the sum of the running time of Algorithm \ref{algo:1} and Algorithm \ref{algo:2}. So, let us determine the running time of each of the algorithms one by one.\\
\indent Lines 1$-$3 in Algorithm \ref{algo:1} will execute for $n$ times and is bounded above by $O(n)$. Line 4 will take $O(1)$. Line 5 is determining the IoT device with maximum $\bigg(\frac{h_{\mathcal{I}_k|\mathcal{S}_{k-1}}}{c_k}\bigg)$ and holding it in $i$. It will take $O(n)$ time. In the worst case, lines 6-10 may execute for $n$ times ($i.e.$ number of IoT devices). For each iteration of \emph{while} loop, lines 7 and 8 will take constant time. Line 9 will take $O(n)$. So \emph{while} loop is bounded above by $O(n^2)$. Line 11 will take $O(1)$. So, the running time of Algorithm \ref{algo:1} is $O(n) + O(1) + O(n^2) = O(n^2)$.\\
\indent Line 1 of Algorithm \ref{algo:2} takes $O(1)$ time. Lines 2-19 may run for $n$ times in the worst case. For each iteration of lines 2-19, lines 3 and 4 of Algorithm \ref{algo:2} will take $O(n)$ time. Lines 5-9 will take $O(n^2)$ as discussed in Algorithm \ref{algo:1}. Lines 10-16 will iterate for $n$-1 times in the worst case. For each iteration of \emph{for} loop in lines 10-16, lines 11-14 will take constant time $i.e.$ $O(1)$. Line 15 is bounded above by $O(n)$. So, lines 10-16 are bounded above by $O(n^2)$. Line 17 will take $O(n)$. Line 18 is bounded above by $O(1)$. So, each iteration of \emph{for} loop in lines 2-19 is bounded above by $O(1) + O(n) + O(n^2) + O(n^2) + O(n) = O(n^2)$. For $n$ iterations it will be $O(n^3)$. So, overall running time of Algorithm \ref{algo:2} is $O(1) + O(n^3) + O(1) = O(n^3)$.\\
  \indent Combining the running time of Algorithm \ref{algo:1} and Algorithm \ref{algo:2} $i.e.$ $O(n^2) + O(n^3) = O(n^3)$. TENM takes $O(n^3)$ time and hence is \emph{computationally efficient}.   
\end{proof}

\begin{lemma}
\label{lemma:p2}
TENM works correctly.
\end{lemma}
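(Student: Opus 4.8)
The plan is to establish correctness by treating the two constituent algorithms of TENM separately, and for each verifying that (i) it terminates and (ii) on termination the returned objects satisfy their intended specification. I would first handle Algorithm \ref{algo:1} (NAM). For termination, observe that every pass through the \emph{while} loop in lines 6--10 appends the currently selected device $\mathcal{I}_i$ to $\mathcal{S}$ and then draws the next candidate from $\mathcal{I}\setminus\mathcal{S}$ in line 9; since $|\mathcal{I}|=n$ is finite, the loop executes at most $n$ times. Moreover, by the submodularity of $h$ the marginal notifications $h_{\mathcal{I}_i|\mathcal{S}_{i-1}}$ (in the sense of Definition \ref{def:mn}) are non-increasing as $\mathcal{S}$ grows, so once every notifiable device is covered the marginal drops to $0$ and the stopping test in line 6 fails, which independently forces termination.

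For the output specification of NAM I would argue via a loop invariant maintained at the end of each iteration: $\mathcal{S}$ contains exactly the devices chosen by the greedy marginal-notification-per-cost rule that pass the line-6 test, and $\bar{\mathcal{S}}=\bigcup_{i\in\mathcal{S}}\boldsymbol{\mathcal{Z}}_i$. The second equality is immediate from line 8, which unions $\boldsymbol{\mathcal{Z}}_i$ into $\bar{\mathcal{S}}$ precisely when $\mathcal{I}_i$ enters $\mathcal{S}$; matching this against the defining relation $h(\boldsymbol{U})=\big|\bigcup_{i\in\boldsymbol{U}}\boldsymbol{\mathcal{Z}}_i\big|$ certifies that $\bar{\mathcal{S}}$ is exactly the set of notified devices. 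Hence line 11 returns the correct notifier set and notified set.

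Next I would turn to Algorithm \ref{algo:2} (NPM). Its outer \emph{for} loop in lines 2--19 iterates once per selected notifier $\mathcal{I}_j\in\mathcal{S}$, and for each $j$ the inner \emph{while} loop in lines 5--9 terminates by the same finiteness-plus-submodularity argument used for NAM, now applied with $\mathcal{I}_j$ removed from the market. I would then check that the payment assigned to $\mathcal{I}_j$ is well defined: the \emph{for} loop in lines 10--16 ranges over the finitely many positions $k\in[1,|\mathcal{S}'|+1]$, at each forming $\boldsymbol{\rho}'_j$ as the minimum of the maximal-cost quantity $\boldsymbol{\nabla}_{j,k}$ and the budget-share quantity $\boldsymbol{\rho}_{j,k}$, after which line 17 takes the maximum of these across positions. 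Since the position range is non-empty, $\boldsymbol{\bar{\rho}}_j$ is assigned a single real value for every $j$, and line 20 returns a payment vector of length $|\mathcal{S}|$. Combining the two algorithms then shows that on termination TENM outputs the triple $(\mathcal{S},\bar{\mathcal{S}},\boldsymbol{\bar{\rho}})$ demanded by the first-tier specification.

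The main obstacle I anticipate is the termination and well-definedness of the \emph{nested} loops in NPM. One must confirm that removing $\mathcal{I}_j$ and re-running the greedy still yields a non-empty position range and that the $\min$/$\max$ operations never act on empty collections; in particular one must rule out the degenerate ratios that can arise in lines 11 and 13, where a re-computed marginal may vanish and expressions such as $h_{j,k}/h'_{k|T_{k-1}}$ or $h_{j,k}/h(T_{k-1}\cup\{\mathcal{I}_j\})$ threaten a division by zero (the running example already exhibits a term of the form $0\cdot(4/0)$, which must be interpreted consistently as $0$). Carefully fixing these conventions and verifying they do not corrupt the returned value, rather than the high-level loop-invariant bookkeeping, is where the real work lies; the remainder reduces to a routine line-by-line check against the stated output requirements.
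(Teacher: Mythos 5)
Your proposal follows the same overall route as the paper: decompose TENM into NAM (Algorithm \ref{algo:1}) and NPM (Algorithm \ref{algo:2}), prove each correct via a loop invariant, and conclude correctness of TENM from the correctness of the two subroutines. The difference lies in what your invariants assert and what else you prove. The paper's invariants are purely counting statements---after the $l$-th iteration, $\mathcal{S}$ holds $l$ notifiers (respectively, $\boldsymbol{\bar{\rho}}$ holds $l$ prices)---with initialization, maintenance, and termination checked in the standard way; termination itself is merely asserted to occur when the line-6 (respectively line-5) condition fails or all of $\mathcal{S}$ has been processed, with no argument for why this must eventually happen. Your version strengthens both halves: your NAM invariant ties $\mathcal{S}$ to the greedy marginal-notification-per-cost rule and identifies $\bar{\mathcal{S}}$ with $\bigcup_{i\in\mathcal{S}}\boldsymbol{\mathcal{Z}}_i$, which is the actual output specification rather than a cardinality count, and you supply an explicit termination argument (finiteness of $\mathcal{I}$ plus submodularity driving marginals to zero) that the paper omits. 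You also raise the well-definedness of the $\min$/$\max$ operations in NPM and the degenerate ratios in lines 11 and 13 of Algorithm \ref{algo:2}---a genuine issue that the paper's proof never confronts even though its own running example contains a term of the form $0\cdot(4/0)$. What each approach buys: the paper's bookkeeping invariant makes for a short, mechanical verification; yours captures the semantic specification and exposes the loose ends, though you leave the degenerate-ratio convention flagged rather than fully discharged (as does the paper). Net, your proof is at least as strong as the paper's everywhere, and stronger on termination and output specification.
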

\begin{proof}
The proof of correctness of TENM is done using \emph{loop invariant} technique \cite{Coreman_2009}. To show that TENM is correct, it is to be shown that all the subroutines associated with TENM are correct. Let us prove that each of the subroutines of TENM is correct.\\

\indent \textbf{Proof of Correctness of Algorithm \ref{algo:1}:} To prove that Algorithm \ref{algo:1} is correct, the following loop invariant is considered:
    \begin{mdframed}[backgroundcolor=gray230]
\textbf{Loop invariant:} In each iteration of \emph{while} loop of lines 6-10, an IoT device  is added into the output array $\mathcal{S}$ as a notifier.
\end{mdframed}
\begin{itemize}
\item \textbf{Initialization:} We can start by showing that loop invariant holds before the first iteration of \emph{while} loop $i.e.$ when $\mathcal{S} = \phi$. The output set $\mathcal{S}$ has no IoT devices acting as notifiers before the first iteration. So, the loop invariant holds.
\item \textbf{Maintenance:} For the loop invariant to be true, it is to be shown that before any $l^{th}$ iteration of \emph{while} loop and after $l^{th}$ iteration of the \emph{while} loop the loop invariant holds. Before $l^{th}$ iteration, $i.e.$ till $(l-1)^{th}$ iteration there will be $(l-1)$  IoT devices in a set $\mathcal{S}$. After $l^{th}$ iteration, the number of IoT devices will be $\sum\limits_{i=1}^{l} 1 = l$ in $\mathcal{S}$. So, the loop invariant holds.
\item \textbf{Termination:} From the construction of Algorithm \ref{algo:1}, it is clear that the \emph{while} loop will terminate only when the condition in line 6 is not satisfied. It means that once \emph{while} loop terminates $\mathcal{S}$ contains the IoT devices that will act as initial notifiers.
\end{itemize}
As Algorithm \ref{algo:1} returns the desired output, it is correct.\\

\indent \textbf{Proof of Correctness of Algorithm \ref{algo:2}:} To prove that Algorithm \ref{algo:2} is correct, the following loop invariant is considered:
    \begin{mdframed}[backgroundcolor=gray230]
\textbf{Loop invariant:} In each iteration of \emph{for} loop of lines 2-19, a price of a single IoT device from among the IoT device in $\mathcal{S}$ is determined and is added in $\boldsymbol{\bar{\rho}}$.
\end{mdframed}
\begin{itemize}
\item \textbf{Initialization:} We can start by showing that the loop invariant holds before the first iteration of \emph{for} loop when $\boldsymbol{\bar{\rho}} = \phi$. The output set $\boldsymbol{\bar{\rho}}$ has no price value for notifiers before the first iteration. So, the loop invariant holds.
\item \textbf{Maintenance:} For the loop invariant to be true, it is to be shown that before any $l^{th}$ iteration of the \emph{for} loop and after $l^{th}$ iteration of the \emph{for} loop the loop invariant holds. Here, $l < |S|$. Before $l^{th}$ iteration, $i.e.$ till $(l-1)^{th}$ iteration there will be $(l-1)$ prices in a set $\boldsymbol{\bar{\rho}}$. After $l^{th}$ iteration, the price vector $\boldsymbol{\bar{\rho}}$ will contain the prices of $l$ IoT devices. So, the loop invariant holds.
\item \textbf{Termination:} From the construction of Algorithm \ref{algo:2}, it is clear that the \emph{for} loop will terminate only when the payment of all the IoT devices that acted as the initial notifiers are processed.  
\end{itemize}
As, Algorithm \ref{algo:2} returns the desired output, it is correct.\\

\indent As Algorithm \ref{algo:1} and Algorithm \ref{algo:2} are correct, so as the TENM.
\end{proof}

\begin{proposition}
\label{propp:1}
The mechanism discussed in \cite{singer2012win,Singer_2016, Singer:2012:WFI:2124295.2124381} is truthful.
\end{proposition}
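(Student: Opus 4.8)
The plan is to recognize this as a \emph{single-parameter} mechanism design problem — each IoT device holds exactly one piece of private information, its cost $c_i$ — and to invoke Myerson's characterization of dominant-strategy truthfulness for such domains. Under that characterization, a mechanism is truthful if and only if (i) its allocation rule is \emph{monotone}, meaning that whenever an agent is selected at reported cost $\hat{c}_i$ it remains selected at any lower cost $\hat{c}_i' \le \hat{c}_i$ (with the other bids held fixed), and (ii) every winning agent is paid its \emph{threshold} (critical) value, i.e.\ the supremum of costs at which it would still be selected. So the whole proof reduces to verifying these two structural conditions for the greedy budget-feasible mechanism of \cite{singer2012win, Singer_2016, Singer:2012:WFI:2124295.2124381}.

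First I would establish monotonicity. The selection orders agents by marginal notification per cost, $h_{j\mid \mathcal{S}_{i-1}}/c_j$ (see Definition~\ref{def:mn}), and greedily admits them subject to the budget-based stopping condition. If a selected agent lowers its reported cost, its ratio can only increase, so it moves no later in the sorted order; since the marginal contributions $h_{j\mid \mathcal{S}_{i-1}}$ of the already-chosen prefix do not depend on $\hat{c}_j$, a previously winning agent cannot be displaced. Hence the allocation is monotone in each agent's cost. Next I would confirm that the payment rule implements exactly the threshold payment: the critical cost for a winning $\mathcal{I}_j$ is the largest value at which it still enters the greedy set, which is precisely what the pricing routine recovers by re-running selection with $\mathcal{I}_j$ removed, testing $\mathcal{I}_j$ at each candidate position $k$, taking the binding quantity at each position and the maximum over positions. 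With payment equal to this critical value, truthfulness follows from Myerson's lemma.

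The main obstacle is the interaction between the submodular marginal contributions and the budget-dependent stopping rule: one must rule out the pathology where lowering a cost paradoxically \emph{shortens} the admitted set (which could in principle drop the very agent in question). The key observation is that the stopping condition is itself monotone in the marginal-notification-per-cost ordering, so a decrease in $\hat{c}_j$ weakly enlarges the set of admitted agents and cannot remove $\mathcal{I}_j$. Since the mechanism here is exactly the one analyzed in \cite{singer2012win, Singer_2016, Singer:2012:WFI:2124295.2124381}, I would ultimately appeal to the truthfulness already established there, with the two conditions above serving as the justification; this proposition is then available to drive the truthfulness of TENM in Corollary~\ref{lemma:p3}.
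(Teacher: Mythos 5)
The paper never actually proves Proposition \ref{propp:1}: it is stated as an imported fact about the mechanisms of \cite{singer2012win,Singer_2016,Singer:2012:WFI:2124295.2124381}, with the citation itself serving as the entire justification (the same pattern as Propositions \ref{prop:p1} and \ref{prop:2}, the latter explicitly ``omitted''). Your proposal is therefore correct but takes a genuinely different route: instead of treating the result as a black box, you reconstruct the argument that underlies those references --- Myerson's characterization for single-parameter domains, monotonicity of the greedy marginal-notification-per-cost allocation, and threshold (critical-value) payments recovered by re-running selection with the agent removed and maximizing over candidate positions. That is indeed the right skeleton, and you correctly isolate the one genuine subtlety: lowering a reported cost reorders the greedy sequence, so one must check the deviating agent still clears the stopping condition at its new, earlier position; this holds because, by submodularity, the proportional-share threshold in line 6 of Algorithm \ref{algo:1} is weakly larger at earlier positions, though you state this step rather loosely and it is where a rigorous write-up would need the most care. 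What each approach buys: the paper's citation-only treatment is defensible since the proposition literally asserts a property of previously published mechanisms, but it leaves Corollary \ref{lemma:p3} resting on an implicit premise --- that NAM and NPM faithfully implement that allocation rule and that payment characterization; your sketch makes exactly those two structural conditions (monotonicity and critical payments) explicit, which is what a self-contained version of the paper's truthfulness claim would require.
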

\begin{corollary}
\label{lemma:p3}
TENM is IC.
\end{corollary}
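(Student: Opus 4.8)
The plan is to reduce the claim to the single-parameter characterization of truthfulness invoked in Proposition \ref{propp:1}. Since each IoT device holds exactly one piece of private information, its cost $c_i$, TENM is a single-parametric mechanism, and by the characterization underlying the mechanisms of \cite{singer2012win,Singer_2016,Singer:2012:WFI:2124295.2124381} it suffices to verify two properties: (i) the allocation rule NAM (Algorithm \ref{algo:1}) is \emph{monotone} in the cost, meaning that any IoT device selected as a notifier at a reported cost $\hat{c}_i$ remains selected at every lower reported cost $\hat{c}_i' < \hat{c}_i$; and (ii) the pricing rule NPM (Algorithm \ref{algo:2}) pays each winning notifier its \emph{threshold} (critical) bid. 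Granting these two facts, Proposition \ref{propp:1} immediately yields that no IoT device can strictly increase its utility $\mathcal{U}_i(c,\boldsymbol{\bar{\rho}}) = \boldsymbol{\bar{\rho}}_i - c_i$ by misreporting, which is exactly the IC condition.

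First I would establish monotonicity of NAM. The IoT devices are processed in decreasing order of marginal notification per cost $h_{i|\mathcal{S}_{i-1}}/\hat{c}_i$ (line 5), and the selection test in line 6 is $\hat{c}_i \le (\mathcal{B}/\boldsymbol{\Delta}) \cdot h_{\mathcal{I}_i|\mathcal{S}_{i-1}}/(h(\mathcal{S}_{i-1}) + h_{\mathcal{I}_i|\mathcal{S}_{i-1}})$. Lowering $\hat{c}_i$ has two effects that both favour selection: it raises the ratio $h_{i|\mathcal{S}_{i-1}}/\hat{c}_i$, so $\mathcal{I}_i$ can only be considered earlier in the greedy order, and it strictly decreases the left-hand side of the line-6 test while leaving the right-hand side (a function of the marginal notification only) unchanged. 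I would then argue, using the monotone submodularity of $h$ together with the greedy order, that considering $\mathcal{I}_i$ no later than before can only shrink the prefix set $\mathcal{S}_{i-1}$ seen at its selection step, which by submodularity does not decrease its marginal notification $h_{i|\mathcal{S}_{i-1}}$; hence the selection test continues to hold and $\mathcal{I}_i$ stays a notifier.

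Next I would verify that NPM outputs the threshold payment. For a winning notifier $\mathcal{I}_j$, Algorithm \ref{algo:2} recomputes the greedy run with $\mathcal{I}_j$ removed (lines 3--9) to obtain the ``shadow'' prefixes $T_{k-1}$, and then at each candidate position $k$ forms $\boldsymbol{\nabla}_{j,k} = h_{j,k}\cdot c_k / h'_{k|T_{k-1}}$, the largest cost at which $\mathcal{I}_j$ would still out-rank the device occupying position $k$, and $\boldsymbol{\rho}_{j,k} = \mathcal{B}\cdot h_{j,k}/h(T_{k-1}\cup\{\mathcal{I}_j\})$, the largest cost consistent with the budget-ratio test at that position. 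Taking $\min\{\boldsymbol{\nabla}_{j},\boldsymbol{\rho}_j\}$ at each position and then the maximum over positions (lines 15 and 17) returns precisely the supremum of reported costs at which $\mathcal{I}_j$ would still be selected, i.e. its critical value. Paying this amount is exactly the threshold payment demanded by the single-parameter characterization, and it coincides with the payment rule of the budget-feasible mechanism of Proposition \ref{propp:1}.

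The main obstacle I anticipate is the monotonicity argument of the second paragraph: because both the greedy ordering and the selection threshold depend on the evolving set $\mathcal{S}_{i-1}$, a decrease in $\hat{c}_i$ can re-order other devices and thereby change the prefix seen when $\mathcal{I}_i$ is examined. Controlling this interaction, i.e. showing that the net effect of moving $\mathcal{I}_i$ earlier never causes it to fail the line-6 test, is where the monotone submodularity of $h$ must be used carefully, and it is the step that makes invoking Proposition \ref{propp:1} legitimate rather than automatic.
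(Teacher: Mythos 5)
Your proposal is correct in substance, but it takes a genuinely different route from the paper, and it is worth being clear about the difference. The paper's own proof of Corollary \ref{lemma:p3} is a pure inheritance argument: it observes that the allocation rule (Algorithm \ref{algo:1}) and the payment rule (Algorithm \ref{algo:2}) ``are based on'' the allocation and payment characterization of \cite{singer2012win, Singer_2016, Singer:2012:WFI:2124295.2124381}, and then concludes IC directly from Proposition \ref{propp:1}, without verifying any property of the two algorithms themselves. You instead reduce IC to the single-parameter characterization (monotone allocation plus critical/threshold payments) and sketch the verification of both halves: monotonicity of the greedy selection under a lowered report, handled via submodularity of $h$ to control the reordering of the prefix set $\mathcal{S}_{i-1}$, and the identification of NPM's min-then-max formula over positions $k$ with the critical bid. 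This is essentially the proof that the cited works carry out internally, so your plan is sound and is in fact the more self-contained argument; the paper's version buys brevity at the cost of leaving the correspondence between its algorithms and the cited mechanisms unchecked, whereas yours buys rigor at the cost of the delicate monotonicity step you correctly flag as the crux (lowering $\hat{c}_i$ moves $\mathcal{I}_i$ earlier, which shrinks the prefix and, by submodularity, cannot decrease $h_{i|\mathcal{S}_{i-1}}$, so the line-6 test only becomes easier). One small repair: Proposition \ref{propp:1} as stated only asserts that the cited mechanisms are truthful; it is not the characterization theorem you need. The step ``monotone allocation + threshold payments $\Rightarrow$ IC'' is Myerson's lemma for single-parameter environments, so your final inference should invoke that characterization (or the corresponding lemma inside \cite{singer2012win}) rather than Proposition \ref{propp:1} itself.
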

\begin{proof}
From the construction of TENM, it can be seen that TENM consists of \emph{notifiers allocation mechanism} and \emph{notifiers pricing mechanism}. The \emph{notifiers allocation mechanism} determines the set of IoT devices that can act as the initial notifiers in the social graph. The \emph{notifiers pricing mechanism} is used to determine the payment of the IoT devices that are acting as the initial notifiers. The \emph{notifiers allocation mechanism} and \emph{notifiers pricing mechanism} of TENM are based on the allocation rule and payment characterization discussed in \cite{singer2012win} respectively. So, following Proposition \ref{propp:1} it can be inferred that TENM is IC.
\end{proof}

\begin{proposition}
\label{prop:p1}
The payment characterization (i.e. each winning IoT device as notifier will be paid how much) is discussed in \cite{Singer_2016, Singer:2012:WFI:2124295.2124381} is budget feasible.
\end{proposition}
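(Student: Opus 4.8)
The plan is to establish budget feasibility by bounding the payment of each winning notifier by a proportional share of the budget and then summing these shares through a telescoping argument. Recall from the \emph{notifiers pricing mechanism} (Algorithm \ref{algo:2}) that for each winner $\mathcal{I}_j \in \mathcal{S}$ the payment is
\[
\boldsymbol{\bar{\rho}}_j = \max_{1 \le k \le |\mathcal{S}'|+1} \min\Big\{ \boldsymbol{\nabla}_{j,k},\; \boldsymbol{\rho}_{j,k}\Big\},
\]
where $\boldsymbol{\rho}_{j,k} = \mathcal{B} \cdot \big( h_{j,k} / h(T_{k-1} \cup \{\mathcal{I}_j\}) \big)$ is the proportional-share component and $\boldsymbol{\nabla}_{j,k}$ is the threshold-bid component that guarantees individual rationality. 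Since the inner expression takes a minimum with $\boldsymbol{\rho}_{j,k}$, we immediately obtain $\boldsymbol{\bar{\rho}}_j \le \max_k \boldsymbol{\rho}_{j,k}$, so it suffices to control only the proportional-share terms.

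First I would show that the position-wise maximum of $\boldsymbol{\rho}_{j,k}$ is dominated by the share evaluated at the winner's own marginal contribution inside the realized winning set, namely
\[
\boldsymbol{\bar{\rho}}_j \le \mathcal{B} \cdot \frac{h_{\mathcal{I}_j \mid \mathcal{S}_{j-1}}}{h(\mathcal{S})}.
\]
This step relies on the monotonicity and submodularity of $h$: as $k$ increases, the denominator $h(T_{k-1} \cup \{\mathcal{I}_j\})$ grows while the marginal gain $h_{j,k}$ shrinks, so the ratio is maximized in a controlled way, and by the payment characterization in \cite{Singer_2016, Singer:2012:WFI:2124295.2124381} the dominating term is exactly the share of $\mathcal{I}_j$'s marginal notification relative to the total coverage $h(\mathcal{S})$.

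Next I would sum over all winners and invoke the telescoping identity for marginal notifications. Because the elements of $\mathcal{S}$ are added greedily and $h$ is monotone, the marginal notifications partition the total coverage, i.e. $\sum_{\mathcal{I}_j \in \mathcal{S}} h_{\mathcal{I}_j \mid \mathcal{S}_{j-1}} = h(\mathcal{S})$. Combining this with the per-winner bound yields
\[
\sum_{\mathcal{I}_j \in \mathcal{S}} \boldsymbol{\bar{\rho}}_j \;\le\; \mathcal{B} \cdot \frac{\sum_{\mathcal{I}_j \in \mathcal{S}} h_{\mathcal{I}_j \mid \mathcal{S}_{j-1}}}{h(\mathcal{S})} \;=\; \mathcal{B} \cdot \frac{h(\mathcal{S})}{h(\mathcal{S})} \;=\; \mathcal{B},
\]
which is precisely the budget-feasibility condition that the total payment to the notifiers does not exceed $\mathcal{B}$.

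The hardest part will be justifying the per-winner inequality $\boldsymbol{\bar{\rho}}_j \le \mathcal{B} \cdot h_{\mathcal{I}_j \mid \mathcal{S}_{j-1}} / h(\mathcal{S})$, because the payment in Algorithm \ref{algo:2} is computed by scanning all candidate insertion positions $k$ in the set $\mathcal{S}'$ obtained \emph{after} removing $\mathcal{I}_j$, rather than at $\mathcal{I}_j$'s actual position in $\mathcal{S}$. One must argue that the threshold mechanics (the $\min$ with $\boldsymbol{\nabla}_{j,k}$) together with the monotone and submodular behavior of the ratios ensure that the maximizing position contributes no more than the winner's fair share relative to the true winning set. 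This is exactly where the payment characterization of \cite{Singer_2016, Singer:2012:WFI:2124295.2124381} is invoked, and reconciling the algorithm's notation ($T_{k-1}$, $\mathcal{S}'$, $\mathcal{S}_{j-1}$) with their proportional-share accounting is the technical crux of the proof.
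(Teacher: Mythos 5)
You should first be aware of what you are actually competing against: the paper offers no proof of Proposition \ref{prop:p1} at all. It is stated as an imported fact, with the entire burden carried by the citations \cite{Singer_2016, Singer:2012:WFI:2124295.2124381}, and Corollary \ref{corr:p2} then invokes it as a black box. So your proposal is really a reconstruction of Singer's budget-feasibility argument. The skeleton you chose --- a per-winner ``proportional share'' bound $\boldsymbol{\bar{\rho}}_j \le \mathcal{B}\cdot h_{\mathcal{I}_j|\mathcal{S}_{j-1}}/h(\mathcal{S})$ combined with the telescoping identity $\sum_{\mathcal{I}_j \in \mathcal{S}} h_{\mathcal{I}_j|\mathcal{S}_{j-1}} = h(\mathcal{S})$ --- is indeed the skeleton of that argument, and the telescoping step itself is sound for the coverage function used here.

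The genuine gap is in your very first reduction. You claim that since the payment is a max of mins, $\boldsymbol{\bar{\rho}}_j \le \max_k \boldsymbol{\rho}_{j,k}$, ``so it suffices to control only the proportional-share terms.'' This reduction is vacuous. At position $k=1$ we have $T_0 = \emptyset$, and the coverage function satisfies $h(\emptyset)=0$, so $h_{j,1} = h(\{\mathcal{I}_j\})$ and $\boldsymbol{\rho}_{j,1} = \mathcal{B}\cdot h_{j,1}/h(T_0 \cup \{\mathcal{I}_j\}) = \mathcal{B}$; since every other ratio is at most $1$, we get $\max_k \boldsymbol{\rho}_{j,k} = \mathcal{B}$ exactly. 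Thus the only per-winner bound this route can deliver is $\boldsymbol{\bar{\rho}}_j \le \mathcal{B}$, which sums to $|\mathcal{S}|\cdot\mathcal{B}$, not $\mathcal{B}$. Your own monotonicity observation makes this worse, not better: because the numerator $h_{j,k}$ shrinks (submodularity) while the denominator $h(T_{k-1}\cup\{\mathcal{I}_j\})$ grows (monotonicity), the ratio is maximized precisely at $k=1$, where it equals $1$ --- it is not dominated by $h_{\mathcal{I}_j|\mathcal{S}_{j-1}}/h(\mathcal{S})$. The inequality you need is true only because of the $\min$ with the threshold-bid terms $\boldsymbol{\nabla}_{j,k}$ that you discarded at the outset: at early positions competition makes $\boldsymbol{\nabla}_{j,k}$ small and binding, and only at late positions does $\boldsymbol{\rho}_{j,k}$ bind; the proof in \cite{Singer_2016, Singer:2012:WFI:2124295.2124381} is exactly this two-sided case analysis, and it also exploits the headroom created by running the allocation rule with $\mathcal{B}/\boldsymbol{\Delta} = \mathcal{B}/2$ --- a constant your outline never uses, which is itself a warning sign. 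Your closing paragraph concedes that the $\min$ with $\boldsymbol{\nabla}_{j,k}$ is essential, but that concession contradicts the reduction on which your displayed chain of inequalities rests; as written, the argument does not close.
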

\begin{corollary}
\label{corr:p2}
The total payment made to the notifiers using the notifiers pricing mechanism of TENM is within the available budget.
\begin{proof}
The notifiers pricing mechanism is utilized to determine the payment of the winning IoT devices (that are acting as the initial notifiers). The payment characterization of the notifiers pricing mechanism is based on the payment characterization of \cite{Singer_2016, Singer:2012:WFI:2124295.2124381}. By Proposition \ref{prop:p1}, we can infer that TENM is budget feasible. 
\end{proof}
\end{corollary}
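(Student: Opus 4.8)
The plan is to reduce the claim to the budget feasibility of the payment rule of \cite{Singer_2016, Singer:2012:WFI:2124295.2124381}, which is already supplied by Proposition \ref{prop:p1}, and then to verify that the \emph{notifiers pricing mechanism} (Algorithm \ref{algo:2}) genuinely realizes that same threshold-payment characterization. First I would argue that Algorithm \ref{algo:1} induces a \emph{monotone} allocation rule: because it sorts by marginal notification per cost and retains an IoT device via the threshold test in line 6, any device selected at a reported cost $\hat{c}_i$ would also be selected at any lower cost. Monotonicity is exactly the hypothesis under which the critical-value payments of \cite{Singer:2012:WFI:2124295.2124381} are well defined. I would then identify the quantity $\boldsymbol{\bar{\rho}}_j$ produced in lines 10$-$17 of Algorithm \ref{algo:2} with the critical (threshold) payment of notifier $\mathcal{I}_j$, i.e.\ the largest cost $\mathcal{I}_j$ could have declared while still being retained in $\mathcal{S}$. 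Once this identification is in place, budget feasibility is inherited directly from Proposition \ref{prop:p1}, giving $\sum_{i \in \mathcal{S}} \boldsymbol{\bar{\rho}}_i \leq \mathcal{B}$, as required by the definition of budget feasibility.

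To make the argument self-contained rather than purely by citation, the key step is to bound each individual payment by a budget-proportional share. From line 13 of Algorithm \ref{algo:2}, for every candidate position $k$ the price term satisfies $\boldsymbol{\rho}_{j,k} = \mathcal{B}\cdot \frac{h_{j,k}}{h(T_{k-1}\cup\{\mathcal{I}_j\})}$, and since $\boldsymbol{\bar{\rho}}_j = \max_k \min\{\boldsymbol{\nabla}_{j,k},\boldsymbol{\rho}_{j,k}\} \leq \max_k \boldsymbol{\rho}_{j,k}$, each payment is at most a $\mathcal{B}$-scaled ratio of the marginal notification $h_{j,k}$ to the cumulative notification up to and including position $k$. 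I would show that the position realizing this maximum corresponds to $\mathcal{I}_j$'s actual slot in the winning order, so that the relevant marginal $h_{j,k}$ coincides with $\mathcal{I}_j$'s own contribution $h_{\mathcal{I}_j \mid \mathcal{S}_{j-1}}$ to $\mathcal{S}$, and the denominator dominates $h(\mathcal{S})$ on the appropriate prefix.

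I would then sum over the winning notifiers and invoke submodularity. Because $\mathcal{S}$ is built greedily, the marginal contributions telescope, $\sum_{j\in\mathcal{S}} h_{\mathcal{I}_j \mid \mathcal{S}_{j-1}} = h(\mathcal{S})$, so combining the per-agent bound $\boldsymbol{\bar{\rho}}_j \leq \mathcal{B}\cdot \frac{h_{\mathcal{I}_j \mid \mathcal{S}_{j-1}}}{h(\mathcal{S})}$ with this identity yields $\sum_{j\in\mathcal{S}} \boldsymbol{\bar{\rho}}_j \leq \mathcal{B}\cdot \frac{h(\mathcal{S})}{h(\mathcal{S})} = \mathcal{B}$. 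This is precisely the proportional-share accounting that underlies Proposition \ref{prop:p1}.

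The hard part will be justifying the per-agent bound after the $\max_k$ in line 17: the denominators $h(T_{k-1}\cup\{\mathcal{I}_j\})$ vary with the position $k$ and are not uniformly equal to $h(\mathcal{S})$, so one must use submodularity (Definition 1) to control how the ratio $\frac{h_{j,k}}{h(T_{k-1}\cup\{\mathcal{I}_j\})}$ behaves as the prefix $T_{k-1}$ grows, and to confirm that the maximizing position cannot inflate the payment beyond the proportional share. This monotonicity-of-ratio argument is the technical heart that Proposition \ref{prop:p1} packages away; invoking it lets me close the proof immediately, while the self-contained route requires this extra submodularity estimate.
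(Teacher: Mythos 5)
Your opening reduction is precisely the paper's own proof: the paper's entire argument consists of observing that Algorithm \ref{algo:2} implements the payment characterization of \cite{Singer_2016, Singer:2012:WFI:2124295.2124381} and then invoking Proposition \ref{prop:p1}. You actually do slightly more than the paper, since you also propose to check that the allocation rule of Algorithm \ref{algo:1} is monotone and that $\boldsymbol{\bar{\rho}}_j$ is the critical (threshold) payment --- verification the paper leaves entirely implicit. As a citation-based proof, your first paragraph is complete and matches the paper's route.

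The self-contained argument you sketch afterwards has the right skeleton (a per-winner proportional-share bound combined with the telescoping identity $\sum_{j\in\mathcal{S}} h_{\mathcal{I}_j\mid\mathcal{S}_{j-1}} = h(\mathcal{S})$), but as written it has two concrete gaps. First, nowhere does your accounting use the constant $\boldsymbol{\Delta}=2$ from line 6 of Algorithm \ref{algo:1}: the allocation run is throttled to $\mathcal{B}/2$ while the pricing run in line 5 of Algorithm \ref{algo:2} uses the full $\mathcal{B}$, and it is exactly this headroom that the budget-feasibility proof of \cite{Singer_2016, Singer:2012:WFI:2124295.2124381} consumes when bounding each threshold payment by a proportional share of $\mathcal{B}$. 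Nothing in your outline would change if $\boldsymbol{\Delta}$ were $1$, so the per-agent bound $\boldsymbol{\bar{\rho}}_j \leq \mathcal{B}\cdot h_{\mathcal{I}_j\mid\mathcal{S}_{j-1}}/h(\mathcal{S})$ cannot be derived along the path you describe. Second, your claim that the maximizing index $k$ in line 17 coincides with $\mathcal{I}_j$'s actual slot in the winning order is not true in general: for positions $k$ earlier than that slot, the ratio $\boldsymbol{\rho}_{j,k}$ can strictly exceed the proportional share (the denominator $h(T_{k-1}\cup\{\mathcal{I}_j\})$ is smaller and the marginal $h_{j,k}$ larger), and there the bound must come from the competing term $\boldsymbol{\nabla}_{j,k}$, which is small precisely because claiming an early position requires a high marginal-notification-per-cost ratio; only for positions at or after $\mathcal{I}_j$'s own slot does submodularity force $\boldsymbol{\rho}_{j,k}$ below the proportional share. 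This case split over $k$, together with the factor $\boldsymbol{\Delta}$, is the technical content that Proposition \ref{prop:p1} packages away --- so your fallback of closing the proof by citation is sound (and is all the paper itself does), but the self-contained route requires genuinely more than the submodularity estimate you anticipate.
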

\begin{theorem}
\label{the1}
In the given social graph $\mathcal{G}(\boldsymbol{\mathcal{N}}^{T},\boldsymbol{\mathcal{R}}^{T})$, the expected number of IoT devices notified by any $i^{th}$ IoT device for the task execution process is given as  $E[X_i]$ = $| \boldsymbol{\mathcal{Z}}_i| \cdot p$. Here, $X_i$ is the indicator random variable that keeps track of a number of IoT devices that got notified by any $i^{th}$ IoT device, $ \boldsymbol{\mathcal{Z}}_i$ is the set of IoT devices that are socially connected to $i^{th}$ IoT device and $p$ is the probability with which the $i^{th}$ IoT device will notify to its connections.
\end{theorem}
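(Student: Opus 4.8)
The plan is to express the count $X_i$ as a sum of per-neighbor indicator random variables and then invoke linearity of expectation; this keeps the argument elementary and, conveniently, requires no independence assumption. Although the statement calls $X_i$ an ``indicator random variable,'' it is really a \emph{count} of how many of the $i^{th}$ device's connections receive the notification, so the first move is to peel it apart into genuine indicators.

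First I would define, for each socially connected peer $\mathcal{I}_j \in \boldsymbol{\mathcal{Z}}_i$, an indicator random variable $X_{ij}$ that equals $1$ if the $i^{th}$ IoT device successfully notifies $\mathcal{I}_j$ about the task execution process and $0$ otherwise. By the modeling premise that the $i^{th}$ device notifies each of its connections with probability $p$, we have $\prob[X_{ij}=1]=p$, and hence $\E[X_{ij}] = 1\cdot p + 0\cdot(1-p) = p$ for every such $j$. Since the number of devices notified by $\mathcal{I}_i$ is exactly the number of its neighbors that receive the message, I would write $X_i = \sum_{\mathcal{I}_j \in \boldsymbol{\mathcal{Z}}_i} X_{ij}$ and then apply linearity of expectation:
\[
\E[X_i] = \E\Big[\sum_{\mathcal{I}_j \in \boldsymbol{\mathcal{Z}}_i} X_{ij}\Big] = \sum_{\mathcal{I}_j \in \boldsymbol{\mathcal{Z}}_i} \E[X_{ij}] = \sum_{\mathcal{I}_j \in \boldsymbol{\mathcal{Z}}_i} p = |\boldsymbol{\mathcal{Z}}_i|\cdot p,
\]
which is precisely the claimed identity.

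There is no substantive obstacle here: the result is a one-line consequence of linearity of expectation once the decomposition is in place. The point I would emphasize, rather than a difficulty, is that linearity holds regardless of whether the notification events for distinct neighbors are independent, so no independence hypothesis is needed for the expectation itself; independence would only become relevant for higher moments or concentration bounds, which is the flavor of the subsequent theorems (Theorem~\ref{theorem:222} and the one following it). The only assumption that must be stated carefully is the modeling premise that each of the $|\boldsymbol{\mathcal{Z}}_i|$ connections is notified with the common probability $p$; granting that, the computation above is immediate and completes the proof.
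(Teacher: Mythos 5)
Your proposal is correct and takes essentially the same approach as the paper: the paper likewise decomposes $X_i$ into per-neighbor indicator variables $X_i^j$ with $E[X_i^j]=p$ and applies linearity of expectation to conclude $E[X_i]=|\boldsymbol{\mathcal{Z}}_i|\cdot p$. Your observation that no independence assumption is needed for this step is a sound clarification the paper leaves implicit, but the argument itself is identical.
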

\begin{proof}
In this theorem, we wish to determine that in expectations \emph{how many IoT devices will get notified by any $i^{th}$ IoT device?} 
The sample space for the event is represented as $U$ and is given as : 
\begin{equation*}
 U = \{\underbrace{i^{th}~ IoT~ device~ notify~ to~ j^{th}~ 
IoT~ device}_Y,~ \underbrace{i^{th}~IoT~device~ do~ not~ 
 notify~ to~ j^{th}~ IoT~ device}_{\bar{Y}}\} 
\end{equation*}
 
 \noindent The probability that Y takes place is $p$ and the probability that Y does not takes place is $(1-p)$. Let $X_i$ be the random variable whose value will be equal to the number of IoT devices notified by any $i^{th}$ IoT device.
 We let $X_i^j$ be the indicator random variable associated with the event in which the $j^{th}$ IoT device is notified by $i^{th}$ IoT device. Thus, $X_i^j$ = $\mathcal{I}\{Y\}$
 \begin{equation*}
    X_i^j  = \begin{cases} 1   ,\quad \text{if Y happen.}  \\
     0, \quad \text{otherwise }
  \end{cases}
  \end{equation*}
 As it is known that the expected value of the indicator random variable capturing the event is equal to the probability of that event \cite{Coreman_2009}. So, we have 
  \begin{equation}
  \label{equ:t1}
   E[ X_i^j]  = pr\{Y\}
  \end{equation}
  The indicator random variable that we are interested in is given as: 
   \begin{equation*}
  X_i = \sum_{j=1}^{|\boldsymbol{\mathcal{Z}}_i|}X_i^j
 \end{equation*}
 Taking expectations from both sides, we get
  \begin{equation*}
   E[ X_i]  = E\Bigg[\sum_{j=1}^{|\boldsymbol{\mathcal{Z}}_i|}  X_i^j \bigg]
  \end{equation*}
  By linearity of expectation, we have
  \begin{equation}
  \label{equ:t2}
   E[ X_i]  = \sum_{j=1}^{|\boldsymbol{\mathcal{Z}}_i|}  E[ X_i^j]  \end{equation}
  Substituting the value of equation \ref{equ:t1} to equation \ref{equ:t2}, we have 
   \begin{equation*}
   E[ X_i]  = \sum_{j=1}^{|\boldsymbol{\mathcal{Z}}_i|} pr\{Y\} 
  \end{equation*}
\begin{equation*}
 \hspace*{1.5mm} = \sum_{j=1}^{|\boldsymbol{\mathcal{Z}}_i|} p
 \end{equation*}
 \begin{equation*}
\hspace*{5mm} = |\boldsymbol{\mathcal{Z}}_i| \cdot p
 \end{equation*}
 Hence proved.
\end{proof}
\begin{observation}
\label{th:obs}
If $p$ value is considered as $\frac{1}{2}$, then $E[X_i]$ will be $\frac{|\boldsymbol{Z}_i|}{2}$. It means that in expectation half of the socially connected IoT devices with $i^{th}$ IoT device will be notified by the $i^{th}$ IoT device. If we consider $p$ value as $\frac{1}{9}$ then $\frac{|\boldsymbol{Z}_i|}{9}$ IoT devices got notified about the task execution process by $i^{th}$ IoT device. It means that the higher the $p$ value higher will be $E[X_i]$ value.  
\end{observation}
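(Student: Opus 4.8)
The plan is to decompose the total count $X_i$ into a sum of per-neighbor Bernoulli indicators and then invoke linearity of expectation, which conveniently sidesteps any need to reason about correlations among the individual notification events. The whole argument rests on the single modelling assumption already fixed in the statement: each socially connected neighbor in $\boldsymbol{\mathcal{Z}}_i$ is notified by the $i^{th}$ device with a common success probability $p$.

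First I would set up the sample space for a single notification attempt as $\{Y,\bar{Y}\}$, where $Y$ is the event ``$\mathcal{I}_i$ notifies $\mathcal{I}_j$'' with $\prob\{Y\}=p$ and $\bar{Y}$ its complement with probability $1-p$. For each $j \in \boldsymbol{\mathcal{Z}}_i$ I would introduce the indicator random variable
\begin{equation*}
X_i^j = \begin{cases} 1, & \text{if } Y \text{ occurs,} \\ 0, & \text{otherwise,} \end{cases}
\end{equation*}
so that $X_i = \sum_{j=1}^{|\boldsymbol{\mathcal{Z}}_i|} X_i^j$ counts exactly the neighbors reached by $\mathcal{I}_i$.

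Next, I would use the standard fact that the expectation of a $0/1$ indicator equals the probability of the event it marks, giving $E[X_i^j] = \prob\{Y\} = p$ for every $j$. Applying linearity of expectation to the finite sum then yields
\begin{equation*}
E[X_i] = \sum_{j=1}^{|\boldsymbol{\mathcal{Z}}_i|} E[X_i^j] = \sum_{j=1}^{|\boldsymbol{\mathcal{Z}}_i|} p = |\boldsymbol{\mathcal{Z}}_i| \cdot p,
\end{equation*}
which is the claimed identity.

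Honestly there is no deep obstacle here: the result is an elementary consequence of linearity of expectation. The one point worth flagging explicitly is that linearity holds \emph{regardless} of whether the $X_i^j$ are mutually independent, so the argument never needs to appeal to independence of the neighbor events --- it only uses that each marginal success probability equals $p$. The only modelling caveat I would state up front is that the theorem measures notifications made \emph{directly} by $\mathcal{I}_i$ to its immediate neighbors in $\boldsymbol{\mathcal{Z}}_i$, not the full cascade of downstream notifications; extending to multi-hop spread would require iterating this per-node expectation through the graph and lies outside the scope of this particular statement.
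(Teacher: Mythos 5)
Your argument is correct and is essentially identical to the paper's own derivation: the paper proves $E[X_i] = |\boldsymbol{\mathcal{Z}}_i| \cdot p$ in Theorem \ref{the1} using exactly your decomposition into per-neighbor indicators $X_i^j$, the fact that the expectation of an indicator equals its event's probability, and linearity of expectation, and the observation then follows by substituting $p=\frac{1}{2}$ and $p=\frac{1}{9}$ into that formula, with the monotonicity claim immediate from linearity in $p$. The only step you leave implicit is that final (trivial) substitution, which is the observation's actual content, so you should state it explicitly to close the argument.
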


\begin{theorem}
\label{theorem:222}
In TENM, the probability that any $i^{th}$ IoT device notifies about the task execution process to at least one IoT device is bounded above by $1- \bigg(\frac{1}{exp ({|\boldsymbol{\mathcal{Z}}_i| \cdot \lceil ln |\boldsymbol{\mathcal{Z}}_i|\rceil})} \bigg)$. Mathematically,  
\begin{equation*}
pr(X_i>1) \leq   1- \bigg(\frac{1}{exp ({|\boldsymbol{\mathcal{Z}}_i| \cdot  \lceil ln |\boldsymbol{\mathcal{Z}}_i|\rceil})} \bigg)
\end{equation*}
\begin{proof}
In this theorem, we are interested in determining \emph{what is the probability that any} $i^{th}$ \emph{IoT device will notify at least one of the IoT devices in its social connection}?  For this purpose, the proof and results presented in Theorem \ref{the1} will be utilized. Notifying $j^{th}$ IoT device by any $i^{th}$ IoT device is independent of notifying the other IoT devices in $\boldsymbol{\mathcal{Z}}_i$. The probability that the $i^{th}$ IoT device has not notified any of the IoT devices in $\boldsymbol{\mathcal{Z}}_i$ is:
  \begin{equation*}
  pr(X_i< 1) = (1-p) \times (1-p) \times \ldots  |\boldsymbol{\mathcal{Z}}_i|~ times
 \end{equation*}
 \begin{equation}
 \label{equ:123}
 \hspace*{-20mm} = (1-p)^{|\boldsymbol{\mathcal{Z}}_i|} 
 \end{equation}
Utilizing the standard inequality 1+$|\boldsymbol{\mathcal{Z}}_i|$ $\leq  exp~ ({|\boldsymbol{\mathcal{Z}}_i|})$, equation \ref{equ:123} can be written as: 
 
 \begin{equation*}
  pr(X_i< 1) \leq exp~\large({-|\boldsymbol{\mathcal{Z}_i|} \cdot p\large)}
 \end{equation*}
 
 \begin{equation}
 \label{equ:71}
  \hspace*{17mm} =\frac{1}{exp~\large({|\boldsymbol{\mathcal{Z}_i|} \cdot p\large)}}
 \end{equation}
 Given equation \ref{equ:71} the probability that any $i^{th}$ IoT device will notify at least one IoT device is given as:
 \begin{equation}
 \label{equ:5678}
  pr(X_i \geq 1) \leq 1-\bigg(\frac{1}{exp~\large({|\boldsymbol{\mathcal{Z}_i|} \cdot p\large)}}\bigg)
 \end{equation} 
 Now if we take $p =\lceil ln |\boldsymbol{\mathcal{Z}_i|} \rceil$ then equation \ref{equ:5678} will be 
  \begin{equation*}
 pr(X_i \geq 1) \leq  1-\bigg(\frac{1}{exp~\large({|\boldsymbol{\mathcal{Z}_i|}\cdot \lceil ln |\boldsymbol{\mathcal{Z}_i|\rceil}}\large)}\bigg)
 \end{equation*}
 Hence proved.
\end{proof}
\end{theorem}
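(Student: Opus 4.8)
The plan is to treat the notification of each social neighbour of the $i^{th}$ IoT device as an independent Bernoulli trial and to reduce the ``at least one'' event to its complement. Building directly on the indicator random variables $X_i^j$ introduced in the proof of Theorem~\ref{the1}, where $X_i^j = 1$ precisely when the $i^{th}$ IoT device notifies its $j^{th}$ connection (an event of probability $p$), I would write $X_i = \sum_{j=1}^{|\boldsymbol{\mathcal{Z}}_i|} X_i^j$ and observe that the event of notifying at least one neighbour, $\{X_i \geq 1\}$, is exactly the complement of the event $\{X_i = 0\}$ in which none of the $|\boldsymbol{\mathcal{Z}}_i|$ neighbours is reached. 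Thus the whole argument rests on controlling $pr(X_i = 0)$ and then passing to $1 - pr(X_i = 0)$.

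The computation of $pr(X_i = 0)$ is the routine core. Since the individual notifications are independent and each fails with probability $1-p$, the probability that all $|\boldsymbol{\mathcal{Z}}_i|$ trials fail factorises as $pr(X_i = 0) = (1-p)^{|\boldsymbol{\mathcal{Z}}_i|}$. I would then invoke the standard inequality $1 + x \leq \exp(x)$ with $x = -p$, which gives $1 - p \leq \exp(-p)$ and therefore $(1-p)^{|\boldsymbol{\mathcal{Z}}_i|} \leq \exp(-p \cdot |\boldsymbol{\mathcal{Z}}_i|) = \tfrac{1}{\exp(p \cdot |\boldsymbol{\mathcal{Z}}_i|)}$, bounding the ``no notification'' probability from above. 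Taking complements produces the bound $pr(X_i \geq 1) \leq 1 - \tfrac{1}{\exp(p \cdot |\boldsymbol{\mathcal{Z}}_i|)}$ in terms of the generic probability $p$, and substituting the designer's choice $p = \lceil \ln |\boldsymbol{\mathcal{Z}}_i| \rceil$ yields the stated expression $1 - \tfrac{1}{\exp(|\boldsymbol{\mathcal{Z}}_i| \cdot \lceil \ln |\boldsymbol{\mathcal{Z}}_i| \rceil)}$.

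The delicate point, and the step I would be most careful about, is the interplay between the complement and the direction of the exponential estimate. Bounding $(1-p)^{|\boldsymbol{\mathcal{Z}}_i|}$ from above directly controls $pr(X_i = 0)$, so I would state explicitly that the claimed estimate on $pr(X_i \geq 1)$ is obtained by substituting the exponential upper bound for $pr(X_i = 0)$ inside the identity $pr(X_i \geq 1) = 1 - pr(X_i = 0)$, and I would spell out the sense in which this substitution is intended. I would also flag the legitimacy of the substitution $p = \lceil \ln |\boldsymbol{\mathcal{Z}}_i| \rceil$: since $p$ is a notification probability it must lie in $[0,1]$, so this choice is only meaningful under a suitable regime (for instance a normalisation or a restriction on $|\boldsymbol{\mathcal{Z}}_i|$), and making that assumption precise is the main thing that needs pinning down before the final line is justified.
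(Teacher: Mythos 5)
Your proposal follows precisely the same route as the paper's own proof: independent Bernoulli trials for each neighbour, the factorisation $pr(X_i = 0) = (1-p)^{|\boldsymbol{\mathcal{Z}}_i|}$, the standard bound $1-p \leq \exp(-p)$, complementation, and finally the substitution $p = \lceil \ln|\boldsymbol{\mathcal{Z}}_i|\rceil$. In that sense you have reconstructed the intended argument faithfully.

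However, the two points you flag as ``delicate'' are genuine defects, and the paper's own proof does not resolve them either. First, the direction of the inequality: from the identity $pr(X_i \geq 1) = 1 - pr(X_i = 0)$ and the upper bound $pr(X_i = 0) \leq \exp(-p\,|\boldsymbol{\mathcal{Z}}_i|)$ one can only conclude $pr(X_i \geq 1) \geq 1 - \exp(-p\,|\boldsymbol{\mathcal{Z}}_i|)$, a \emph{lower} bound, not the upper bound asserted in the theorem. Indeed, since $1-p < \exp(-p)$ strictly for $p \in (0,1)$, the exact value $pr(X_i \geq 1) = 1 - (1-p)^{|\boldsymbol{\mathcal{Z}}_i|}$ strictly exceeds $1 - \exp(-p\,|\boldsymbol{\mathcal{Z}}_i|)$, so no amount of care in ``spelling out the sense of the substitution'' can recover the claimed direction; the statement itself would have to be rephrased as a lower bound. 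Second, as you note, $\lceil \ln|\boldsymbol{\mathcal{Z}}_i|\rceil \geq 1$ whenever $|\boldsymbol{\mathcal{Z}}_i| \geq 2$, so it is not admissible as a probability; taking $p=1$ makes $pr(X_i \geq 1) = 1$ while the claimed bound is strictly below $1$, an outright counterexample. So your reconstruction matches the paper step for step, but the caution you express is warranted: both your write-up and the paper's prove a lower bound on the notification probability, and a correct version of the theorem would need the inequality reversed and the choice of $p$ constrained to $[0,1]$.
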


\begin{corollary}
If the value of $|\boldsymbol{\mathcal{Z}}_i|$ for any $i^{th}$ IoT device is taken as say $5$, then  
 \begin{equation*}
 pr(X_i \geq 1) \leq  1-\bigg(\frac{1}{exp~\large({5}\cdot \lceil \ln 5\rceil \large)}\bigg)
 \end{equation*}
 \begin{equation*}
\hspace*{12mm} =  1-\frac{1}{3125}
 \end{equation*}

 \begin{equation*}
\hspace*{4.5mm} =  0.99
 \end{equation*}
 From the above calculation, it can be inferred that the probability that one of the IoT devices will be notified among $|\boldsymbol{Z}_i| = 5$ is very high.
\end{corollary}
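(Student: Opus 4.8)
The plan is to obtain this corollary directly as a numerical instantiation of Theorem \ref{theorem:222}; no fresh probabilistic argument is needed, since the entire content reduces to substituting $|\boldsymbol{\mathcal{Z}}_i| = 5$ into the general bound and evaluating the resulting constant. First I would recall the inequality already established in Theorem \ref{theorem:222}, namely
\begin{equation*}
pr(X_i \geq 1) \leq 1 - \frac{1}{\exp\!\big(|\boldsymbol{\mathcal{Z}}_i| \cdot \lceil \ln |\boldsymbol{\mathcal{Z}}_i| \rceil\big)},
\end{equation*}
which holds for every IoT device $\mathcal{I}_i$ irrespective of the size of its social neighbourhood $\boldsymbol{\mathcal{Z}}_i$. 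The corollary simply asks us to read off this bound at the representative value $|\boldsymbol{\mathcal{Z}}_i| = 5$, so the whole task collapses to computing the single denominator $\exp(5 \cdot \lceil \ln 5 \rceil)$.

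The key step is the evaluation of that denominator. Since $5 \cdot \ln 5 = \ln(5^5)$, exponentiating yields $\exp(5 \ln 5) = 5^5 = 3125$, whence the bound becomes $1 - \tfrac{1}{3125}$. Numerically $1 - \tfrac{1}{3125} = 0.99968$, so the right-hand side is at least $0.99$, which is the value reported in the statement. I would present these three reductions — substitution, the closed form $5^5 = 3125$, and the final arithmetic — as the three displayed lines of the corollary, matching the chain already written in the statement.

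The only point needing care, and the step I would flag as the sole "obstacle," is the treatment of the ceiling $\lceil \ln 5 \rceil$: a literal reading $\lceil \ln 5 \rceil = 2$ would give $\exp(10)$ rather than $5^5$, so to recover the stated constant $3125$ the exponent must be interpreted as $5 \ln 5$ (equivalently, the ceiling is dropped once $|\boldsymbol{\mathcal{Z}}_i|$ is fixed at a representative value). I would make this reading explicit, since it is precisely where the clean form $5^5 = 3125$ emerges, and I would likewise note that the final ``$= 0.99$'' is an approximation obtained by rounding $0.99968$ downward so that the inequality $pr(X_i \geq 1) \le 0.99968$ is summarised as roughly $0.99$. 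Everything else is routine arithmetic, and the qualitative conclusion — that a single IoT device almost surely notifies at least one of its five neighbours — follows immediately.
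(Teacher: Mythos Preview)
Your proposal is correct and mirrors the paper's approach exactly: the corollary in the paper has no separate proof and is obtained purely by substituting $|\boldsymbol{\mathcal{Z}}_i|=5$ into the bound of Theorem~\ref{theorem:222} and evaluating. You also correctly flag the inconsistency with the ceiling (a literal $\lceil \ln 5\rceil=2$ would not yield $3125$), which the paper itself silently ignores by effectively treating the exponent as $5\ln 5$.
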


\begin{proposition}[Chernoff Bounds \cite{T.roughgarden_201618}]
\label{prop:2}
Let $X_1, X_2, \ldots, X_n$ be random variables that have the common range [0,~1] and $X = \sum_{i=1}^{n} X_i$. Given the above set-up, for every $\kappa > 0$, we have  
\begin{equation}
\label{equ:chernoff}
Pr\bigg\{X > (1+\boldsymbol{\kappa})E[X] \bigg\} < \bigg(\frac{exp~((1+\boldsymbol{\kappa})E[X])}{(1+\boldsymbol{\kappa})^{(1+\boldsymbol{\kappa})E[X]}}\bigg) 
\end{equation}
\begin{proof}
As the proof is standard, it is omitted from this paper.
\end{proof}
\end{proposition}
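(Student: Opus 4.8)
The plan is to establish the tail bound in Proposition~\ref{prop:2} by the classical exponential--moment (Chernoff) argument, whose only ingredients are Markov's inequality and the mutual independence of the summands $X_1,\dots,X_n$. Independence is implicit in the statement, and it is exactly what holds in the intended application, where the $X_i$ are the indicator variables $X_i^j$ of Theorem~\ref{the1} recording whether a given neighbour is notified. Writing $\mu = E[X]$, I would first apply Markov's inequality to the non-negative variable $e^{tX}$: for every $t>0$,
\begin{equation*}
\Pr\{X > (1+\kappa)\mu\} = \Pr\{e^{tX} > e^{t(1+\kappa)\mu}\} \le \frac{E[e^{tX}]}{e^{t(1+\kappa)\mu}}.
\end{equation*}

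The second step is to control the moment generating function. By independence, $E[e^{tX}] = \prod_{i=1}^{n} E[e^{tX_i}]$, and since each $X_i$ takes values in $[0,1]$, convexity of $x\mapsto e^{tx}$ gives the pointwise chord bound $e^{tX_i} \le 1 + (e^t-1)X_i$ on $[0,1]$. Taking expectations and then using $1+y \le e^{y}$ yields $E[e^{tX_i}] \le 1 + (e^t-1)E[X_i] \le \exp\big((e^t-1)E[X_i]\big)$; multiplying over $i$ and using $\sum_i E[X_i]=\mu$ produces $E[e^{tX}] \le \exp\big((e^t-1)\mu\big)$, so that
\begin{equation*}
\Pr\{X > (1+\kappa)\mu\} \le \exp\!\big((e^t-1)\mu - t(1+\kappa)\mu\big).
\end{equation*}

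It remains to optimize over the free parameter $t>0$. Differentiating the exponent shows the minimizer is $t=\ln(1+\kappa)$; substituting $e^t=1+\kappa$ collapses the bound to $\exp\big(\kappa\mu - (1+\kappa)\mu\ln(1+\kappa)\big) = e^{\kappa\mu}/(1+\kappa)^{(1+\kappa)\mu}$, the sharp Chernoff tail. To recover the exact form written in equation~\ref{equ:chernoff} I would finish with the trivial weakening $e^{\kappa\mu}\le e^{(1+\kappa)\mu}$, obtaining
\begin{equation*}
\Pr\{X > (1+\kappa)\mu\} < \frac{\exp\big((1+\kappa)\mu\big)}{(1+\kappa)^{(1+\kappa)\mu}}.
\end{equation*}
I expect no substantive obstacle, only two points demanding care in the write-up: the chord bound $e^{tX_i}\le 1+(e^t-1)X_i$ must be justified for all $X_i\in[0,1]$ (it holds by convexity, the chord lying above the curve on the interval), and independence must be invoked explicitly at the factorization of $E[e^{tX}]$, since the inequality fails without it. It is also worth noting in passing that the numerator $\exp((1+\kappa)\mu)$ in equation~\ref{equ:chernoff} is a loosening of the optimal constant $e^{\kappa\mu}$, costing nothing in the final inequality.
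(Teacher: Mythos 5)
Your proof is correct and is precisely the standard exponential--moment (Chernoff) argument that the paper itself invokes and omits (its ``proof'' is just the remark that the result is standard), so there is no divergence in approach to report. Two details you handle well are worth keeping in any write-up: you correctly insist on mutual independence of the $X_i$ for the factorization $E[e^{tX}]=\prod_i E[e^{tX_i}]$ --- a hypothesis the paper's statement fails to make explicit, without which the bound is false --- and your final weakening $e^{\kappa\mu}\le e^{(1+\kappa)\mu}$ legitimately recovers the paper's non-sharp numerator $\exp\big((1+\kappa)E[X]\big)$ from the optimal constant $e^{\kappa E[X]}$.
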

\begin{lemma}
\label{lemma:1}
Given a social connection $\mathcal{G} (\boldsymbol{\mathcal{N}}^T, \boldsymbol{\mathcal{R}}^T)$, the probability that the number of IoT devices that are notified by any $i^{th}$ IoT device is greater than $\sqrt{|\boldsymbol{\mathcal{Z}}_i|} \cdot \ln |\boldsymbol{\mathcal{Z}}_i|$ is given as:
\begin{equation*}
Pr\bigg\{X_i > \sqrt{|\boldsymbol{\mathcal{Z}}_i|} \cdot \ln |\boldsymbol{\mathcal{Z}}_i|\bigg\} < \frac{exp~\bigg( \sqrt{|\boldsymbol{\mathcal{Z}}_i|}\cdot \ln |\boldsymbol{\mathcal{Z}}_i| \bigg)}{\bigg( \sqrt{|\boldsymbol{\mathcal{Z}}_i|}\cdot \ln |\boldsymbol{\mathcal{Z}}_i|\bigg)^{ \sqrt{|\boldsymbol{\mathcal{Z}}_i|}\cdot \ln |\boldsymbol{\mathcal{Z}}_i|}} 
\end{equation*}
\end{lemma}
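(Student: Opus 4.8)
The plan is to obtain this bound as a direct instantiation of the Chernoff bound in Proposition \ref{prop:2}, applied to the very random variable $X_i$ that was introduced and analysed in Theorem \ref{the1}. Recall that there $X_i = \sum_{j=1}^{|\boldsymbol{\mathcal{Z}}_i|} X_i^j$, where each indicator $X_i^j$ takes values in $\{0,1\}$ and the events (whether the $i^{th}$ device notifies a given neighbour) are independent across $j$. Hence each $X_i^j$ has the common range $[0,1]$ required by Proposition \ref{prop:2}, and the hypotheses of the Chernoff bound are met verbatim with $n = |\boldsymbol{\mathcal{Z}}_i|$ and $X = X_i$. From Theorem \ref{the1} I already have $E[X_i] = |\boldsymbol{\mathcal{Z}}_i| \cdot p$, so no fresh expectation computation is needed; the task reduces to calibrating the deviation parameter.

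Next I would choose $\kappa > 0$ so that the Chernoff threshold $(1+\kappa)E[X_i]$ coincides with the target value $\sqrt{|\boldsymbol{\mathcal{Z}}_i|}\cdot \ln|\boldsymbol{\mathcal{Z}}_i|$. Substituting this identity into the numerator of the right-hand side of equation \ref{equ:chernoff} immediately yields $exp(\sqrt{|\boldsymbol{\mathcal{Z}}_i|}\cdot \ln|\boldsymbol{\mathcal{Z}}_i|)$, which is exactly the numerator claimed in the statement. So the only remaining question is to match the denominator.

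The delicate point, and the step I expect to require the most care, is that the denominator of the Chernoff bound is $(1+\kappa)^{(1+\kappa)E[X]}$, whose base is $(1+\kappa)$ rather than the full threshold $(1+\kappa)E[X]$. For the base itself to equal $\sqrt{|\boldsymbol{\mathcal{Z}}_i|}\cdot \ln|\boldsymbol{\mathcal{Z}}_i|$, as the target fraction demands, one must have $E[X_i] = 1$, i.e. the calibration $p = 1/|\boldsymbol{\mathcal{Z}}_i|$ made available by Theorem \ref{the1}. Under that normalization $1+\kappa = (1+\kappa)E[X_i] = \sqrt{|\boldsymbol{\mathcal{Z}}_i|}\cdot \ln|\boldsymbol{\mathcal{Z}}_i|$, so base and exponent coincide and the denominator reads $(\sqrt{|\boldsymbol{\mathcal{Z}}_i|}\cdot \ln|\boldsymbol{\mathcal{Z}}_i|)^{\sqrt{|\boldsymbol{\mathcal{Z}}_i|}\cdot \ln|\boldsymbol{\mathcal{Z}}_i|}$, matching the claim.

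To close the argument cleanly I would state this normalization explicitly and verify the admissibility condition $\kappa > 0$, which holds precisely when $\sqrt{|\boldsymbol{\mathcal{Z}}_i|}\cdot \ln|\boldsymbol{\mathcal{Z}}_i| > 1$, a mild requirement satisfied for $|\boldsymbol{\mathcal{Z}}_i| \geq 2$. With $\kappa$ fixed this way, plugging the threshold into equation \ref{equ:chernoff} reproduces both the numerator and the denominator simultaneously, and reading off the resulting inequality for $Pr\{X_i > \sqrt{|\boldsymbol{\mathcal{Z}}_i|}\cdot \ln|\boldsymbol{\mathcal{Z}}_i|\}$ completes the proof. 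No further estimation is involved; the whole lemma is a calibrated special case of the Chernoff bound, so the burden lies entirely in making the $E[X_i] = 1$ normalization transparent rather than in any inequality manipulation.
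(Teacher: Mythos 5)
Your proposal is correct and follows essentially the same route as the paper's proof: both apply the Chernoff bound of Proposition \ref{prop:2} with the normalization $E[X_i]=1$ and the choice $\boldsymbol{\kappa} = \sqrt{|\boldsymbol{\mathcal{Z}}_i|}\cdot \ln|\boldsymbol{\mathcal{Z}}_i| - 1$, then substitute to read off the claimed fraction. Your version is in fact slightly more explicit than the paper's, since you justify the $E[X_i]=1$ calibration via $p = 1/|\boldsymbol{\mathcal{Z}}_i|$ and verify $\boldsymbol{\kappa}>0$, whereas the paper simply posits these values.
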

where, $|\boldsymbol{\mathcal{Z}}_i|$ is the set of IoT devices that are socially connected to $i^{th}$ IoT device.
\begin{proof}
To prove the result of Lemma \ref{lemma:1}, the Chernoff bounds \cite{T.roughgarden_201618} (Proposition \ref{prop:2}) is utilized. Let us say $\boldsymbol{\kappa} = \sqrt{|\boldsymbol{\mathcal{Z}}_i|} \cdot \ln |\boldsymbol{\mathcal{Z}}_i| - 1$, and $E[X_i] = 1$. Here $|\boldsymbol{\mathcal{Z}}_i| >2$. Substituting the value of $\boldsymbol{\kappa}$ and $E[X_i]$ in equation \ref{equ:chernoff}, we get 
\begin{equation*}
Pr\bigg\{X_i > \bigg(1+\sqrt{|\boldsymbol{\mathcal{Z}}_i|}\cdot \ln |\boldsymbol{\mathcal{Z}}_i| - 1\bigg) \cdot 1 \bigg\} = Pr\bigg\{X_i > \sqrt{|\boldsymbol{\mathcal{Z}}_i|} \cdot \ln |\boldsymbol{\mathcal{Z}}_i|\bigg\} < \bigg(\frac{exp~((1+\boldsymbol{\kappa})E[X])}{(1+\boldsymbol{\kappa})^{(1+\boldsymbol{\kappa})E[X]}}\bigg) 
\end{equation*}

\begin{equation*}
= \frac{exp~\bigg(\bigg(1+\sqrt{|\boldsymbol{\mathcal{Z}}_i|}\cdot \ln |\boldsymbol{\mathcal{Z}}_i| - 1\bigg) \cdot 1\bigg)}{\bigg(1+\sqrt{|\boldsymbol{\mathcal{Z}}_i|}\cdot \ln |\boldsymbol{\mathcal{Z}}_i| - 1\bigg)^{\bigg(1+\sqrt{|\boldsymbol{\mathcal{Z}}_i|}\cdot \ln |\boldsymbol{\mathcal{Z}}_i| - 1\bigg) \cdot 1}} 
\end{equation*}

\begin{equation*}
= \frac{exp~\bigg( \sqrt{|\boldsymbol{\mathcal{Z}}_i|}\cdot \ln |\boldsymbol{\mathcal{Z}}_i| \bigg)}{\bigg( \sqrt{|\boldsymbol{\mathcal{Z}}_i|}\cdot \ln |\boldsymbol{\mathcal{Z}}_i|\bigg)^{ \sqrt{|\boldsymbol{\mathcal{Z}}_i|}\cdot \ln |\boldsymbol{\mathcal{Z}}_i|}} 
\end{equation*}
Hence proved.
\end{proof}

\begin{observation}
If we have $\sqrt{|\boldsymbol{\mathcal{Z}}_i|}\cdot \ln |\boldsymbol{\mathcal{Z}}_i| = 3$ then the probability that any $i^{th}$ IoT device is notifying to at least 3 IoT devices in its social connection is bounded above by is given as:
\begin{equation*}
Pr\{X_i > 3\} < \frac{exp~\bigg( \sqrt{|\boldsymbol{\mathcal{Z}}_i|}\cdot \ln |\boldsymbol{\mathcal{Z}}_i| \bigg)}{\bigg( \sqrt{|\boldsymbol{\mathcal{Z}}_i|}\cdot \ln |\boldsymbol{\mathcal{Z}}_i|\bigg)^{ \sqrt{|\boldsymbol{\mathcal{Z}}_i|}\cdot \ln |\boldsymbol{\mathcal{Z}}_i|}}     
\end{equation*}

\begin{equation*}
= \frac{exp~(3)}{(3)^3}     
\end{equation*}

\begin{equation*}
= 0.7438     
\end{equation*}
Here, it can be seen that with probability at most 0.7438 the $i^{th}$ IoT device will notify about the task execution process to at least 3 IoT devices in its social connection.     
\end{observation}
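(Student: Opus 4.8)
The plan is to obtain the stated tail bound as a direct instantiation of the Chernoff bound recorded in Proposition \ref{prop:2}, with no new probabilistic machinery required. First I would recall from the proof of Theorem \ref{the1} that $X_i = \sum_{j=1}^{|\boldsymbol{\mathcal{Z}}_i|} X_i^j$, where each $X_i^j$ is the indicator random variable for the event that the $i^{th}$ IoT device notifies its $j^{th}$ socially connected neighbor. Since every $X_i^j$ takes values in the range $[0,1]$ (indeed in $\{0,1\}$), the summands are independent across neighbors as assumed in Theorem \ref{theorem:222}, and $X_i$ is their sum, the hypotheses of Proposition \ref{prop:2} are satisfied and the Chernoff bound applies verbatim to $X_i$.

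Next I would fix the two free parameters appearing in the bound. Setting $E[X_i] = 1$ and choosing $\boldsymbol{\kappa} = \sqrt{|\boldsymbol{\mathcal{Z}}_i|}\cdot \ln |\boldsymbol{\mathcal{Z}}_i| - 1$, the deviation threshold collapses to $(1+\boldsymbol{\kappa})\,E[X_i] = \sqrt{|\boldsymbol{\mathcal{Z}}_i|}\cdot \ln |\boldsymbol{\mathcal{Z}}_i|$, which is precisely the value whose exceedance probability we wish to bound. Substituting $\boldsymbol{\kappa}$ and $E[X_i]$ into the right-hand side of equation~\ref{equ:chernoff} and cancelling the $-1$ with the leading $1+$ in both the numerator's argument and the exponent of the denominator yields exactly the claimed quantity $\frac{exp~\big(\sqrt{|\boldsymbol{\mathcal{Z}}_i|}\cdot \ln |\boldsymbol{\mathcal{Z}}_i|\big)}{\big(\sqrt{|\boldsymbol{\mathcal{Z}}_i|}\cdot \ln |\boldsymbol{\mathcal{Z}}_i|\big)^{\sqrt{|\boldsymbol{\mathcal{Z}}_i|}\cdot \ln |\boldsymbol{\mathcal{Z}}_i|}}$, completing the derivation.

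The step requiring the most care is ensuring the legitimacy of the parameter choices rather than any delicate estimate. I must verify that $\boldsymbol{\kappa} > 0$, which is the standing hypothesis of Proposition \ref{prop:2}; this holds exactly when $\sqrt{|\boldsymbol{\mathcal{Z}}_i|}\cdot \ln |\boldsymbol{\mathcal{Z}}_i| > 1$, i.e. whenever $|\boldsymbol{\mathcal{Z}}_i| > 2$, which is the regime imposed in the statement. I would also observe that fixing $E[X_i] = 1$ amounts, via Theorem \ref{the1}, to selecting the notification probability $p = 1/|\boldsymbol{\mathcal{Z}}_i|$, so the lemma describes this particular calibration of the notification process. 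Once these constraints are recorded, the remainder of the argument is a purely mechanical substitute-and-simplify computation and presents no genuine obstacle.
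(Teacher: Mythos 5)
Your proposal is correct and takes essentially the same route as the paper: the paper's own justification is exactly this instantiation of the Chernoff bound in Proposition \ref{prop:2} with $E[X_i]=1$ and $\boldsymbol{\kappa}=\sqrt{|\boldsymbol{\mathcal{Z}}_i|}\cdot\ln|\boldsymbol{\mathcal{Z}}_i|-1$ (this is verbatim the proof of Lemma \ref{lemma:1}, including the side condition $|\boldsymbol{\mathcal{Z}}_i|>2$), after which the observation is just the substitution $\sqrt{|\boldsymbol{\mathcal{Z}}_i|}\cdot\ln|\boldsymbol{\mathcal{Z}}_i|=3$. The only step you leave implicit is the trivial closing arithmetic $e^{3}/3^{3}\approx 0.7438$, which is immediate.
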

\subsection{Analysis of Second Tier}
\label{subsec:AST}
In this, the analysis of the second tier is depicted. In Lemma \ref{lemma:sp1}, it is proved that ECTAI and WiPD run in \emph{polynomial time}. The correctness of ECTAI and WiPD are discussed in Lemma \ref{lemma:sp2} and Lemma \ref{lemma:sp3} respectively. It is shown that on termination, ECTAI and WiPD give the correct output. In Lemma \ref{lemma:8} it is shown any $i^{th}$ IoT device cannot gain by misreporting its peak value $i.e.$ ECTAI is \emph{truthful}. By misreporting the private valuation for the set of tasks, the IoT devices cannot gain $i.e.$ WiPD is \emph{truthful} is shown in Lemma \ref{lemma:ttu}. 
\begin{lemma}
\label{lemma:sp1}
ECTAI and WiPD are computationally efficient.
\end{lemma}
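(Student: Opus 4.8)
The plan is to bound the running time of each algorithm line-by-line and then to show that the number of iterations of their outer loops is polynomial, so that the definition of computational efficiency is met; I treat ECTAI (Algorithm \ref{algo:3}) and WiPD (Algorithm \ref{algo:14}) separately and combine the two bounds at the end.

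For ECTAI I would first observe that $|\mathcal{I}'| \le n$ after the initialization in line 1, and that each pass of the \emph{while} loop (lines 2--18) removes exactly $f$ devices from $\mathcal{I}'$ in line 17; hence the loop runs at most $\lceil n/f \rceil$ times. Within one pass, the two random selections in lines 3--4 cost $O(n)$, the \emph{for} loop of lines 5--8 costs $O(g)$, the sort in line 9 costs $O(g \log g)$, the median computation in lines 10--14 is $O(1)$, the $\argmin$ over $\eta$ in line 15 is $O(f)$, and the update in lines 16--17 is $O(n)$. Summing the per-pass cost and multiplying by the iteration count gives an overall bound of $O\!\big(\tfrac{n}{f}\,(n + g\log g)\big)$, which is polynomial in $n$.

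The harder algorithm is WiPD. The initialization loops in lines 2--4 and 5--7 cost $O(m)$ and $O(\mathcal{N})$ respectively, so the real work is inside the \emph{while}(True) loop (lines 8--25), where two quantities must be controlled. First, the demand query in line 10 is an $\argmin$ over \emph{all} subsets $\mathcal{F}_i \subseteq \boldsymbol{t}\setminus S_i$ and is therefore a priori exponential in $m$; here I would invoke the gross-substitute assumption (Definition \ref{def:gs1}), under which the demand set $D_i(\boldsymbol{\rho})$ is computable by a greedy marginal-value procedure in time polynomial in $m$. Second, I would bound the number of passes of the \emph{while} loop: each pass in which some $\mathcal{F}_i \neq \phi$ raises the price of at least one task by $\epsilon$ (line 19), and no device ever demands a task whose price exceeds its marginal valuation, so every task price stays capped by $v_{\max} := \max_i v_i(\boldsymbol{t})$. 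Consequently the total number of $\epsilon$-increments is at most $m\,v_{\max}/\epsilon$, and the loop terminates after that many passes, each pass costing $O(\mathcal{N}\cdot \mathrm{poly}(m))$ once the demand oracle is polynomial.

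The main obstacle is exactly this termination argument for the ascending-price loop of WiPD: showing that the prices are bounded and that the process is monotone is what turns the bare \emph{while}(True) into a (pseudo-)polynomial number of iterations, and it is the gross-substitute structure that makes both the per-step demand query and the global termination tractable. Once the iteration count and the per-iteration demand-oracle cost are both established as polynomial, combining them with the ECTAI bound shows that every step of both mechanisms runs in polynomial time, which is precisely the claim of computational efficiency.
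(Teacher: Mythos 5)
Your proposal is correct, and for ECTAI it is essentially the paper's argument (line-by-line accounting), though you are actually more precise: the paper charges the \emph{while} loop with $n$ iterations, whereas you correctly note that line 17 removes $f$ devices per pass, giving $\lceil n/f \rceil$ iterations; both yield polynomial bounds. For WiPD, however, you take a genuinely different and substantially more rigorous route. The paper simply \emph{posits} that the \emph{while}(True) loop runs for some $\lambda$ (later $x$) iterations and then asserts $O(n^3)$ under the assumption that ``$x$ is a fraction of $n$''; it also charges the demand query in line 10 a cost of $O(\mathcal{N})$ without addressing the fact that the $\argmin$ ranges over all subsets of $\boldsymbol{t}\setminus S_i$. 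You identify and repair exactly these two weak points: you invoke the gross-substitute assumption (Definition \ref{def:gs1}) to implement the demand query greedily in time polynomial in $m$, and you bound the number of loop passes by the ascending-price (Kelso--Crawford style) argument that each nontrivial pass raises some task price by $\epsilon$ while prices stay capped by $v_{\max}$, giving at most $m\,v_{\max}/\epsilon$ increments. What the paper's approach buys is a clean-looking $O(n^3)$ statement, but only by assuming what needs to be proved (termination in linearly many rounds) and ignoring the oracle cost; what your approach buys is an actual termination proof and a tractable per-step cost, at the price of the bound being pseudo-polynomial in $v_{\max}/\epsilon$ rather than polynomial in $n$ alone. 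If you wanted to match the paper's claim as stated, you would need either to treat $v_{\max}/\epsilon$ as polynomially bounded in the input size or to treat line 10 as a unit-cost demand oracle; stating this hypothesis explicitly would make your version strictly stronger than the published proof.
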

\begin{proof}
The running time of ECTAI and WiPD is the running time of  Algorithm \ref{algo:3} and Algorithm \ref{algo:14} respectively. So, let us determine the running time of each of the algorithms one by one.\\
\indent In line 1 of Algorithm \ref{algo:3}, the initialization is done and will take $O(1)$ time. For each iteration of \emph{while} loop in lines 2-18, line 3 will take $O(n)$ time. Line 4 is bounded above by $O(n)$. Lines 5-8 are bounded above by $O(g)$. Line 9 sorts the peak values of the IoT devices and takes $O(g \lg g)$, where $g$ is the number of IoT devices present in $\mathcal{R}$. Lines 10-14 will take $O(1)$ time. Line 15 will take $O(f)$ time. Line 16  will take constant time. For removing $\eta$ IoT devices from $\mathcal{I}'$, it will take $O(n)$ time. So, the time taken by lines 2-18 for each iteration of \emph{while} loop is given as $O(n) + O(g) + O(g \lg g) + O(f) + O(1) + O(n) = O(g \lg g)$, if $g$ is a function of $n$ then it can be written as $O(n \lg n)$. As the \emph{while} loop will iterate for $n$ times, so lines 2-18 are bounded above by $O(n^2 \lg n)$. So, the running time of Algorithm \ref{algo:3} is $O(1) + O(n^2 \lg n) + O(1) = O(n^2 \lg n)$.\\    
\indent In Algorithm \ref{algo:14}, line 1 will take constant time. Lines 2-4 will iterate for $m$ times and are bounded above by $O(m)$. Lines 5-7 will iterate for $\mathcal{N}$ times and are bounded above by $O(\mathcal{N})$, where $\mathcal{N}$ is the number of quality IoT devices in set $\boldsymbol{\mathcal{O}}$. Let us say the \emph{while} loop in lines 8-25 iterates for $\lambda$ times. For each iteration of \emph{while} loop, the \emph{for} loop in lines 9-24 will iterate for $\mathcal{N}$ times. Line 10 is bounded above by $\mathcal{N}$. Lines 11-17 will take $O(1)$ time. Lines 18 and 19 are bounded above by $O(n^2)$. Lines 20-22 will take constant time. So, lines 9-24 are bounded above by $O(m) + O(\mathcal{N}) + O(n^2) = O(n^2)$ for each iteration of \emph{while} loop. So, \emph{while} loop in lines 8-25 takes $O(xn^2)$. If $x$ is a fraction of $n$ then it is rewritten as $O(n^3)$. The time taken by Algorithm \ref{algo:14} is $O(1) + O(m) + O(n) + O(n^3) + O(1) = O(n^3)$.\\   \indent From above it can be seen that Algorithm \ref{algo:3} and Algorithm \ref{algo:14} are bounded above by $O(n^2 \lg n)$ and $O(n^3)$. Hence, ECTAI and WiPD are \emph{computationally efficient}.   
\end{proof}

\begin{lemma}
\label{lemma:sp2}
ECTAI works correctly.
\end{lemma}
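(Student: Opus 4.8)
The plan is to establish correctness of ECTAI (Algorithm \ref{algo:3}) by the same loop-invariant technique used for TENM in Lemma \ref{lemma:p2}, since the entire logic of ECTAI resides in the \emph{while} loop of lines 2--18. I would adopt the loop invariant: \emph{at the start of each iteration of the while loop, the set $\boldsymbol{\mathcal{O}}$ contains exactly one quality IoT device for every batch of $f$ devices already removed from $\mathcal{I}'$, and $\mathcal{I}'$ holds precisely the devices not yet ranked.}

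For \textbf{Initialization}, before the first iteration line 1 sets $\mathcal{I}' = \mathcal{S} \cup \bar{\mathcal{S}}$ and $\boldsymbol{\mathcal{O}} = \phi$, so the invariant holds vacuously because no batch has been processed. For \textbf{Maintenance}, I would argue that a single pass of lines 3--17 correctly (i) draws a fresh batch $\eta$ of $f$ unranked devices and a disjoint reviewer set $\alpha$ of $g$ devices, (ii) collects their peak values into $\mathcal{R}$ and sorts them in line 9, (iii) computes the median $\hat{\mathcal{R}}$ correctly in both the odd case (line 11) and even case (line 13), (iv) selects via line 15 the $\eta$-device minimizing $|\alpha_i - \hat{\mathcal{R}}|$ and appends it to $\boldsymbol{\mathcal{O}}$, and (v) deletes $\eta$ from $\mathcal{I}'$ in line 17; hence after the iteration one more batch has been processed and exactly one more quality device recorded, so the invariant is preserved. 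For \textbf{Termination}, the loop exits only when $\mathcal{I}' = \phi$, i.e. when every device has been placed in some batch and ranked, at which point $\boldsymbol{\mathcal{O}}$ holds the complete set of quality IoT devices that is returned in line 19. Since the algorithm returns the desired output, it is correct.

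The step I expect to require the most care is the \textbf{termination} argument, because line 17 removes exactly $f$ elements per iteration, so the loop reaches $\mathcal{I}' = \phi$ cleanly only if $|\mathcal{S} \cup \bar{\mathcal{S}}|$ is divisible by $f$; otherwise the final \textsc{Random Selection} in line 3 cannot return a full batch of size $f$. I would address this either by assuming $f \mid |\mathcal{S} \cup \bar{\mathcal{S}}|$, or by treating the last iteration as ranking the remaining fewer-than-$f$ devices, and I would verify that the median computation in lines 10--14 and the $\argmin$ in line 15 remain well defined for such a partial final batch. A secondary point worth a sentence is that the disjointness $\alpha \subseteq \mathcal{I}'' \setminus \eta$ guarantees reviewers are distinct from the devices being ranked, which is what makes each reported peak value a meaningful assessment within the single-peaked framework.
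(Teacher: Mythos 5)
Your proof takes essentially the same route as the paper's: a loop-invariant argument on the \emph{while} loop of lines 2--18 with initialization, maintenance, and termination steps, concluding that $\boldsymbol{\mathcal{O}}$ holds the set of quality IoT devices upon exit. Your invariant is in fact stated more precisely than the paper's (which only asserts that one quality device is added per iteration), and your flagged concern about the final partial batch when $f \nmid |\mathcal{S} \cup \bar{\mathcal{S}}|$ is a genuine edge case that the paper's proof silently passes over.
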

\begin{proof}
The proof of correctness of ECTAI is done using loop invariant technique \cite{Coreman_2009}. \\

\indent \textbf{Proof of Correctness of Algorithm \ref{algo:3}:} To prove that Algorithm \ref{algo:3} is correct, the following loop invariant is considered:
    \begin{mdframed}[backgroundcolor=gray230]
\textbf{Loop invariant:} In each iteration of \emph{while} loop of lines 2-18, a set of quality IoT devices is determined and is added in $\boldsymbol{\mathcal{O}}$.
\end{mdframed}
\begin{itemize}
\item \textbf{Initialization:} We can start by showing that loop invariant hold before the first iteration of \emph{while} loop when $\boldsymbol{\mathcal{O}} = \phi$. The output set $\boldsymbol{\mathcal{O}}$ has no quality IoT devices before the first iteration. So, the loop invariant holds.
\item \textbf{Maintenance:} For the loop invariant to be true, it is to be shown that before any $l^{th}$ iteration of the \emph{while} loop and after $l^{th}$ iteration of the \emph{while} loop the loop invariant holds. Before $l^{th}$ iteration, $i.e.$ till $(l-1)^{th}$ iteration there will be $(l-1)$ quality IoT devices in a set $\boldsymbol{\mathcal{O}}$. After $l^{th}$ iteration, the set $\boldsymbol{\mathcal{O}}$ will contain $l$ IoT devices. So, the loop invariant holds.
\item \textbf{Termination:} From the construction of Algorithm \ref{algo:3}, it is clear that the \emph{while} loop will terminate only when all the IoT devices are processed for determining the set of quality IoT devices.
\end{itemize}
Hence, Algorithm \ref{algo:3} is correct and so is the ECTAI.
\end{proof}

\begin{lemma}
\label{lemma:sp3}
WiPD works correctly.
\end{lemma}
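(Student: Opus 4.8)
The plan is to establish the correctness of WiPD (Algorithm \ref{algo:14}) with the same \emph{loop invariant} technique \cite{Coreman_2009} used in Lemmas \ref{lemma:p2} and \ref{lemma:sp2}. Since WiPD is an ascending-price procedure, the relevant loop is the \emph{while} loop of lines 8--25, and the invariant I would maintain records that the running pair $(\mathcal{A}, \boldsymbol{\rho})$ is always \emph{consistent}: each quality IoT device holds exactly the bundle it has demanded at the current prices, the held bundles are pairwise disjoint, and each recorded payment equals the sum of the current prices of the tasks in the device's bundle. Concretely:

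\begin{mdframed}[backgroundcolor=gray230]
\textbf{Loop invariant:} At the start of every iteration of the \emph{while} loop of lines 8--25, the allocation vector $\mathcal{A}$ assigns to each IoT device $\mathcal{I}_i \in \boldsymbol{\mathcal{O}}$ the set $\mathcal{S}_i$ of tasks it currently demands at the price vector $\boldsymbol{\rho}$, the sets $\{\mathcal{S}_i\}$ are pairwise disjoint subsets of $\boldsymbol{t}$, and $\boldsymbol{\rho}_i = \sum_{j \in \mathcal{S}_i} \boldsymbol{\rho}(j)$.
\end{mdframed}

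For \textbf{Initialization} I would note that lines 2--7 set every $\boldsymbol{\rho}(j) = 0$, every $\mathcal{A}_i = \phi$, and every $\boldsymbol{\rho}_i = 0$ before the first iteration, so the invariant holds vacuously. For \textbf{Maintenance} I would show that one pass of the inner \emph{for} loop (lines 9--24) preserves it: if the demand set $\mathcal{F}_i$ computed in line 10 is empty, lines 12--15 merely finalize $\mathcal{A}_i$ and store $\boldsymbol{\rho}_i$, leaving consistency untouched; if $\mathcal{F}_i \neq \phi$, line 17 enlarges $\mathcal{S}_i$, line 18 removes the newly claimed tasks from every other device's bundle (preserving disjointness), line 19 raises the price of each task in $\mathcal{F}_i$ by $\epsilon$, and line 20 recomputes $\boldsymbol{\rho}_i$, so $(\mathcal{A}, \boldsymbol{\rho})$ again satisfies the invariant. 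For \textbf{Termination} I would observe that the loop exits exactly when $\mathcal{F}_i = \phi$ for all $\mathcal{I}_i \in \boldsymbol{\mathcal{O}}$; the invariant then guarantees that every device holds its demanded bundle at the final prices, so line 26 returns a consistent allocation and payment vector, which is the desired output.

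The hard part will be that the loop is coded as \texttt{while(True)}, so termination does not follow from the control flow and must be argued separately. I would base this on price monotonicity together with the demand rule in line 10: every execution of line 19 strictly raises at least one price by $\epsilon$ and no step ever lowers a price, while line 10 selects each demand bundle by minimizing total price minus valuation, so a task whose price has risen above the marginal valuation a device assigns to it only worsens that objective and is therefore dropped rather than demanded. Because the valuations are monotone and bounded ($v_i(\phi) = 0$ and $v_i(S) \leq v_i(\boldsymbol{t})$), all prices thus stay bounded above; a total price confined to a bounded interval and advancing only in discrete $\epsilon$ steps can change only finitely often, which forces the \emph{while} loop to halt. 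Finally, I would invoke the \emph{gross-substitute} assumption (Definition \ref{def:gs1}) to conclude that the allocation reached at termination is competitive, so that WiPD returns the correct winners and payments. I expect the boundedness-of-prices step, not the routine invariant bookkeeping, to be the most delicate point.
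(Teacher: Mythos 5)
Your proposal is correct and rests on the same skeleton as the paper's proof, namely the loop-invariant technique of \cite{Coreman_2009} applied to the \emph{while} loop of lines 8--25, but it is genuinely stronger in two places. First, your invariant is sharper: the paper's invariant only says that in each iteration ``a set of IoT devices as winners is determined and the winner's payment is decided,'' and its maintenance step merely counts that $\mathcal{A}$ grows from $k$ to $k+x$ entries; yours pins down \emph{what} is true of $(\mathcal{A}, \boldsymbol{\rho})$ at every iteration (each $\mathcal{S}_i$ is the demand of $\mathcal{I}_i$ at the current prices, the bundles are pairwise disjoint, and $\boldsymbol{\rho}_i = \sum_{j \in \mathcal{S}_i} \boldsymbol{\rho}(j)$), which is what one actually needs to conclude at line 26 that the returned pair is a consistent allocation-plus-payment. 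Second, and more importantly, you supply a termination argument that the paper does not have: the paper's ``Termination'' bullet only identifies the exit condition ($\mathcal{F}_i = \phi$ for all $\mathcal{I}_i \in \boldsymbol{\mathcal{O}}$) without showing this condition is ever reached, which is a real gap since the loop is coded as \texttt{while(True)}. Your argument---prices never decrease, each demand raises some price by a discrete $\epsilon$, boundedness of the valuations caps how high any price can rise before a task is dropped from every demand set, hence only finitely many price increments can occur---is the standard tatonnement/ascending-auction termination argument and closes that gap. The trade-off is exactly what you anticipate: the paper buys brevity by treating termination as self-evident, while your route costs the extra boundedness lemma but yields a proof in which both the invariant and the halting of the loop are actually established; the closing appeal to the gross-substitute condition (Definition \ref{def:gs1}) for competitiveness of the final allocation is a further strengthening that the paper's notion of ``correctness'' does not even attempt.
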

\begin{proof}
The proof of correctness of WiPD is done using loop invariant technique \cite{Coreman_2009}. \\

\indent \textbf{Proof of Correctness of Algorithm \ref{algo:14}:} To prove that Algorithm \ref{algo:14} is correct, the following loop invariant is considered:
    \begin{mdframed}[backgroundcolor=gray230]
\textbf{Loop invariant:} In each iteration of \emph{while} loop of lines 8-25, a set of IoT devices  as winners for each of the tasks are determined and the winner's payment is decided.
\end{mdframed}
\begin{itemize}
\item \textbf{Initialization:} We can start by showing that loop invariant holds before the first iteration of \emph{while} loop, the sets  $\mathcal{A} = \phi$ and $\boldsymbol{\rho} = \phi$. The output sets $\mathcal{A} = \phi$ and $\boldsymbol{\rho} = \phi$ have no IoT devices as winners and no payment. So, the loop invariant holds.
\item \textbf{Maintenance:} For the loop invariant to be true, it is to be shown that before any $l^{th}$ iteration of the \emph{while} loop and after $l^{th}$ iteration of the \emph{while} loop the loop invariant holds. Before $l^{th}$ iteration, say, some $k$ sets of tasks of $k$ IoT devices are held in $\mathcal{A}$, and $\boldsymbol{\rho}$ holds the payment of $k$ IoT devices. After $l^{th}$ iteration, the set $\mathcal{A}$ may contain some $(k+x)$ sets of tasks of $(k+x)$ IoT devices, and the set $\boldsymbol{\rho}$ contains the payment of $k+x$ IoT devices. So, the loop invariant holds.
\item \textbf{Termination:} From the construction of Algorithm \ref{algo:14}, it is clear that the \emph{while} loop will terminate only when the requested set of additional tasks $i.e.$ $\mathcal{F}_i$ for all $\mathcal{I}_i \in \boldsymbol{\mathcal{O}}$ is $\phi$.
\end{itemize}
Hence, Algorithm \ref{algo:14} is correct, so as the WiPD\\
\end{proof}

\begin{proposition}
The median voting rule $\cite{T.roughgarden_20163}$ is truthful.
\end{proposition}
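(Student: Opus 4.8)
The plan is to prove strategy-proofness of the median rule by exploiting the one-dimensional structure of single-peaked preferences together with the monotonicity of the median as a function of a single agent's report. First I would fix an arbitrary reviewer $\mathcal{I}_i$ with true peak $\alpha_i$ and freeze the reported peaks $\alpha_{-i}$ of all the other reviewers in $\mathcal{R}$. Writing the outcome as $o(\hat\alpha_i) = \hat{\mathcal{R}}(\hat\alpha_i, \alpha_{-i})$, the central observation I would establish is that, viewed as a function of $\mathcal{I}_i$'s own report $\hat\alpha_i$ alone, the chosen median is a \emph{clamped, monotone non-decreasing} function: there exist two thresholds $L \leq H$ determined solely by $\alpha_{-i}$ such that $o(\hat\alpha_i) = \mathrm{med}\{L, \hat\alpha_i, H\} = \max\{L, \min\{\hat\alpha_i, H\}\}$. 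Intuitively $L$ and $H$ are the two medians obtained when $\mathcal{I}_i$ pushes its report to the extreme left and extreme right of the scale $[0,1]$, and in between the agent can ``drag'' the median only as far as its own reported value.

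Given this structural fact, the remaining argument is a short case analysis on the location of the true peak $\alpha_i$ relative to $[L,H]$, using single-peakedness (utility strictly decreasing as the outcome moves away from $\alpha_i$ on either side). If $\alpha_i \in [L,H]$, then truthful reporting yields $o(\alpha_i) = \alpha_i$, the agent's ideal point, so no deviation can help. If $\alpha_i < L$, then every attainable outcome is at least $L$, and truthful reporting already attains the minimum attainable value $L$, which is the feasible point closest to $\alpha_i$; by single-peakedness this is $\mathcal{I}_i$'s most preferred attainable outcome. The case $\alpha_i > H$ is symmetric. In all three cases truthful reporting weakly dominates every misreport, which is precisely incentive compatibility of the peak reports, and hence the median step of ECTAI is truthful.

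The step I expect to be the main obstacle is making the clamping identity $o(\hat\alpha_i) = \mathrm{med}\{L, \hat\alpha_i, H\}$ precise under the exact median convention used in Algorithm \ref{algo:3}. For an odd number $|\alpha|$ of reviewers the median is a single reported order statistic and the identity is immediate; but for an even $|\alpha|$ the algorithm averages the two central order statistics (line 13), so I would need to verify that the outcome remains monotone non-decreasing in $\hat\alpha_i$ and that an agent can never overshoot its own peak by a unilateral deviation. I would handle this by tracking how $\mathcal{I}_i$'s report changes its rank among the sorted values in $\mathcal{R}$ and arguing that moving the report toward $\alpha_i$ can only move the (possibly averaged) median weakly toward $\alpha_i$, never past it. Once this monotonicity-without-overshoot property is secured, the case analysis above closes the proof, and the result can otherwise be invoked directly from the single-peaked strategy-proofness theorem of \cite{T.roughgarden_20163}.
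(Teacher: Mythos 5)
Your core argument is correct, but note that the paper never actually proves this proposition: it is imported by citation from \cite{T.roughgarden_20163}, and the only in-paper truthfulness argument is the proof of Lemma \ref{lemma:8}, which handles ECTAI by an informal two-case analysis ($\alpha_i' < \alpha_i$ versus $\alpha_i' > \alpha_i$) supported by pictures, asserting that an underreport leaves the median unchanged and an overreport either leaves it unchanged or pushes it away from the true peak. Your route is the standard and substantially sharper one: you first establish the structural lemma that, with the other reports frozen, the outcome is the clamped monotone function $o(\hat\alpha_i)=\max\{L,\min\{\hat\alpha_i,H\}\}$ with $L$ and $H$ depending only on $\alpha_{-i}$, and then perform the case analysis on where the true peak sits relative to $[L,H]$. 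This buys rigor and generality that the paper's argument lacks: the paper's claim that lowering one's report never moves the median is true only for the particular configuration it implicitly assumes (the deviator sitting on one side of the median), whereas your clamping identity covers all configurations at once and makes the ``truthful reporting attains the nearest attainable point'' conclusion immediate.

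There is, however, a genuine problem in the step you flagged as the main obstacle. For an even number of reviewers, Algorithm \ref{algo:3} (line 13) averages the two central order statistics, and for that rule the property you hope to verify --- that moving your report toward your peak moves the outcome weakly toward your peak and never past it --- is simply false, and the rule is in fact manipulable. Take two reviewers, the other reporting $1$, and your true peak $\alpha_i = 0.4$: truthful reporting yields the outcome $(0.4+1)/2 = 0.7$, while misreporting $0$ yields $0.5$, which is strictly closer to your peak. The clamping identity fails here because $o(\hat\alpha_i) = (\hat\alpha_i + 1)/2$ never coincides with your report, so truthful reporting does not attain the optimum over the attainable set $[0.5,\,1]$. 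No amount of rank-tracking will rescue the averaged-median convention; the proposition (and the cited result) holds only for an order-statistic median --- odd $|\alpha|$, or a convention that selects one of the two middle reports, or one that inserts fixed phantom peaks. Your proof is complete and correct once restricted to that convention, but the even-$|\alpha|$ branch of Algorithm \ref{algo:3} genuinely falls outside the proposition's scope --- a gap that lies in the paper as much as in your proposal.
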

\begin{lemma}
\label{lemma:8}
ECTAI is truthful.
\end{lemma}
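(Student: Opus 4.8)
The plan is to reduce the truthfulness of ECTAI to the strategyproofness of the median voting rule, which is already available as the preceding proposition. The only private information an IoT device reports inside Algorithm \ref{algo:3} is its peak value $\alpha_i \in [0,1]$ (line 6), submitted while it sits in the reviewer set $\alpha$; hence the only deviation I must rule out is a reviewer $\mathcal{I}_i$ reporting some $\hat{\alpha}_i \neq \alpha_i$. I would first fix the semantics of a reviewer's utility: since $\alpha_i$ marks the location of $\mathcal{I}_i$'s most preferred task executor on the scale $[0,1]$, $\mathcal{I}_i$ prefers the device ultimately selected from $\eta$ to be placed as close to $\alpha_i$ as possible, so its preference over positions in $[0,1]$ is single-peaked with peak $\alpha_i$ and utility strictly decreasing in the distance $|\,\cdot-\alpha_i|$.

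Next I would make explicit that ECTAI's selection rule is exactly the median voting rule post-composed with a monotone tie-break. Lines 9--14 compute the median $\hat{\mathcal{R}}$ of the reported peaks, and line 15 returns $\argmin_{\mathcal{I}_i\in\eta}|\alpha_i-\hat{\mathcal{R}}|$, the device in $\eta$ nearest to $\hat{\mathcal{R}}$. Writing $g(\hat{\mathcal{R}})$ for this nearest-device map, the outcome as a function of the reported profile is $g\circ\mathrm{median}$. The crucial structural fact I would establish is that $g$ is weakly monotone: as $\hat{\mathcal{R}}$ increases, the position of the nearest device in $\eta$ is non-decreasing. I would then invoke the standard case analysis underlying the median's strategyproofness: if $\alpha_i$ lies weakly above the current median, any report either pushes the median up (impossible, since $\mathcal{I}_i$'s own report already lies on the upper side and cannot drag the middle order statistic further toward $\alpha_i$) or pushes it down and away from $\alpha_i$; the situation is symmetric when $\alpha_i$ lies weakly below. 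Thus no reviewer can move $\hat{\mathcal{R}}$ strictly closer to its own peak.

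Combining the two pieces finishes the argument: because $\mathcal{I}_i$ cannot move the median toward $\alpha_i$, and because $g$ is weakly monotone, $\mathcal{I}_i$ cannot move the selected position toward $\alpha_i$ either, so by single-peakedness its utility cannot increase under any $\hat{\alpha}_i$. Hence truthful reporting of $\alpha_i$ is a dominant strategy and ECTAI is truthful, as asserted via the preceding proposition on the median voting rule. The main obstacle I anticipate is precisely the composition step: a deviation that leaves the median value unchanged could in principle still alter which device is ``nearest'' through ties or through a discontinuous jump of $g$. I would therefore need to argue that the weak monotonicity of $g$, together with a fixed deterministic tie-break, prevents any such jump from landing strictly closer to $\alpha_i$ than the truthful outcome, and to verify that the even-cardinality averaging in line 13 does not open an additional manipulation channel.
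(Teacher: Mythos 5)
Your proposal is correct in its essentials and runs on the same engine as the paper's proof---the standard case analysis showing that no reporter can move the median $\hat{\mathcal{R}}$ strictly toward its own peak---but it takes a genuinely more refined route, because the two arguments attach the reviewer's utility to different objects. The paper's proof sets $u_i = |\alpha_i - \hat{\mathcal{R}}|$, i.e., it measures a reviewer's welfare by the distance from its peak to the \emph{median itself}, restricts attention to the case where $\alpha_i$ lies to the left of $\hat{\mathcal{R}}$, and checks that underreporting leaves the median fixed while overreporting either leaves it fixed or pushes it rightward, away from $\alpha_i$; the nearest-device selection in line 15 of Algorithm \ref{algo:3} plays no role in the paper's argument. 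You instead define utility with respect to the device actually returned, $|\,\mathrm{sel}(\hat{\mathcal{R}}) - \alpha_i|$ where $\mathrm{sel}$ denotes the nearest-device map, which is more faithful to what the algorithm outputs, and you then must prove that the composition $\mathrm{sel}\circ\mathrm{median}$ remains strategyproof. What your route buys is precisely the step the paper silently skips; what it costs is that this composition step needs a real argument.

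Two remarks on the step you flag as the obstacle. First, weak monotonicity of $\mathrm{sel}$ is not by itself sufficient: if the truthful median is $m$ and $\mathrm{sel}(m) < \alpha_i < m$, a deviation can only push the median up to some $m' > m$ (away from $\alpha_i$), and monotonicity alone does not preclude the resulting upward jump of $\mathrm{sel}$ from landing strictly closer to $\alpha_i$. What closes the argument is the defining property of $\mathrm{sel}$: no device in $\eta$ lies strictly inside the interval $\bigl(\mathrm{sel}(m),\, 2m - \mathrm{sel}(m)\bigr)$, so any jump target $d$ satisfies $d \geq 2m - \mathrm{sel}(m)$, hence $|d - \alpha_i| \geq 2m - \mathrm{sel}(m) - \alpha_i > \alpha_i - \mathrm{sel}(m)$ because $m > \alpha_i$; the case $\alpha_i \leq \mathrm{sel}(m)$ follows directly from monotonicity, the case $\alpha_i > m$ is symmetric, and ties produce only equalities. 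Second, your worry about line 13 is well founded and is in fact a gap in the paper's own lemma: the average-of-the-two-middle-values rule for even $|\alpha|$ is manipulable---with reports $0.2, 0.4, 0.6, 0.8$ the reviewer with peak $0.4$ moves the median from $0.5$ to $0.4$ by reporting $0.1$, a strict gain under either utility semantics---so truthfulness genuinely requires $|\alpha|$ odd, an assumption that neither the paper's statement nor its proof records. A final cosmetic point: the symbol $g$ you chose for the nearest-device map collides with the paper's $g$, the number of reviewing IoT devices, so it should be renamed.
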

\begin{proof}
To prove that ECTAI is \emph{truthful}, it is sufficient to prove that the participating IoT devices are not gaining by misreporting their privately held peak value. Consider any $i^{th}$ IoT device. Let us say that the reported peak value of $i^{th}$ IoT device lies to the left of the median $\mathcal{\hat{R}}$ on the scale of $[0, 1]$. $\mathcal{\hat{R}}$ is obtained between 0 and 1 on the scale of $[0, 1]$ when all the participating IoT devices are reporting their peak value \emph{truthfully}. Any $i^{th}$ IoT device can misreport its peak value in two different ways (considered as two cases in the proof): (1) by reporting a lower peak value from its true peak value, and (2) by reporting a higher peak value from its true peak value. Let us consider the two cases one by one.
\begin{itemize}
    \item \textbf{Case 1 ($\alpha_i' < \alpha_i$):} In this case, it is considered that any $i^{th}$ IoT device reports a lower peak value from its true peak value. In the case when true peak value was reported, the utility of $i^{th}$ IoT device is given as $u_i = |\alpha_i - \hat{\mathcal{R}}|$. The pictorial representation of the true peak value scenario is depicted in Figure \ref{fig:ex111}. Now, when $i^{th}$ IoT device lowers its peak value $i.e.~ \alpha_i' < \alpha_i$, in such case the resultant peak value will be the same as the resultant peak value when $i^{th}$ IoT device reports its peak value in a truthful manner $i.e.$ $\hat{R}' = \hat{R}$. If that is the case, then the utility of $i^{th}$ IoT device will be same as the utility of $i^{th}$ IoT device when reported truthfully $i.e.$ $u_i' = |\alpha_i - \hat{\mathcal{R}}'| = u_i$.      
    \begin{figure}[H]
     \centering
     \begin{subfigure}[h]{0.42\textwidth}
         \centering
         \includegraphics[width=\textwidth]{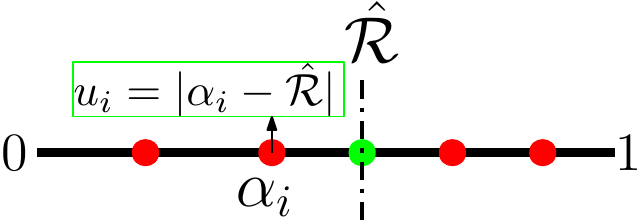}
         \caption{$i^{th}$ IoT device reporting true peak value}
         \label{fig:ex111}
     \end{subfigure}
     \hfill
     \begin{subfigure}[h]{0.42\textwidth}
         \centering
         \includegraphics[width=\textwidth]{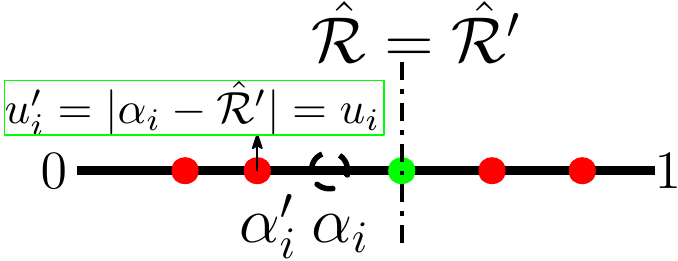}
         \caption{$i^{th}$ IoT device misreporting peak value ($\alpha_i' < \alpha_i$)}
         \label{fig:ex1112}
     \end{subfigure}
     \caption{Pictorial representation of the case when $i^{th}$ IoT device reports true peak value and lower peak value from the true peak value}
         \label{fig:ex1113}
     \end{figure}
    \item \textbf{Case 2 ($\alpha_i' > \alpha_i$):} In this case, it is considered that any $i^{th}$ IoT device reports a higher peak value from its true peak value. In the case when true peak value was reported, the utility of $i^{th}$ IoT device is given as $u_i = |\alpha_i - \hat{\mathcal{R}}|$. The pictorial representation of the true peak value scenario is depicted in Figure \ref{fig:ex111}. Now, when $i^{th}$ IoT device increases its peak value $i.e.~ \alpha_i' > \alpha_i$ again the two cases can occur. In the first case, it may happen that the $i^{th}$ IoT device increases its peak value and the increased peak value still lies on to the left of the resultant peak value obtained when reported \emph{truthfully} as depicted in Figure \ref{fig:ex111b}. In such case, the resultant peak value will be the same as the resultant peak value when the true peak value was reported by $i^{th}$ IoT device $i.e.$ $\hat{R}' = \hat{R}$. If that is the case, then the utility of $i^{th}$ IoT device will be same as the utility of $i^{th}$ IoT device when reported \emph{truthfully} $i.e.$ $u_i' = |\alpha_i - \hat{\mathcal{R}}'| = u_i$. Another case could be that when $i^{th}$ IoT device reported an increased peak value $i.e.~ \alpha_i' > \alpha_i$ and it crosses the resultant peak value obtained when all the IoT devices were reporting truthfully as depicted in Figure \ref{fig:ex1112b}. In such case the resultant peak value will be shifted to the right of the current resultant peak value as shown in Figure \ref{fig:ex1112b} and the utility of $i^{th}$ IoT device will be $u_i' = |\alpha_i - \hat{\mathcal{R}}'| > u_i$. It means that the resultant peak value moved away from the true peak value. Hence, it's a loss. So, with the increase in peak value, the utility of $i^{th}$ IoT device is remaining the same or is becoming worse.         
    \begin{figure}[H]
     \centering
     \begin{subfigure}[h]{0.42\textwidth}
         \centering
         \includegraphics[width=\textwidth]{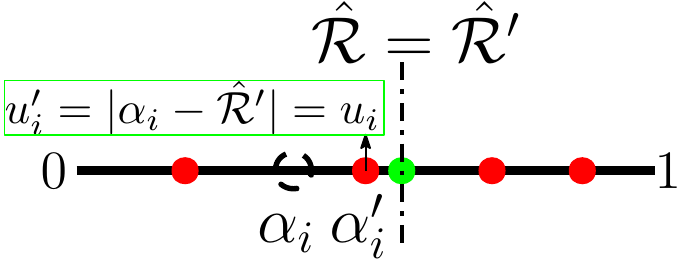}
         \caption{$i^{th}$ IoT device misreporting peak value ($\alpha_i' > \alpha_i$)}
         \label{fig:ex111b}
     \end{subfigure}
     \hfill
     \begin{subfigure}[h]{0.45\textwidth}
         \centering
         \includegraphics[width=\textwidth]{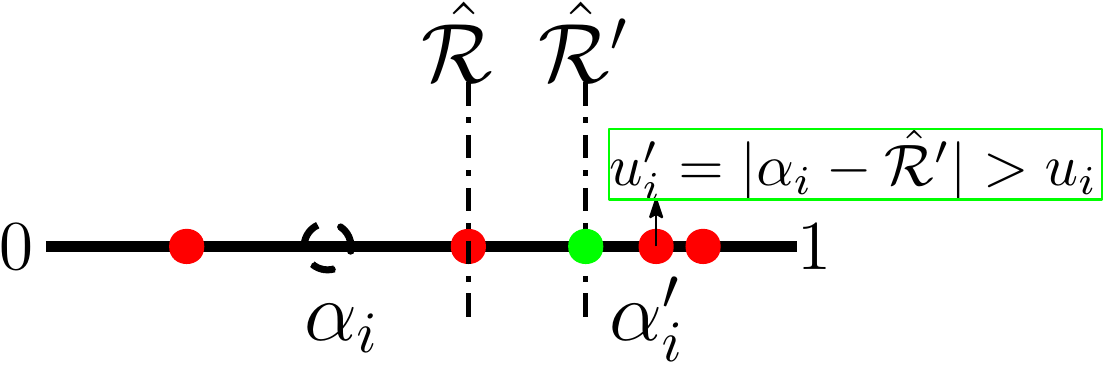}
         \caption{$i^{th}$ IoT device misreporting peak value ($\alpha_i' > \alpha_i$)}
         \label{fig:ex1112b}
     \end{subfigure}
     \caption{Pictorial representation of the case when $i^{th}$ IoT device reports a higher peak value from the true peak value.}
         \label{fig:ex1113b}
     \end{figure}
    \end{itemize}
    From the above-discussed two cases, it can be seen that the participating IoT devices are not gaining by misreporting their peak value. Hence, ECTAI is \emph{truthful}.
    \end{proof}
    
    \begin{lemma}
    \label{lemma:ttu}
    WiPD is truthful.
    \end{lemma}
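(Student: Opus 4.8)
The plan is to recognize Algorithm \ref{algo:14} as an ascending item-price (demand-query) auction and to reduce its truthfulness to that of the VCG mechanism. At each step an agent $\mathcal{I}_i$ is asked only to report, through line 10, a utility-maximizing bundle at the current prices — that is, an element of its demand correspondence $D_i(\boldsymbol{\rho})$ — and the only way for $\mathcal{I}_i$ to deviate is to answer these demand queries according to a false valuation $\hat{v}_i \neq v_i$. So it suffices to show that answering every query with a bundle drawn from the \emph{true} demand set $D_i(\boldsymbol{\rho})$ weakly dominates any other response.

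First I would argue that, on termination, WiPD outputs a (competitive) Walrasian equilibrium: the \emph{while} loop halts precisely when no agent wishes to add tasks, so each winner holds a bundle $S_i \in D_i(\boldsymbol{\rho})$ at the final price vector while the prices of unallocated tasks are never raised. Second, I would invoke the gross-substitute hypothesis (Definition \ref{def:gs1}): it guarantees that a Walrasian equilibrium exists and that the set of equilibrium price vectors forms a lattice, so the auction — which starts from the zero price vector and raises prices monotonically in units of $\epsilon$ — converges to the \emph{minimum} Walrasian price vector. Third, and this is the key reduction, gross substitutes implies that the optimal social welfare is submodular in the set of participating agents (the ``agents-are-substitutes'' condition), and under this condition the minimum Walrasian prices coincide with the VCG payments; hence the allocation-and-payment pair produced by WiPD is exactly the VCG outcome. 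Finally, since VCG is dominant-strategy incentive compatible, no $\mathcal{I}_i$ can increase its utility $u_i(S,\boldsymbol{\rho})$ by reporting $\hat{v}_i \neq v_i$, which is the definition of truthfulness.

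The step I expect to be the main obstacle is the third one — certifying that the minimum Walrasian prices returned by the ascending process are the VCG payments, which genuinely relies on deriving the agents-are-substitutes property from the gross-substitute condition rather than on any local ``higher-bid/lower-bid'' comparison. A secondary technical point is the $\epsilon$-discretization: the auction terminates only within $\epsilon$ of an exact equilibrium, so strictly the argument establishes truthfulness up to an additive $O(\epsilon)$ slack, and I would close the gap by letting $\epsilon \to 0$ (or by arguing the computed allocation is already welfare-optimal so that the residual price error cancels in the utility comparison). As a fallback consistent with the style of Lemma \ref{lemma:8}, one could instead run a two-case analysis on $\mathcal{I}_i$ over- versus under-reporting $v_i$, but any such direct argument must still account for the price externalities that a deviation induces on the other agents through line 19, which is exactly what the Walrasian/VCG route handles cleanly.
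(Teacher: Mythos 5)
Your proposal is correct in substance but follows a genuinely different route from the paper. The paper's proof of Lemma \ref{lemma:ttu} is exactly the ``fallback'' you describe: it fixes a device $\mathcal{I}_i$ and runs a two-case analysis (\emph{underbid}, $v_i'(\mathcal{S}) < v_i(\mathcal{S})$, versus \emph{overbid}, $v_i'(\mathcal{S}) > v_i(\mathcal{S})$), in each case splitting on whether the deviator ends the final iteration as a winner (payment no larger, resp.\ unchanged) or loses its tasks to other devices (utility $0$). Notably, the paper's argument does \emph{not} track the price externalities you worry about: it compares utilities only at the terminal configuration and implicitly treats the price trajectory generated by the other devices as unaffected by the deviation, which is precisely the gap your Walrasian/VCG route is built to close. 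Your reduction --- view WiPD (Algorithm \ref{algo:14}) as an ascending item-price auction answering demand queries; under the gross-substitute hypothesis (Definition \ref{def:gs1}) sincere responses drive it to the minimum Walrasian price vector; GS implies the agents-are-substitutes condition, under which minimum Walrasian prices coincide with VCG payments; VCG is dominant-strategy incentive compatible --- is the standard Kelso--Crawford/Gul--Stacchetti/Ausubel argument and is sound, at the cost of much heavier machinery and of external results the paper never cites. What each approach buys: the paper's direct case analysis is short, elementary, and matches the style of Lemma \ref{lemma:8}, but it is informal exactly where your method is strong (it never verifies that a misreport cannot profitably distort the prices pushed up through line 19); your route handles those externalities structurally and additionally certifies that the final allocation is welfare-optimal, but it proves exact truthfulness only in the $\epsilon \to 0$ limit --- with the fixed increment $\epsilon$ of Algorithm \ref{algo:14} one obtains incentive compatibility only up to an additive $O(m\epsilon)$ slack, a caveat you correctly flag and that the paper's proof silently ignores.
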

    \begin{proof}
    Fix an IoT device $\mathcal{I}_i$. To prove that WiPD is \emph{truthful}, the two cases are considered: (1) \textbf{\emph{underbid}}, and (2) \textbf{\emph{overbid}}. In the first case, the $i^{th}$ IoT device decreases its bid value for a set of tasks $\mathcal{S}$ such that $v_i'(\mathcal{S}) < v_i(\mathcal{S})$. In \emph{overbid} case, the $i^{th}$ IoT device increases its bid value for a set of tasks $\mathcal{S}$ such that $v_i'(\mathcal{S}) > v_i(\mathcal{S})$. Let us illustrate the two cases.
    \begin{itemize}
        \item \textbf{Underbid:} Let us suppose that $i^{th}$ IoT device misreported his valuation for the set of tasks say $\mathcal{S}$ such that $v_i'(\mathcal{S}) < v_i(\mathcal{S})$. In such case two things can happen: (1) it may happen to be the last iteration and for any $i^{th}$ IoT device $\mathcal{F}_i = \phi$. It means that none of the IoT devices wants to have an additional task. In this the $i^{th}$ IoT device is allocated its requested tasks at the price $\sum\limits_{j \in \mathcal{S}} \boldsymbol{\rho}_{i}'(j)$ and hence the utility is $\sum\limits_{j \in \mathcal{S}} \boldsymbol{\rho}_{i}'(j) - v_i(S) < \sum\limits_{j \in \mathcal{S}} \boldsymbol{\rho}_{i}(j) - v_i(S) = u_i(\mathcal{S}, \boldsymbol{\rho})$. (2) Another case could be that due to the lowering of bid value, the $i^{th}$ IoT device has not received any tasks and hence the loser. In that case, the utility will be 0.  
        \item \textbf{Overbid:} Let us suppose that $i^{th}$ IoT device misreported his valuation for the set of tasks $\mathcal{S}$ such that $v_i'(S) > v_i(S)$. In this case, the two scenarios can occur: (1) if the current iteration is the last, and after that mechanism terminates. In this case, the $i^{th}$ IoT device is the winner and its utility will be $ u_i'(\mathcal{S}, \boldsymbol{\rho}) = \sum\limits_{j \in \mathcal{S}} \boldsymbol{\rho}_{i}(j) - v_i(\mathcal{S}) = u_i(\mathcal{S}, \boldsymbol{\rho})$. (2) Another case could be with the increase in bid values the IoT device fetched some additional sets of tasks but its tasks got fetched by other IoT devices, so its utility will be 0.    
    \end{itemize}
    From the above discussion, it can be inferred that IoT devices cannot gain by misreporting their private value. Hence, WiPD is truthful. 
    \end{proof}

\section{Experimental Analysis}
\label{sec:Sim}
In this section, the simulation for the two tiers is carried out independently. For the first tier of the proposed framework, through simulation, TENM is compared with the already existing mechanisms, namely, \emph{\textbf{\underline{n}}on-\textbf{\underline{t}}ruthful \textbf{\underline{b}}udget \textbf{\underline{f}}easible \textbf{\underline{m}}echanism} (NTBFM) and \emph{\textbf{\underline{p}}roportional \textbf{\underline{s}}hare \textbf{\underline{m}}echnaism} (PSM) \cite{Singer_2016}. The comparison is done based on (1) truthfulness, (2) budget feasibility, (3) the number of IoT devices selected as initial notifiers in the social network, and (4) running time. The evaluation metric mentioned in point 1 will help us to show that TENM and PSM are truthful and NTBFM is vulnerable to manipulation. The budget feasibility metric mentioned in point 2 will help us to show that NTBFM, PSM, and TENM are budget feasible. The estimate on the number of IoT devices getting selected as initial notifiers in the case of TENM, PSM, and NTBFM is taken care of by the metric mentioned in point 3. The evaluation metric mentioned in point 4 will help us to show that TENM is scalable.\\
\indent In the second tier of the proposed framework, the experiments are carried out to compare ECTAI and WiPD with the already existing mechanism \emph{\textbf{\underline{a}}verage \textbf{\underline{v}}oting \textbf{\underline{r}}ule} (AVR) and greedy mechanism respectively on the ground of \emph{truthfulness}. It will help us to show that both ECTAI and WiPD are truthful as compared to the respective benchmark mechanism. The second metric that is considered for comparing ECTAI and WiPD with the respective benchmark mechanism is the running time of the mechanisms. It will help us to show that ECTAI and WiPD are scalable.   
\subsection{Simulation Setup}
In this section, the simulation setup for the two independent tiers is discussed one by one. The simulation is carried out by using the software and programming language mentioned in Table \ref{tab:data}.
\begin{table}[H]
\renewcommand{\arraystretch}{0.99}  
\caption{Specifications of the software used in simulation}
\label{tab:data}
\centering
  \begin{tabular}{|c||c|}
\hline
\textbf{Software} & \textbf{Version}\\
\hline
Ubuntu & 20.04.4 LTS \\
\hline
Python & 3.8.10\\
\hline
Processor & Intel Core™ i5-8250U~ CPU~ @ 1.60GHz × 8\\ 
\hline
Gnuplot & 5.2 \\
\hline
\end{tabular}
\end{table}
\begin{enumerate}
\item \textbf{Simulation setup for first tier -} For the simulation purpose we use three real-world data sets by considering the social connections of the users in Facebook \cite{fb}, Twitter \cite{twit}, and Google+ \cite{gplus}, with statistics given in Table \ref{tab:second table}. It shows the \emph{number of nodes}, \emph{number of edges}, and the \emph{cost} range utilized for three different social networks. Each of the three mechanisms is executed for 5 rounds (or iterations) on the data given in Table \ref{tab:second table}. 
\begin{table}[H]
\caption{Simulation parameters used in the first tier}
\label{tab:second table}
\centering
\begin{tabular}{c|c|c|c}
\hline
\textbf{Social network name} & \textbf{No. of nodes} & \textbf{No. of edges} & \textbf{Cost range (random distribution)}\\
\hline
Facebook & 4039 & 88234 & $[20,~50]$\\
\hline
Twitter & 81306 & 1768149 & $[20,~50]$\\
\hline
Google+ & 107614 & 13673453 & $[20,~50]$\\
\hline
\end{tabular}
\end{table}
\begin{figure}[H]
     \centering
     \begin{subfigure}[h]{0.35\textwidth}
         \centering
         \includegraphics[scale = 0.13]{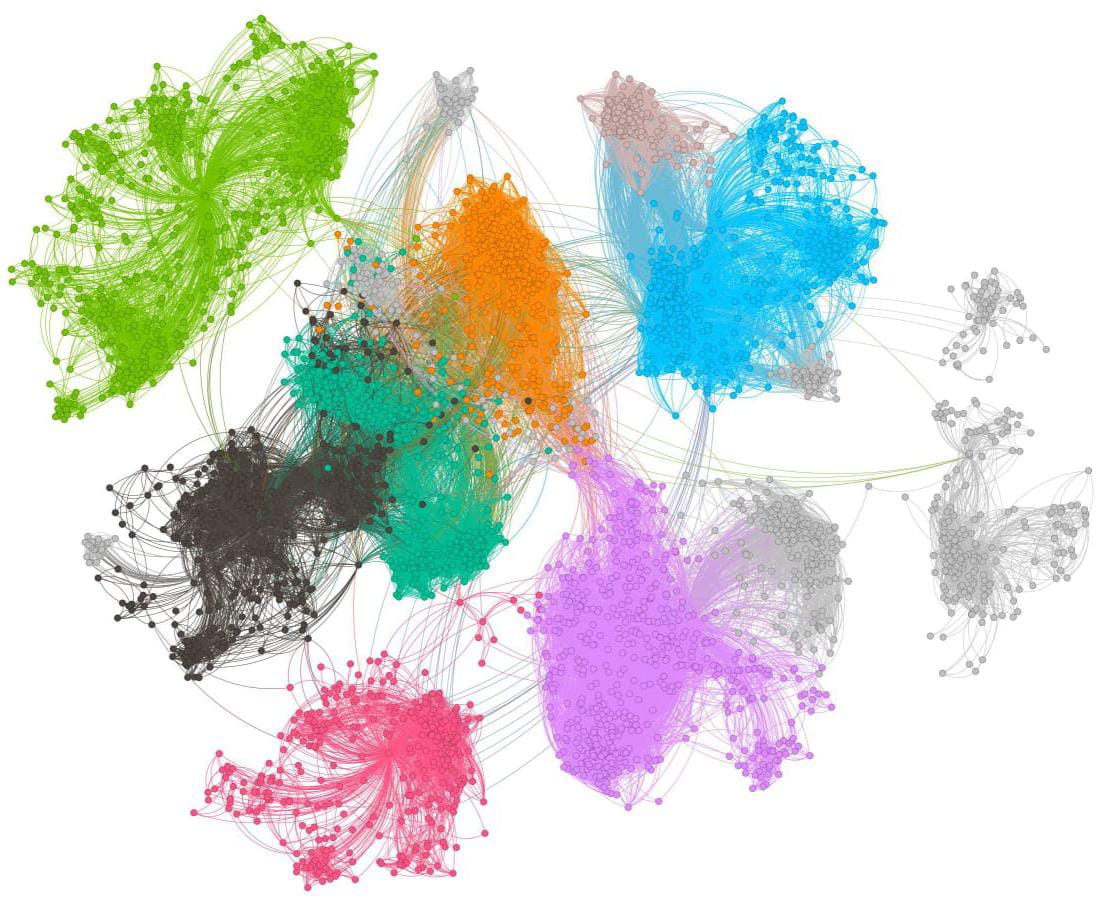}
      %   \caption{IoT devices represented on $[0,1]$ scale}
         \label{fig:sim1a}
     \end{subfigure}
      \hspace*{-7mm}
     \begin{subfigure}[h]{0.35\textwidth}
         \centering
         \includegraphics[scale = 0.13]{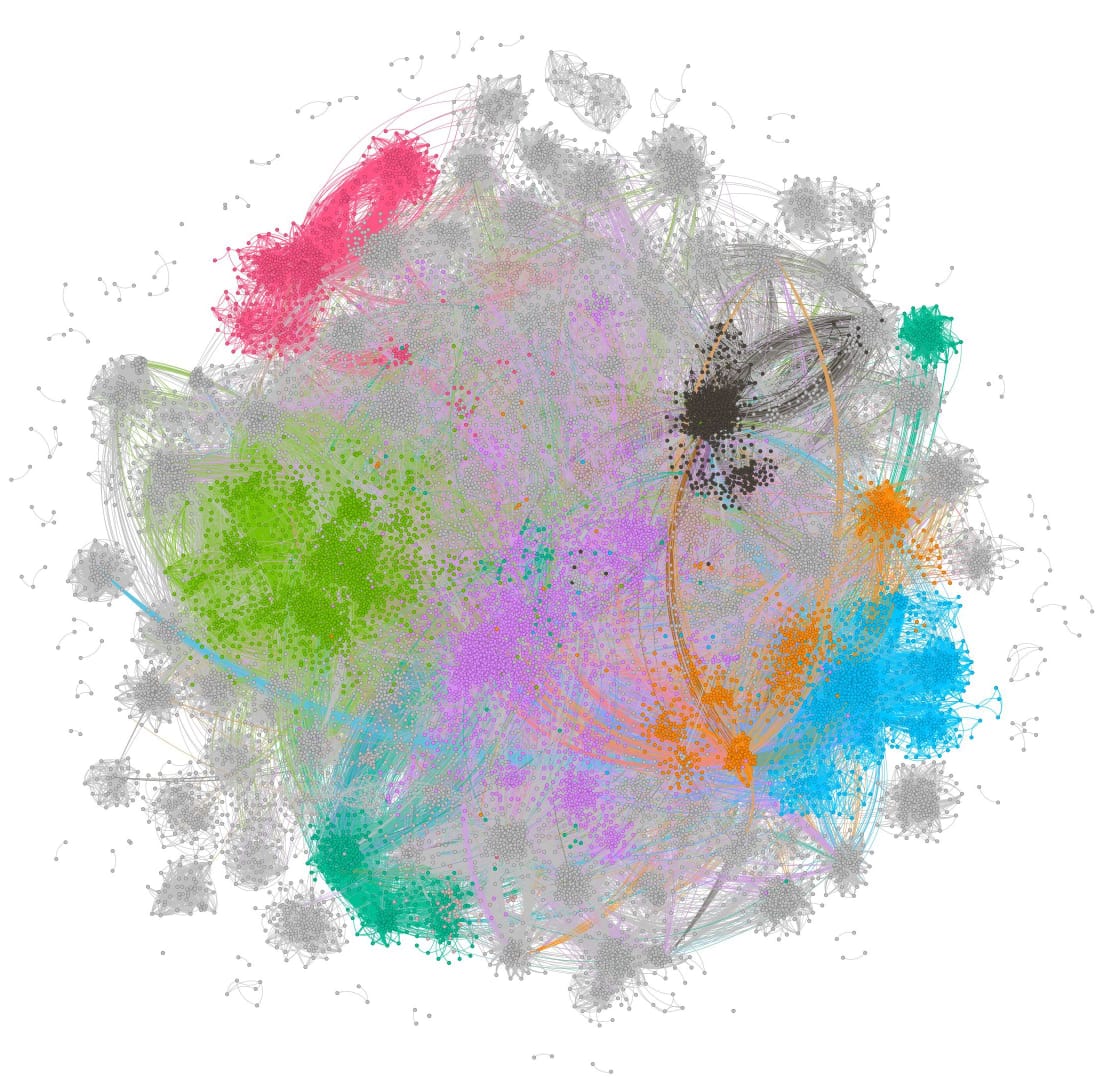}
       %  \caption{Peak values of $\mathcal{I}_j$ and $\mathcal{I}_k$}
         \label{fig:sim1b}
     \end{subfigure}
     \hspace*{-7mm}
      \begin{subfigure}[h]{0.35\textwidth}
         \centering
         \includegraphics[scale = 0.13]{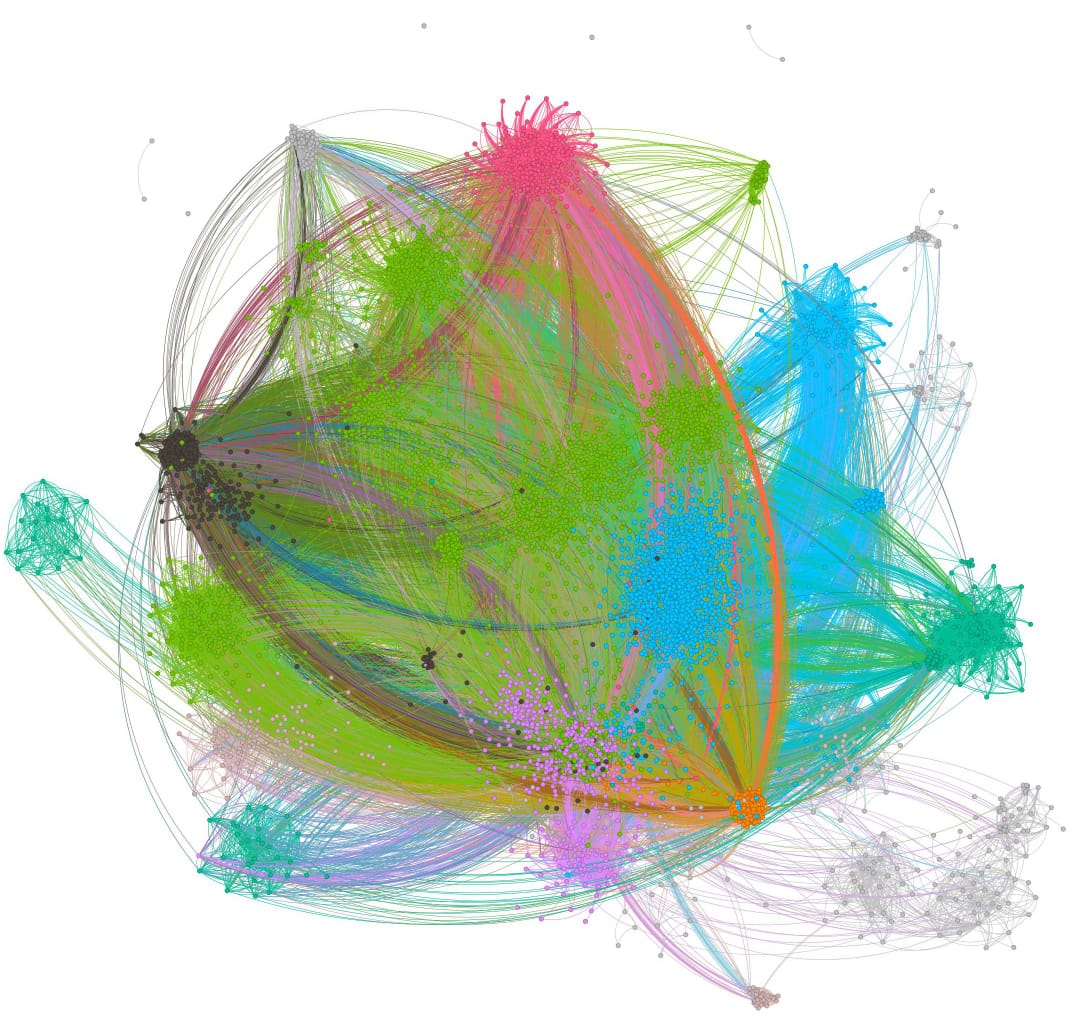}
        % \caption{Task executors represented on $[0,1]$ scale}
         \label{fig:sim1c}
     \end{subfigure}
     \caption{Pictorial representation of users' social connections on Facebook, Twitter, and Google+}
     \label{fig:pr}
\end{figure}
For comparison purposes, one parameter that is kept intact for the three distinct social networks is the cost range of IoT devices. For each IoT device, a cost for notifying other IoT devices is picked up randomly from the given cost range. The unit of cost is the dollar. The pictorial representation of the social connections of the users in the three different social networking sites is depicted in Figure \ref{fig:pr}. The reason behind utilizing three different data sets is to strengthen the claim mentioned in the theoretical analysis part. For comparing TENM, PSM, and NTBFM on the ground of truthfulness it is considered that 30\% of the IoT devices are misreporting their true cost in the case of NTBFM and is represented as NTBFM-D in the simulation results. Here, misreporting the cost means that the IoT devices will lower their cost by some fixed amount so that their chance of getting selected will increase.
\item \textbf{Simulation setup for second tier -} For simulation purposes, each time some 50 IoT devices are selected randomly for providing ranking over 100 other randomly selected IoT devices that are placed on the scale of $[0,1]$. To generate the peak value of the IoT devices we have utilized \emph{random distribution} and \emph{normal distribution}. For the normal distribution, the mean value is taken as $\mu = 0.6$ and standard deviation $\sigma = 0.3$ as shown in Table \ref{tab:sim2}. Similarly, for the second phase of the second tier, the bid value range that is utilized for random and normal distributions is depicted in Table \ref{tab:sim2}. The reason behind considering the two different probability distributions is to strengthen the claim made in the theoretical analysis. 

\begin{table}[H]
\renewcommand{\arraystretch}{0.99}  
\caption{Data set utilized for simulation in case of ECTAI mechanism}
\label{tab:sim2}
\centering
  \begin{tabular}{c||c|c}
\hline
\textbf{Parameters} & \textbf{Values} & \textbf{Description}\\
\hline
$f$ & $100$ & Number of IoT devices that are to be ranked.\\
\hline
$g$ & $50$ & Number of IoT devices that provide a ranking.\\
\hline
$\alpha_i$ & $[0,~1]$ & Peak value selected randomly for random distribution.\\
\hline
$\alpha_i$ & $[\mu = 0.6, \sigma = 0.3]$ & Peak value determination for normal distribution.\\
\hline
$v_i (S)$ & $[30,45]$ & True valuation range for random distribution.\\
\hline
$v_i (S)$ & $[\mu = 37, \sigma = 8]$ & True valuation range for normal distribution.\\
\hline
\end{tabular}
\end{table}  
\end{enumerate}
\subsection{Baselines}
\label{subsec:bl}
In this section, the baselines that are used for comparing the proposed mechanisms in two tiers are discussed.
\begin{enumerate}
\item \textbf{Baseline for first tier -} For comparing the efficiency of TENM, we considered two baseline methods, namely, NTBFM and PSM \cite{Singer_2016}. In this, NTBFM is non-truthful and PSM is truthful in nature by their construction and is shown in the simulation results. Both the mechanisms are budget feasible.   
\begin{itemize}
\item NTBFM $-$ It consists of \emph{allocation} and \emph{payment} rules. In the \emph{allocation rule}, firstly, the IoT devices are sorted in descending order of marginal notification per cost of IoT devices. From the sorted ordering, each time an IoT device is picked up and a check is made that, whether, the cost of the picked-up IoT device is less than or equal to the available budget or not. If it satisfies this condition, then the IoT device is included in the winning set ($i.e.$ considered as the initial notifiers), and the available budget is reduced by the amount equal to the cost of the IoT device considered otherwise an \emph{allocation} rule terminate. The above process continues till the stopping condition in the allocation rule is satisfied. After that, coming to the \emph{payment rule}, each of the winning IoT devices is paid an amount equal to their respective reported cost.          
\item PSM \cite{Singer_2016} $-$ The PSM consists of \emph{allocation} and \emph{payment} rules. In the \emph{allocation rule}, the IoT devices are sorted in descending order of their costs so that $c_1 \leq c_2 \leq \ldots \leq c_n$. Further in sorted ordering, the largest index $k$ is determined for which $c_k \leq \frac{\boldsymbol{\mathcal{B}}}{k}$. The payment of any winning IoT device $i$ is given as $\min\{\frac{\boldsymbol{\mathcal{B}}}{k}, c_{k+1}\}$.  
\end{itemize}
\item \textbf{Baseline for second tier -} For comparing the efficiency of ECTAI, we consider a baseline method, namely, AVR \cite{T.roughgarden_20163}. In this, AVR is non-truthful and the manipulative behavior of IoT devices in the case of AVR is reflected in the simulation results. The idea of AVR is quite similar to that of ECTAI. Steps 1-3 of AVR are exactly the same as for the ECTAI. ECTAI differs from AVR only in steps 4 and 5. In step 4 of AVR, calculate the average of the peak values reported by $g$ number of IoT devices. In the next step determine the IoT device among the IoT devices that are placed on the scale of $[0, 1]$ whose peak value lies closer to the average peak value. It will be considered a quality IoT device. Repeat steps 1-5 of AVR until all the IoT devices are ranked.\\
\indent For comparing the efficiency of WiPD, we consider the baseline method as a greedy mechanism (in simulation graphs it is named GREEDY). It is to be noted that the proposed greedy mechanism is non-truthful in nature. The idea of the greedy mechanism is: (1) IoT devices are sorted in ascending order of the ratio of valuation to square root of the number of reported sets of tasks. (2) Each time, an $i^{th}$ IoT device is picked-up from the sorted ordering, and a check is made whether the intersection of tasks of the IoT devices present in the winning set and currently considered IoT device requested set of task is $\phi$ or not. If it is $\phi$ then $i^{th}$ IoT device is considered in the winning set, otherwise not. (3) The payment of the winning IoT devices will be their respective bid value.             
\end{enumerate}
  
\subsection{Performance Analysis} 
In this section, the performance analysis of TENM, ECTAI, and WiPD is carried out based on the metrics mentioned in Section \ref{sec:Sim} above. First, the performance analysis of TENM is presented and after that, the performance analysis of ECTAI and WiPD is discussed.  
\begin{enumerate}
\item \textbf{Performance analysis of TENM on Facebook, Twitter, and Google+ data sets.}
The performance of TENM is compared with the existing works (NTBFM and PSM) in the scenario discussed in the first tier of the proposed model. The discussion mainly circumvents around the metrics given in Section \ref{sec:Sim} for the first tier.
\begin{itemize}
\item \textbf{Truthfulness -} The simulation results shown in Figure \ref{fig:sim1} show the comparison of TENM with the two benchmark mechanisms $i.e.$ NTBFM and PSM on the ground of utility of notifiers. The x-axis of the graphs represents the iteration (or round) number and the y-axis of the graphs represents the utility of notifiers. Fixing the available budget $\mathcal{B}$ to 15000\$. It is evident from Figure \ref{fig:sim1} that in the case of TENM, the utility of IoT devices is sometimes higher and sometimes lower than in the case of PSM for all three data sets. This nature of the graphs is appearing from the construction of mechanisms. 
     
\begin{figure}[H]
     \centering
     \begin{subfigure}[h]{0.35\textwidth}
         \centering
         \includegraphics[scale = 0.44]{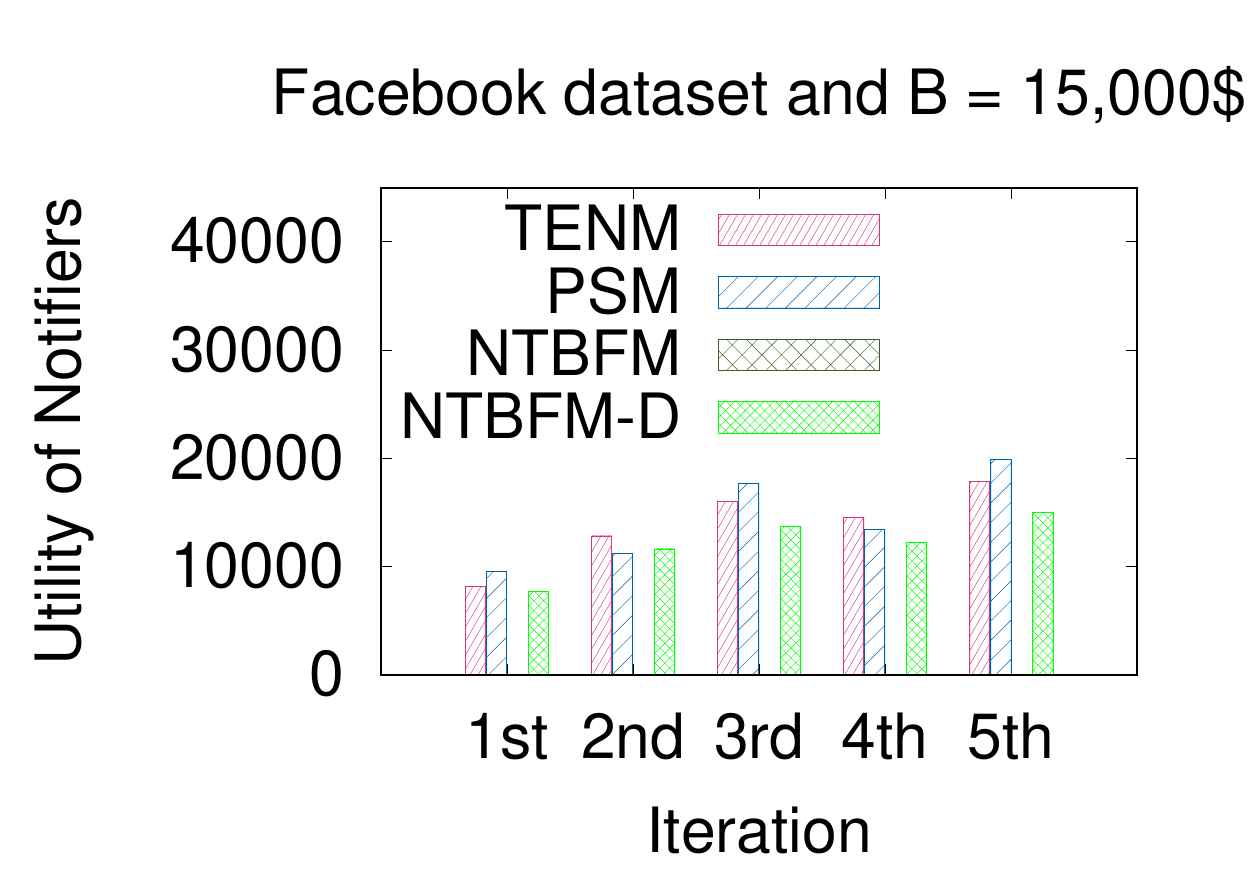}
      %   \caption{IoT devices represented on $[0,1]$ scale}
         \label{fig:sim1a}
     \end{subfigure}
      \hspace*{-7mm}
     \begin{subfigure}[h]{0.35\textwidth}
         \centering
         \includegraphics[scale = 0.44]{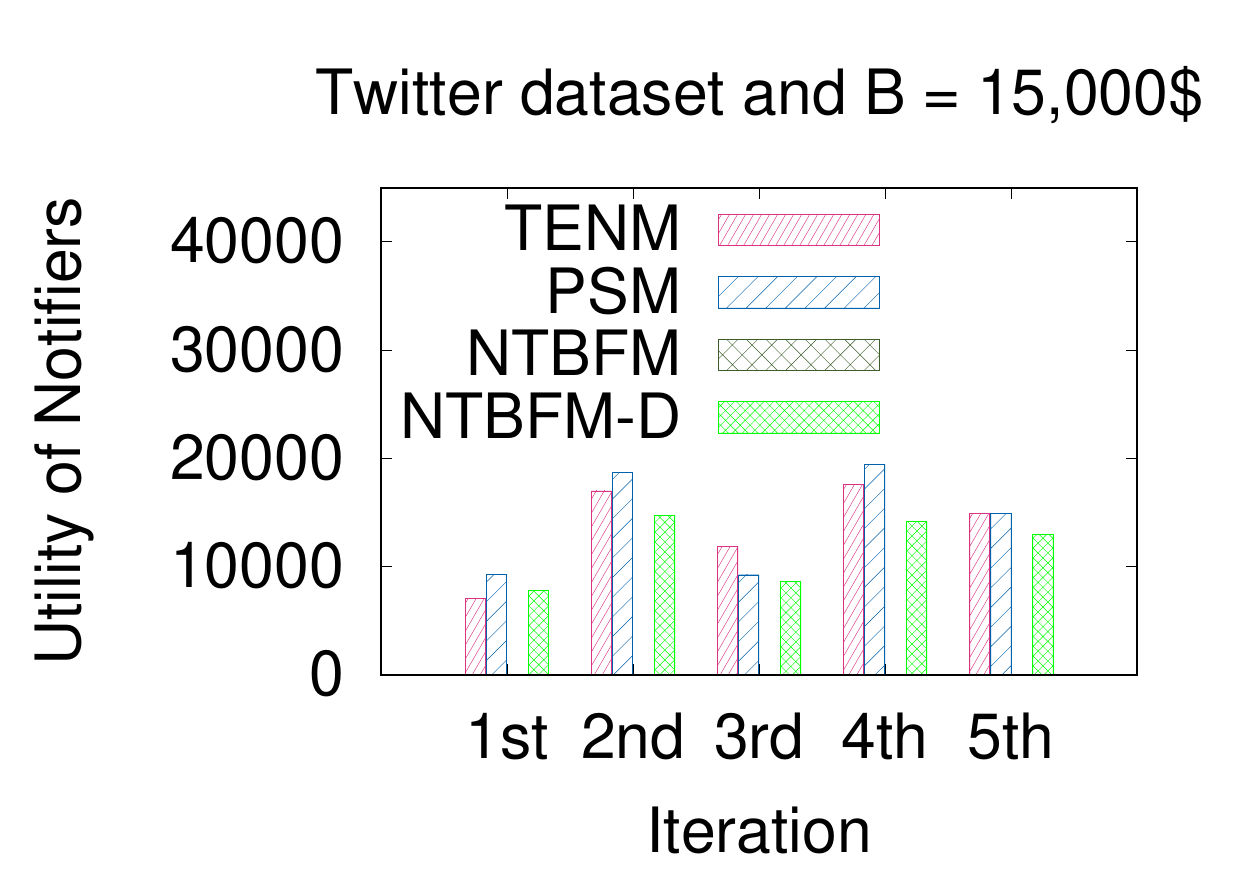}
       %  \caption{Peak values of $\mathcal{I}_j$ and $\mathcal{I}_k$}
         \label{fig:sim1b}
     \end{subfigure}
     \hspace*{-7mm}
      \begin{subfigure}[h]{0.35\textwidth}
         \centering
         \includegraphics[scale = 0.44]{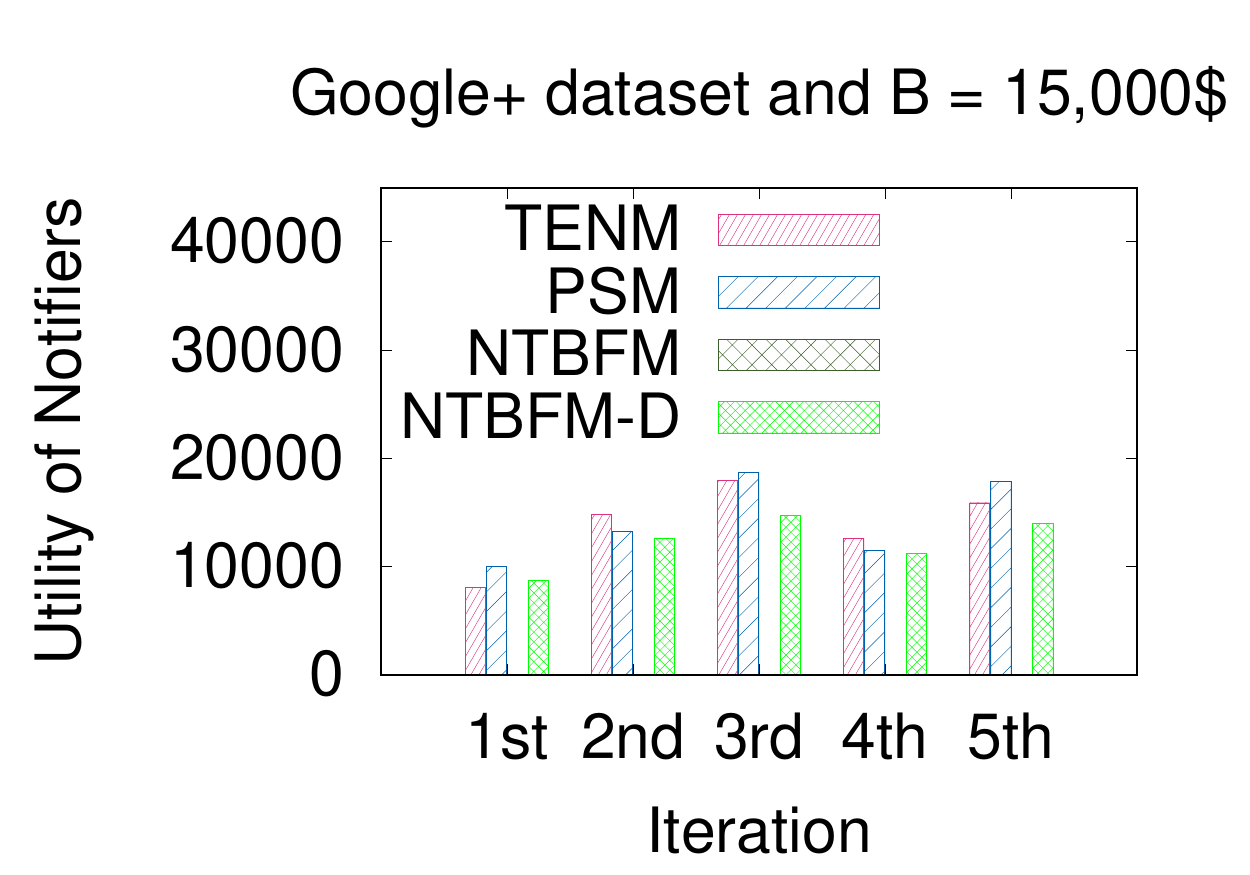}
        % \caption{Task executors represented on $[0,1]$ scale}
         \label{fig:sim1c}
     \end{subfigure}
     \caption{Comparison of TENM, NTBFM, NTBFM-D, and PSM on Facebook, Twitter, and Google+ data in terms of the utility of notifiers.}
     \label{fig:sim1}
\end{figure}
For NTBFM, the IoT devices will have \emph{zero} utility because the payment made to the winning IoT devices is equal to the respective cost revealed by them as shown in Figure \ref{fig:sim1}. However, NTBFM is vulnerable to manipulation, which means that the IoT devices can gain by misreporting (in this case, 30\% of the IoT devices are reporting their cost below the true cost by some amount) their privately held cost. It can be easily seen in Figure \ref{fig:sim1} that for all three data sets the IoT devices to have higher utility in the case of NTBFM when misreporting (it is represented as NTBFM-D in the graphs) than the case when all the participating IoT devices report truthfully. So, the participating IoT devices gain by misreporting their true value in the case of NTBFM. From the simulation results, it can be inferred that TENM and PSM are truthful and NTBFM is vulnerable to manipulation. 
\end{itemize}    
\begin{itemize}
\item \textbf{Budget feasibility:} The simulation results presented in Figure \ref{fig:sim2} shows the comparison of TENM with the benchmark mechanisms $i.e.$ NTBFM and PSM on the ground of \emph{budget feasibility}. The x-axis of the graphs represents the iteration number and the y-axis of the graphs represent the total payment in dollars. In Figure \ref{fig:sim2} it can be seen that the total payment made to the winning IoT devices in the case of PSM is sometimes higher and sometimes lower than the total payment made to the IoT devices in the case of TENM. This nature of the graphs is evident from the construction of the two mechanisms $i.e.$ TENM and PSM. On the other hand, NTBFM has the highest value for the total payment made to the IoT devices among the three mechanisms. It is due to multiple reasons: (1) the full available budget is utilized for deciding the winners, and (2) the available budget is draining out slowly which leads to a larger number of IoT devices getting selected as initial notifiers. As the number of IoT devices getting selected is higher in the case of NTBFM and so is the total payment. From the construction of the three mechanisms it is clear that the total payment made to the IoT devices will be below the available budget and is also depicted in Figure \ref{fig:sim2}. So, TENM, PSM, and NTBFM are budget feasible.  
\end{itemize} 
\begin{figure}[H]
     \centering
     \begin{subfigure}[h]{0.35\textwidth}
         \centering
         \includegraphics[scale = 0.28]{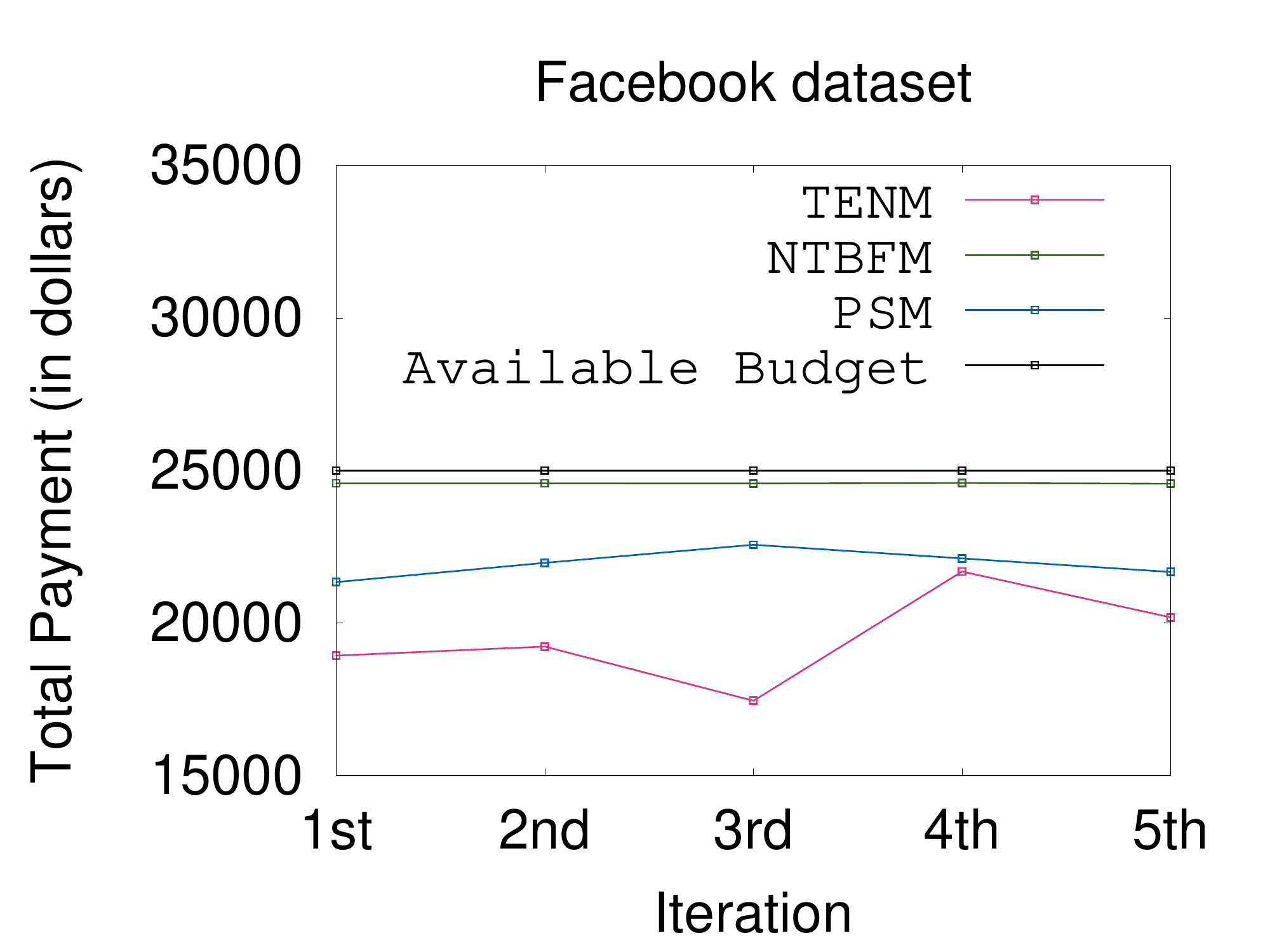}
      %   \caption{IoT devices represented on $[0,1]$ scale}
         \label{fig:sim2a}
     \end{subfigure}
      \hspace*{-7mm}
     \begin{subfigure}[h]{0.35\textwidth}
         \centering
         \includegraphics[scale = 0.28]{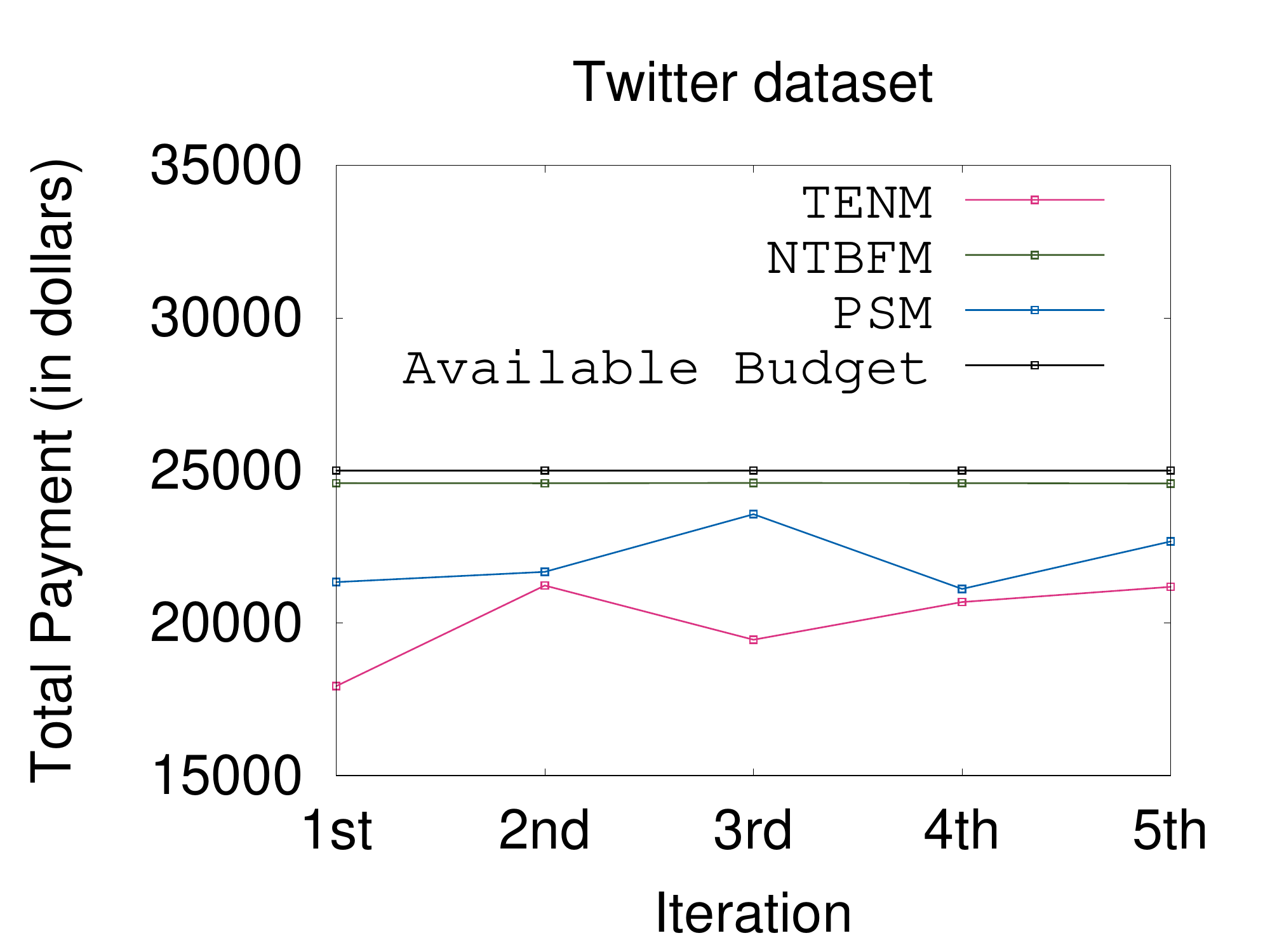}
       %  \caption{Peak values of $\mathcal{I}_j$ and $\mathcal{I}_k$}
         \label{fig:sim2b}
     \end{subfigure}
     \hspace*{-7mm}
      \begin{subfigure}[h]{0.35\textwidth}
         \centering
         \includegraphics[scale = 0.28]{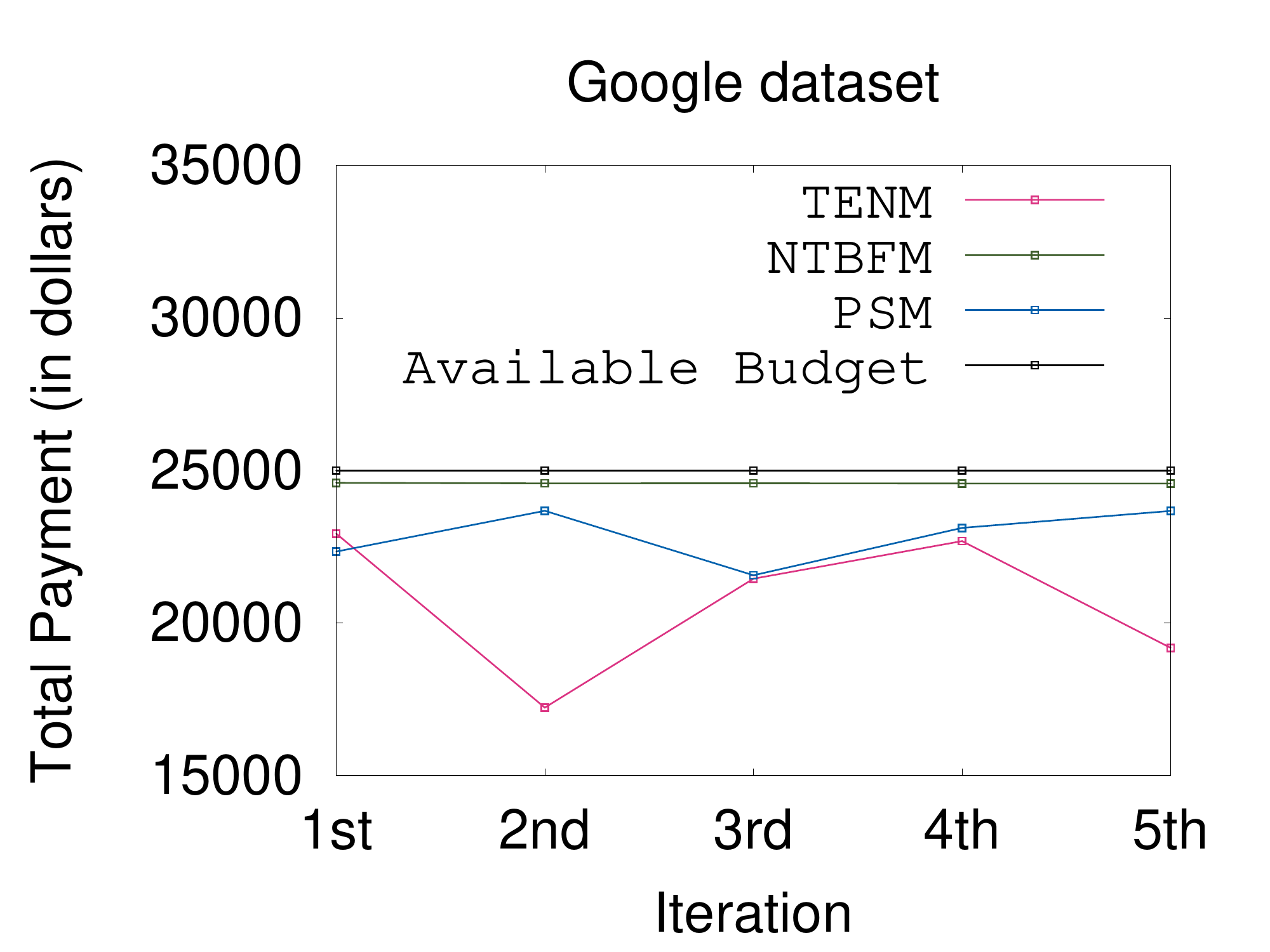}
        % \caption{Task executors represented on $[0,1]$ scale}
         \label{fig:sim2c}
     \end{subfigure}
          \caption{Comparison of TENM, NTBFM, NTBFM-D, and PSM on Facebook, Twitter, and Google+ data in terms of budget feasibility.}
     \label{fig:sim2}
\end{figure}
\begin{itemize} 
\item \textbf{Number of IoT devices selected as initial notifiers in the social network:} In Figure \ref{fig:sim3} the comparison of TENM, NTBFM, and PSM is done on the ground of the number of IoT devices got selected as initial notifiers. In Figure \ref{fig:sim3} the x-axis of the graphs represents the budget and the y-axis of the graphs represents the number of notifiers selected. The comparison is carried out for three different social networks $i.e$ Facebook, Twitter, and Google+. From Figure \ref{fig:sim3}, it can be seen that the number of IoT devices selected as notifiers in the case of NTBFM is more than the number of IoT devices selected as notifiers in the case of TENM and is more than the number of IoT devices selected as notifiers in case of PSM for all the three data sets. This nature is due to the fact that in the case of NTBFM, the overall budget is utilized in the allocation rule and it is eaten away slowly than in the case of TENM where the budget allocation rule starts with half of the available budget and as the algorithm progresses the budget drains out very fast and less number of IoT devices get selected. On the other hand, in the case of PSM, the allocation rule starts with the full budget but in each iteration, the available budget drains out faster than in the case of TENM and NTBFM. So, due to this reason less number of IoT devices get selected. So, fewer IoT devices get notified about the task execution process.        
\end{itemize} 
\begin{figure}[H]
     \centering
     \begin{subfigure}[h]{0.35\textwidth}
         \centering
         \includegraphics[scale =0.29]{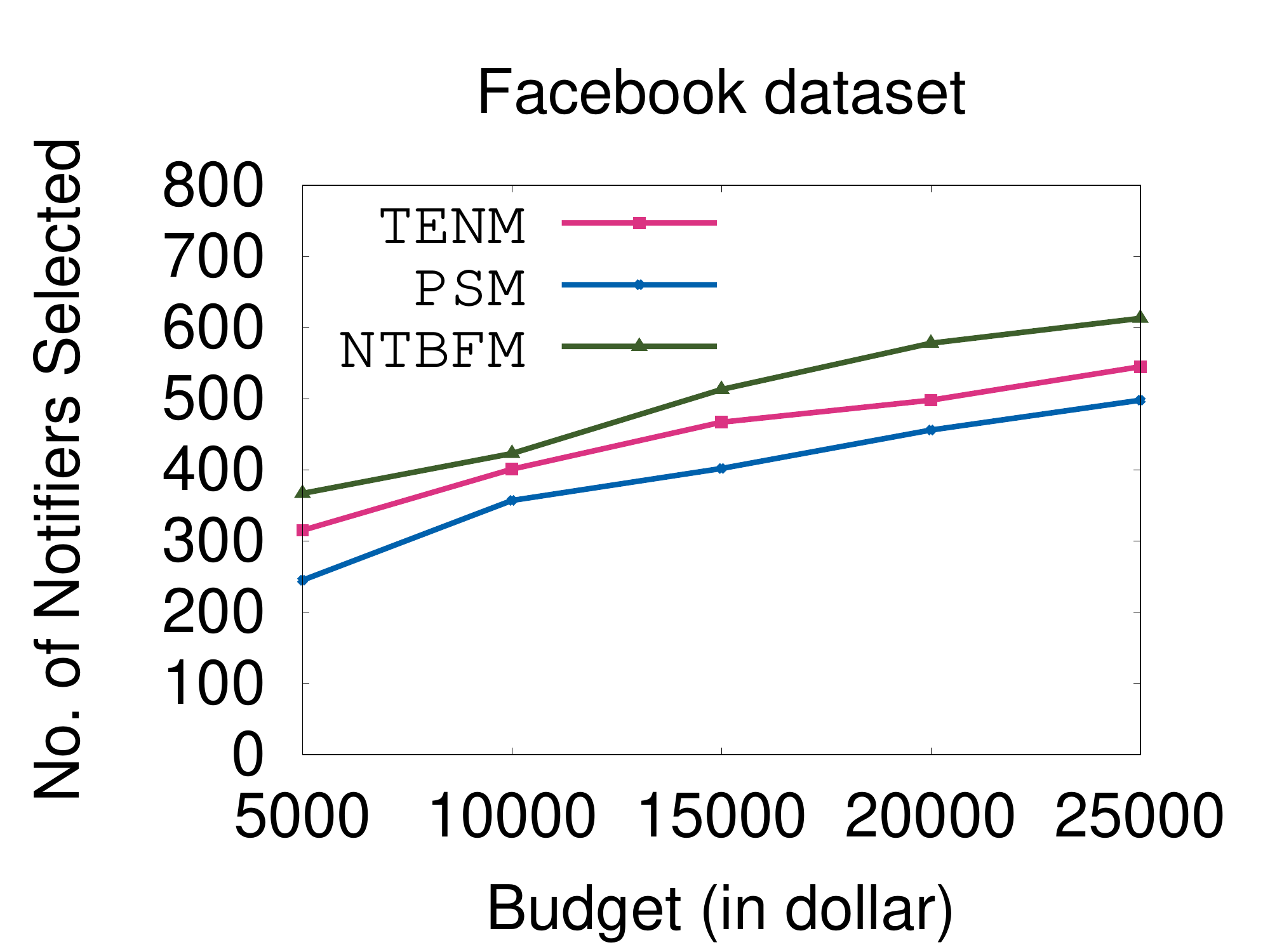}
      %   \caption{IoT devices represented on $[0,1]$ scale}
         \label{fig:sim3a}
     \end{subfigure}
      \hspace*{-7mm}
     \begin{subfigure}[h]{0.35\textwidth}
         \centering
         \includegraphics[scale = 0.29]{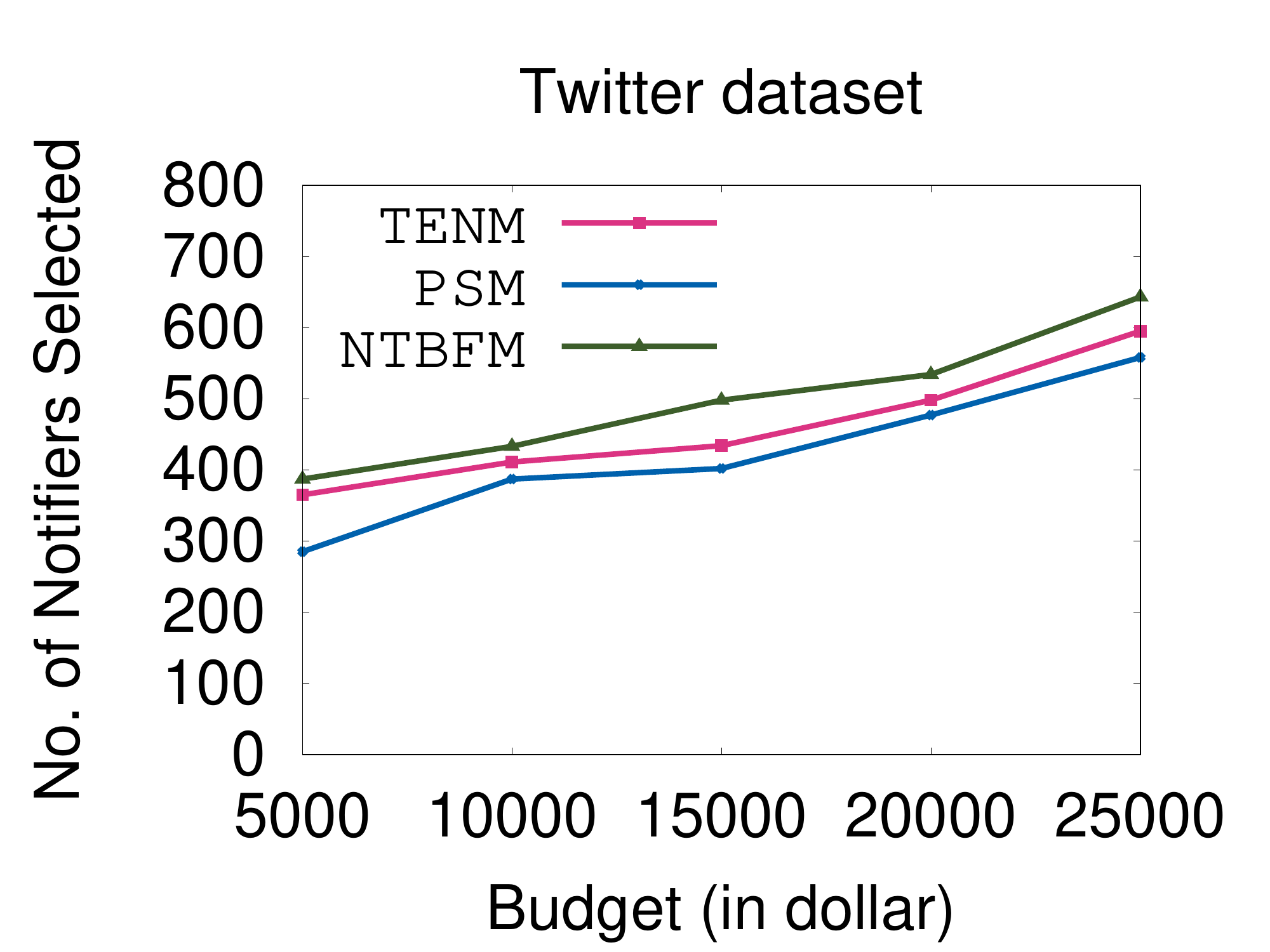}
       %  \caption{Peak values of $\mathcal{I}_j$ and $\mathcal{I}_k$}
         \label{fig:sim3b}
     \end{subfigure}
     \hspace*{-7mm}
      \begin{subfigure}[h]{0.35\textwidth}
         \centering
         \includegraphics[scale = 0.29]{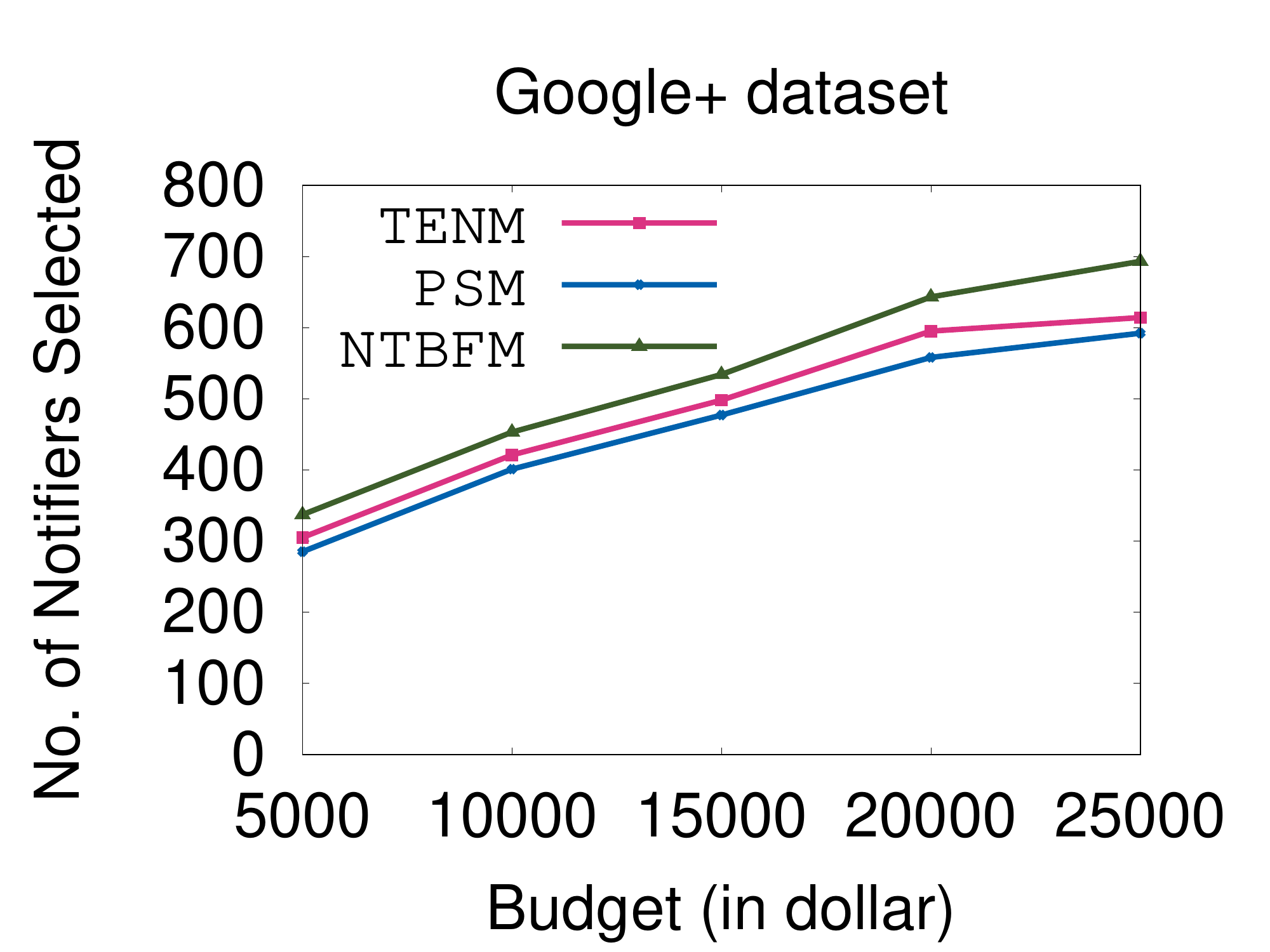}
        % \caption{Task executors represented on $[0,1]$ scale}
         \label{fig:sim3c}
     \end{subfigure}
          \caption{Comparison of TENM, NTBFM, and PSM on Facebook, Twitter, and Google+ data in terms of the number of IoT devices selected as initial notifiers in the social network.}
     \label{fig:sim3}
\end{figure}
\begin{itemize} 
\item \textbf{Running time:} In Figure \ref{fig:sim4} the execution time (or running time) of TENM is compared with the execution time of NTBFM and PSM. The x-axis of the graph shown in Figure \ref{fig:sim4} represents the budget in dollars and the y-axis represents running time in milliseconds. The simulation results show that TENM takes more time than PSM and NTBFM for providing the desired results. The reason behind the higher running time for TENM is both allocation and pricing mechanisms. In the allocation rule of TENM, most of the time is killed up in calculating the marginal notification of each of the nodes in the social graph. Similar is the reason for the pricing mechanism of TENM, in this, the IoT device for whom the payment is calculated is dragged out of the crowdsourcing market and again the allocation rule is executed on the remaining IoT devices. The process is repeated for all the winning IoT devices. In the case of PSM, the payment is the minimum of the two quantities $\frac{\mathcal{B}}{k}$ and $c_{k+1}$. So, this payment calculation takes time less than the payment calculation for TENM. On the other hand, in the case of NTBFM, the payment is the reported cost of the IoT devices and so takes the lowest time. From the above discussion, it can be inferred that TENM running time is the highest among all the three mechanisms.  From the simulation graphs, it can be seen that  all three mechanisms are scalable.     
\end{itemize} 
\begin{figure}[H]
     \centering
     \begin{subfigure}[h]{0.35\textwidth}
        \centering
         \includegraphics[scale = 0.27]{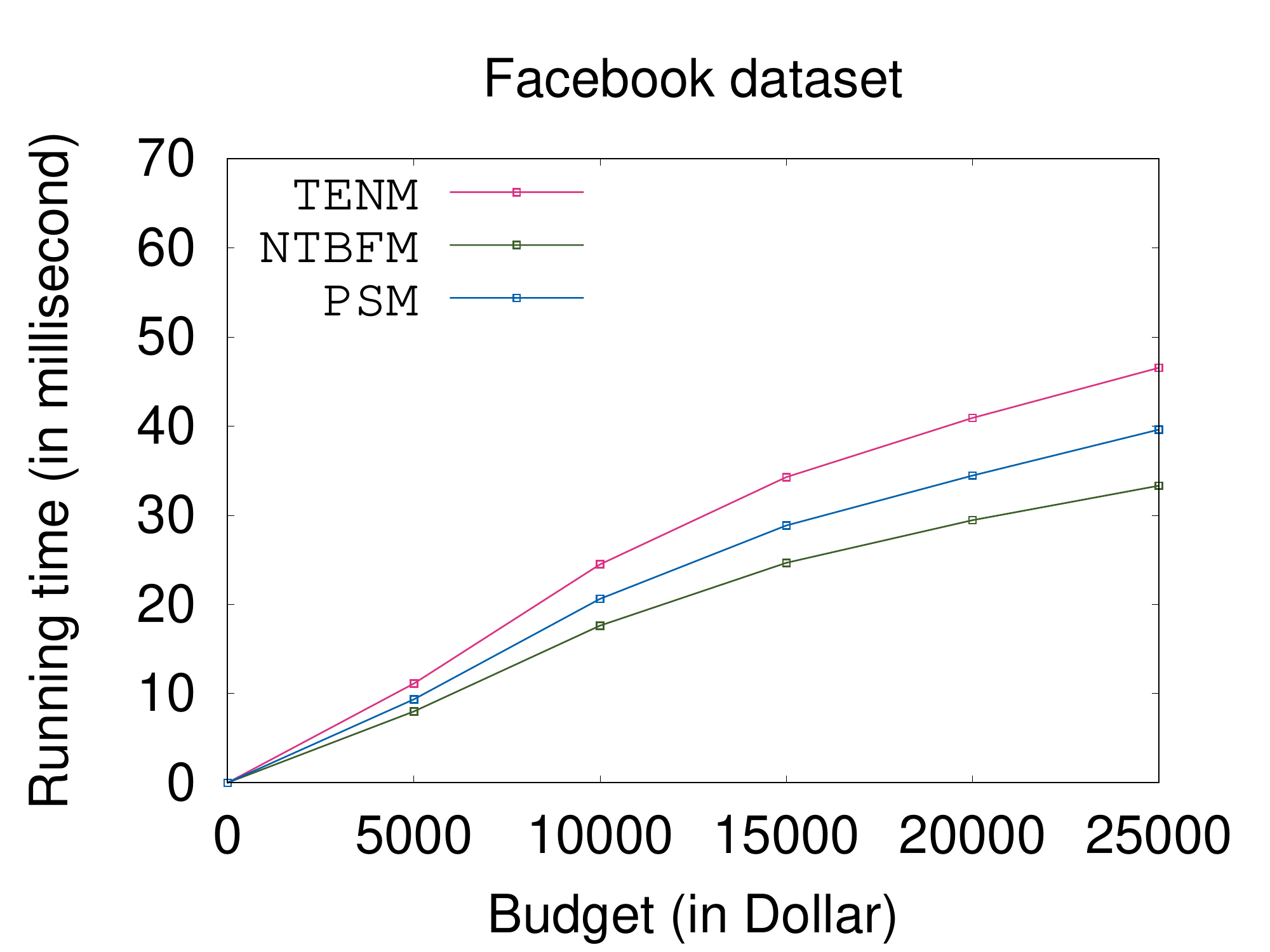}
      %   \caption{IoT devices represented on $[0,1]$ scale}
         \label{fig:sim4a}
     \end{subfigure}
      \hspace*{-7mm}
     \begin{subfigure}[h]{0.35\textwidth}
        \centering
         \includegraphics[scale = 0.27]{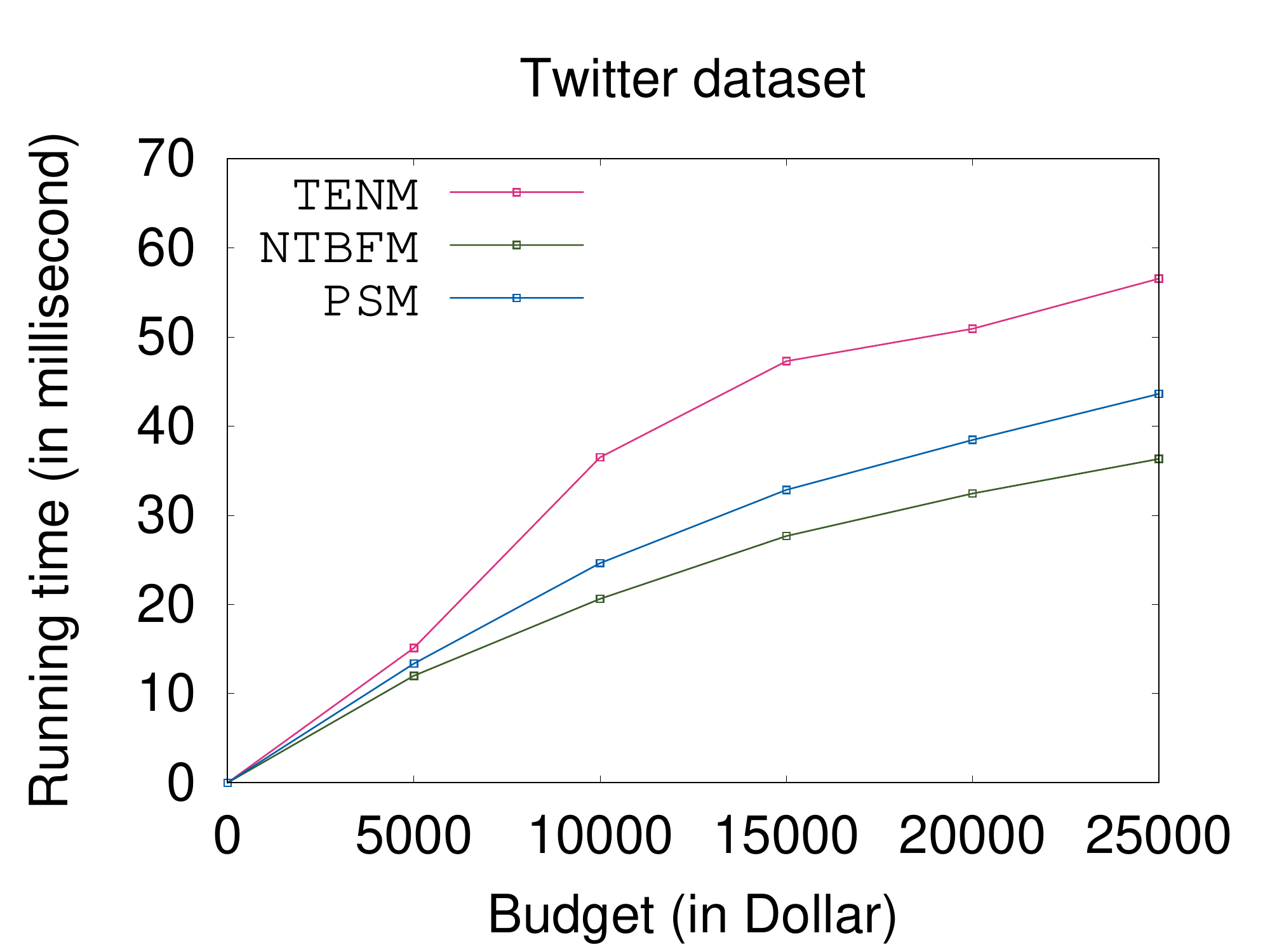}
       %  \caption{Peak values of $\mathcal{I}_j$ and $\mathcal{I}_k$}
         \label{fig:sim4b}
     \end{subfigure}
     \hspace*{-7mm}
      \begin{subfigure}[h]{0.35\textwidth}
        \centering
         \includegraphics[scale = 0.27]{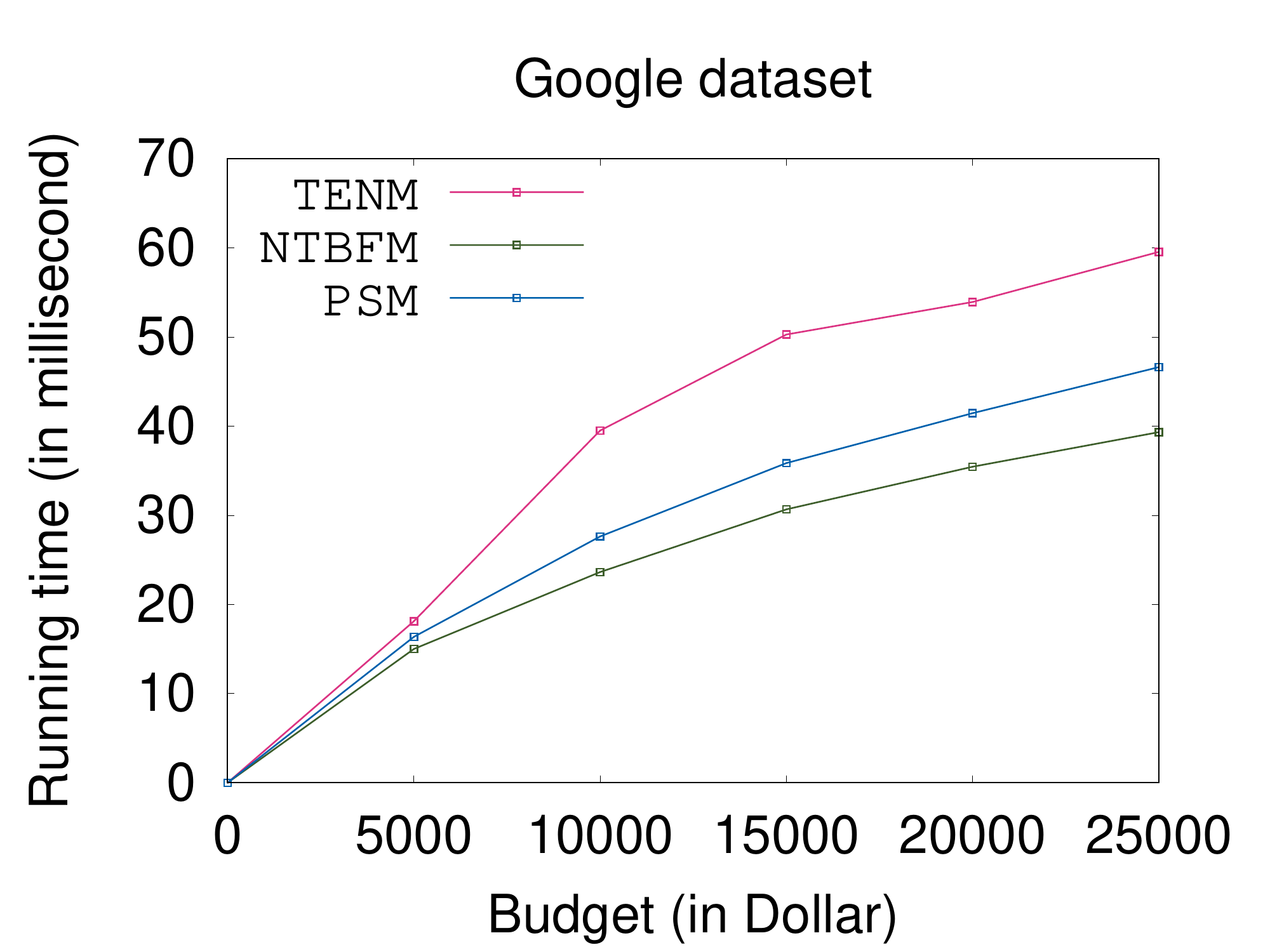}
        % \caption{Task executors represented on $[0,1]$ scale}
         \label{fig:sim4c}
     \end{subfigure}
     \caption{Comparison of TENM, NTBFM, NTBFM-D, and PSM on Facebook, Twitter, and Google+ data in terms of execution time.}
     \label{fig:sim4}
\end{figure}
\item \textbf{Performance analysis of ECTAI and WiPD.} The performance of ECTAI and WiPD is compared with AVR and GREEDY respectively. The discussion mainly circumvents around the metrics given in Section \ref{sec:Sim} for the second tier.
\begin{itemize}
\item \textbf{Truthfulness:} The simulation results shown in Figure \ref{fig:sim44} show the comparison of ECTAI with AVR on the ground of truthfulness for both \emph{random} and \emph{normal} distributions. The x-axis of the graphs represents the number of IoT devices and the y-axis represents the utility of IoT devices that are ranked. In the case of ECTAI, the resultant peak value will be sometimes lower and sometimes higher than in the case of AVR, and so as the utility of the IoT devices as shown in Figure \ref{fig:sim44}. This nature of the graphs is from the construction of the mechanisms. In the case of ECTAI and AVR, the higher the utility value lower will be the quality of the IoT devices, and the lower the utility value higher will be the quality of IoT devices. Further, in the graphs of Figure \ref{fig:sim44} the manipulative behavior of IoT devices can be seen in the case of AVR. If 30\% of the IoT devices misreport (reporting the peak value closer to its favorite IoT device) their peak value then the total utility of the IoT devices will drop down and will be sometime less as compared to the case when all the IoT devices are reporting their true peak value. This manipulative nature of AVR can be seen for both random and normal distributions in Figure \ref{fig:sim44}. From the above discussion, one can say that AVR is vulnerable to manipulation but ECTAI is not. 
\begin{figure}[H]
\centering
\begin{subfigure}[b]{0.49\textwidth}
                \centering
                \includegraphics[scale=0.60]{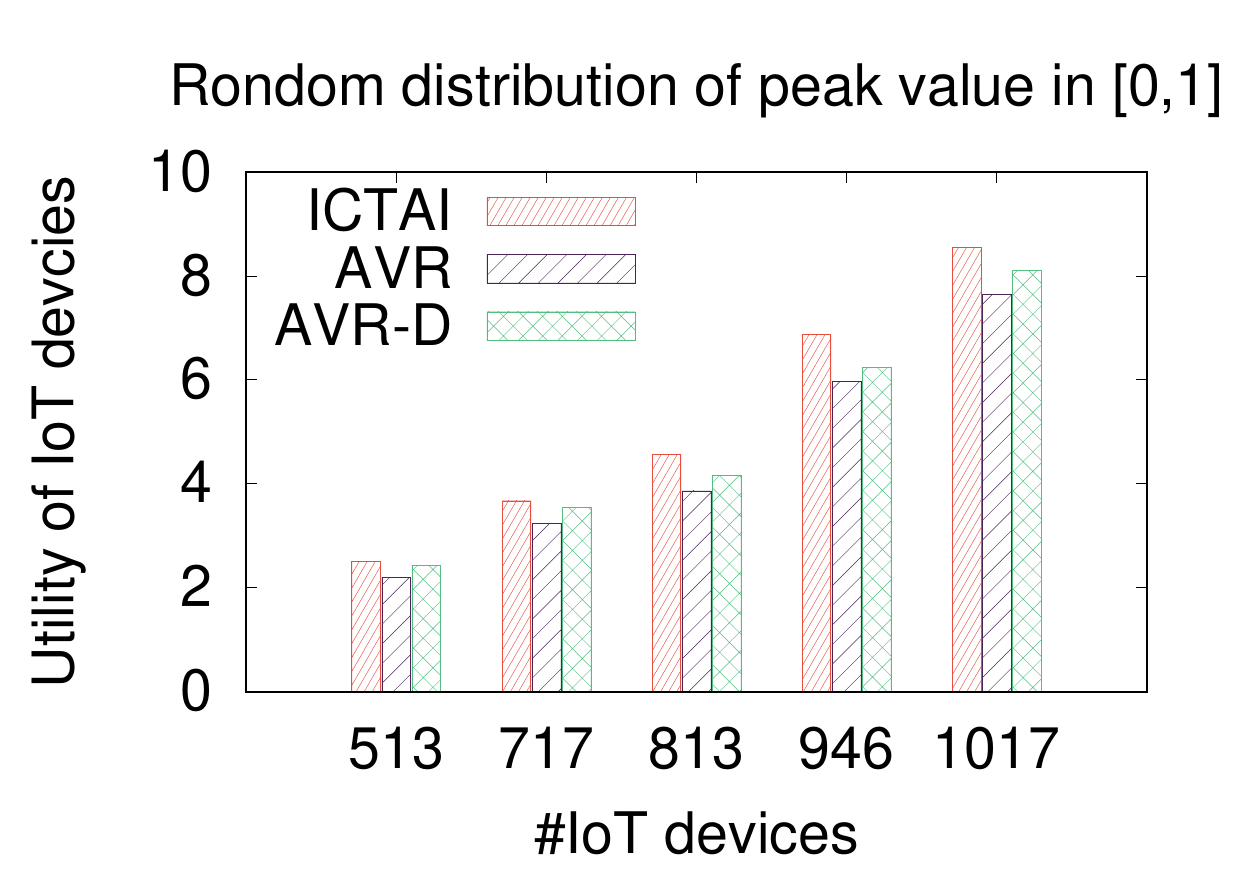}
    %            \subcaption{Comparison of ECTAI, AVR, and AVR-D in terms of the utility of IoT devices (random distribution).}
                \label{fig:sim44A}
        \end{subfigure}%
        \begin{subfigure}[b]{0.49\textwidth}
                \centering
                \includegraphics[scale=0.60]{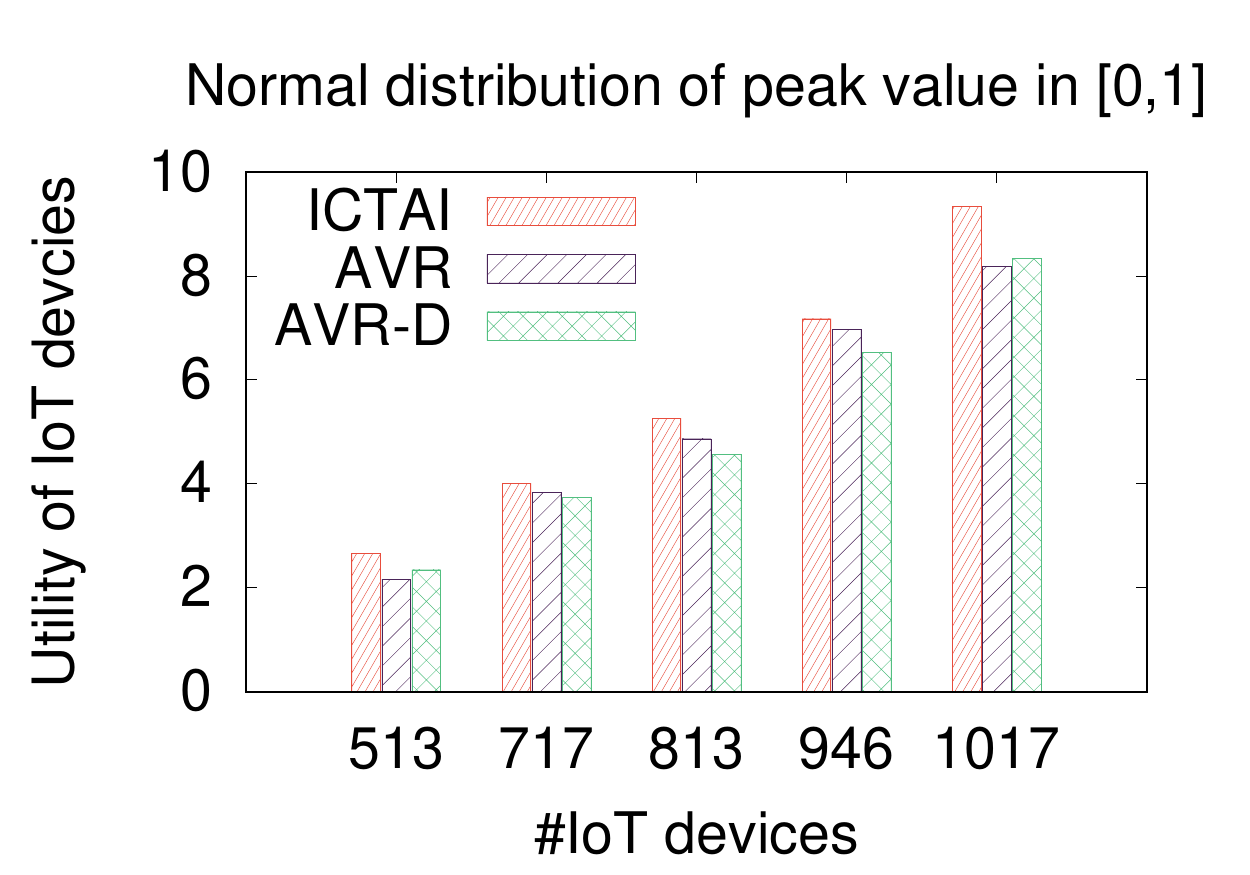}
    %    \subcaption{Comparison of ECTAI, AVR, and AVR-D in terms of the utility of IoT devices (random distribution)}
    \label{fig:sim44B}
        \end{subfigure}
        \caption{Comparison of ECTAI, AVR, and AVR-D in terms of the utility of IoT devices in random and normal distribution cases.}
        \label{fig:sim44}
\end{figure} 
\begin{figure}[H]
\centering
\begin{subfigure}[b]{0.49\textwidth}
                \centering
                \includegraphics[scale=0.60]{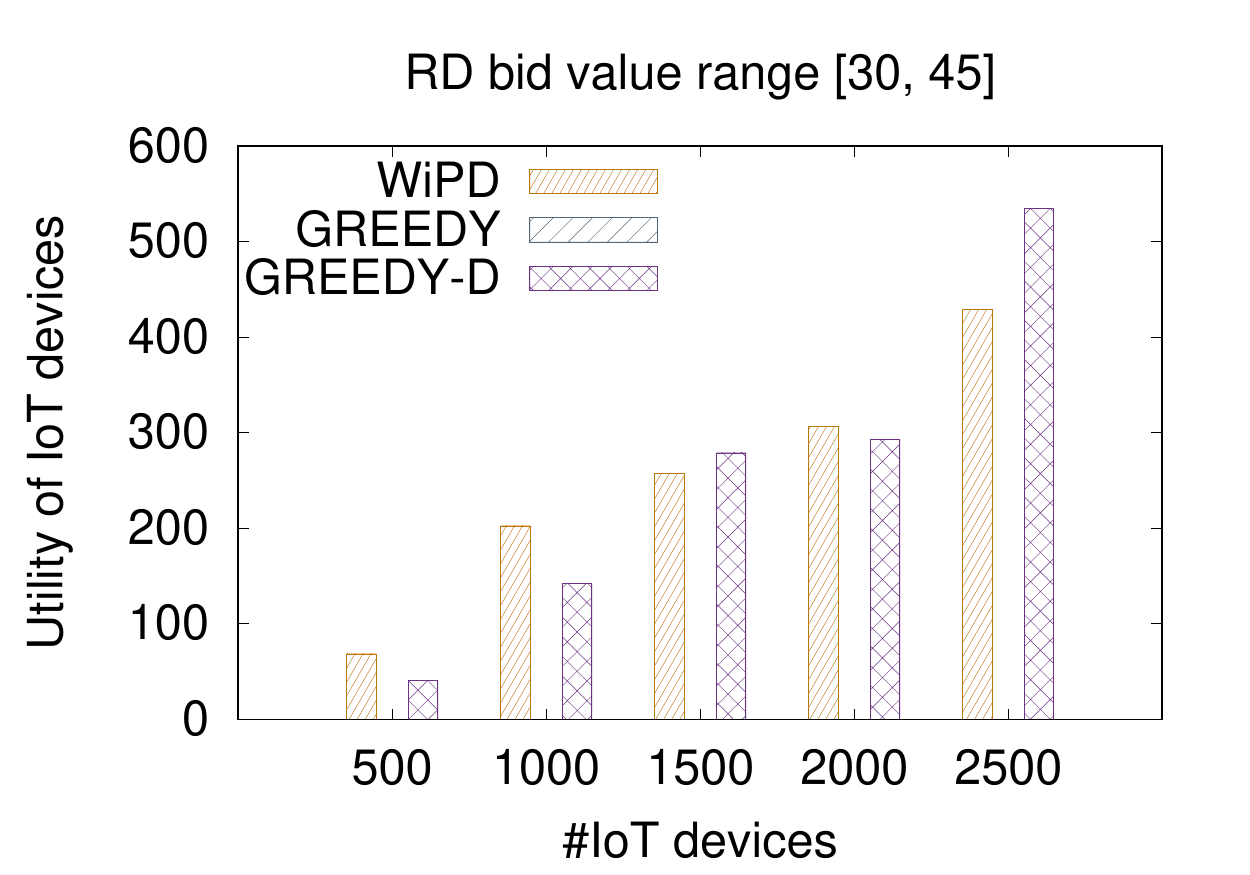}
                %\subcaption{Utility of Agents (RanD)}
                \label{fig:sim54A}
        \end{subfigure}%
        \begin{subfigure}[b]{0.49\textwidth}
                \centering
                \includegraphics[scale=0.60]{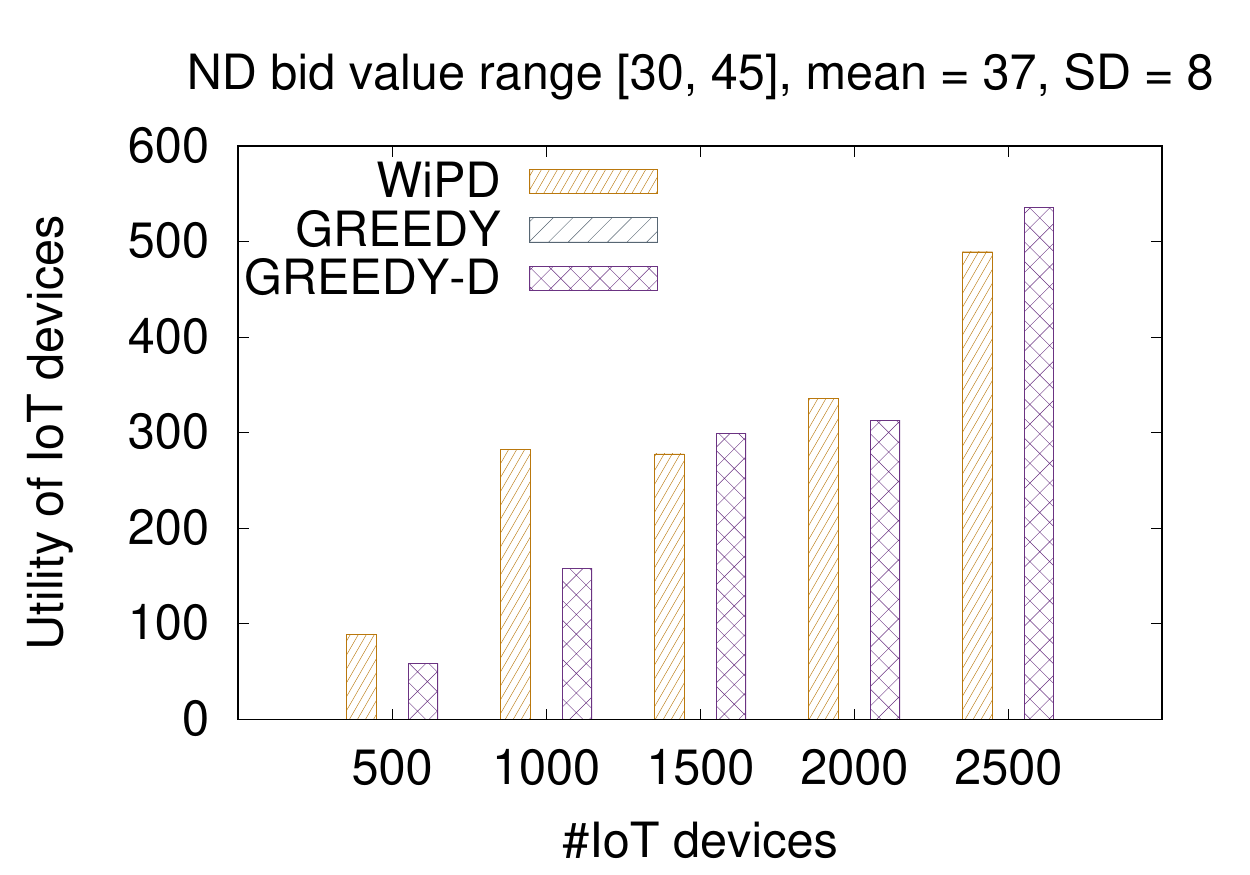}
        %\subcaption{Utility of Agents (NanD)}\label{fig:sim54B}
        \end{subfigure}
        \caption{Comparison of WiPD, GREEDY, GREEDY-D in terms of the utility of IoT devices in random and normal distribution cases.}
        \label{fig:sim54}
\end{figure}  
\indent The simulation results shown in Figure \ref{fig:sim54} show the comparison of WiPD with GREEDY on the ground of truthfulness for both \emph{random} and \emph{normal} distributions. The x-axis of the graphs represents the number of IoT devices and the y-axis represents the utility of IoT devices. In this case, the utility of IoT devices is calculated by utilizing the formula given in equation \ref{equ:1a}. It can be seen from Figure \ref{fig:sim54} that the utility of IoT devices in the case of WiPD is higher than in the case of GREEDY. It is zero in the case of GREEDY. It is due to the reason that in the case of GREEDY, the payment made to the IoT devices is their reported bid value and so the utility is 0. Further, in the graphs of Figure \ref{fig:sim54} the manipulative behavior of IoT devices can be seen in the case of the greedy mechanism (named GREEDY-D in the graphs of Figure \ref{fig:sim54}). If 30\% of the IoT devices misreport (reporting the bid value higher than their true valuation) their true valuation then the total utility of the IoT devices will go up and will be more as compared to the case when all the IoT devices were reporting their true bid value. This manipulative nature of GREEDY can be seen in both random and normal distributions. Hence, GREEDY is vulnerable to manipulation and WiPD is not vulnerable to manipulation.
\item \textbf{Running time:}  In Figure \ref{fig:sim64a} the execution time (or running time) of ECTAI is compared with the execution time of AVR. The x-axis of the graph shown in Figure \ref{fig:sim64a} represents the number of IoT devices and the y-axis represents running time in milliseconds. The simulation results show that ECTAI and AVR take almost the same time for providing the desired results. The reason is that in both the mechanism a simple arithmetic operation takes place and it takes constant time as mentioned in Lemma \ref{lemma:sp1}. The above-discussed scenario is depicted in Figure \ref{fig:sim64a}. From the simulation graphs, it can be seen that both ECTAI and AVR are scalable.
\end{itemize} 
\begin{figure}[H]
     \centering
     \begin{subfigure}[h]{0.48\textwidth}
        \centering
         \includegraphics[scale = 0.32]{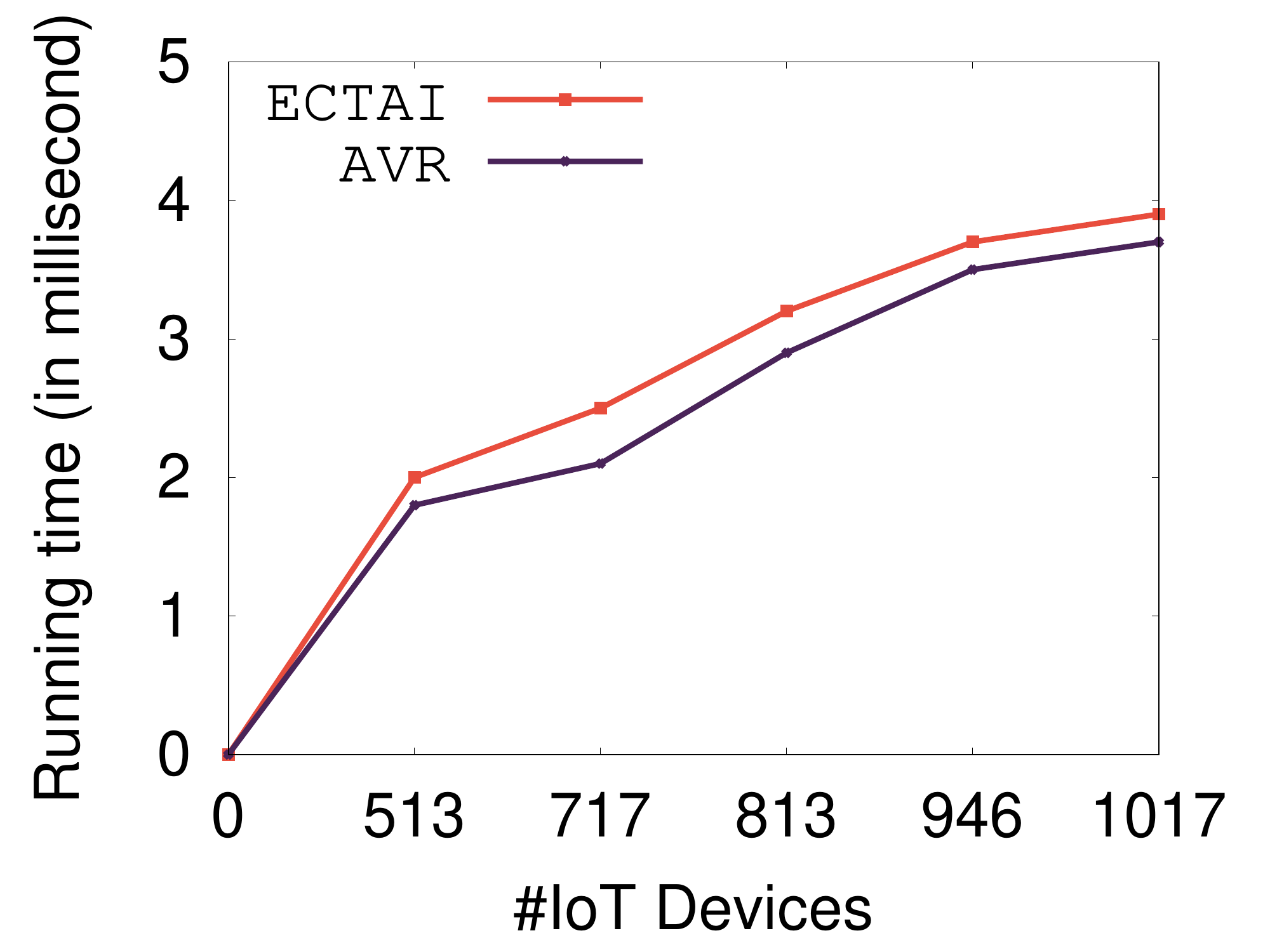}
      \caption{Running time analysis of ECTAI and AVR}
         \label{fig:sim64a}
     \end{subfigure}
      \hspace*{-7mm}
     \begin{subfigure}[h]{0.48\textwidth}
        \centering
         \includegraphics[scale = 0.32]{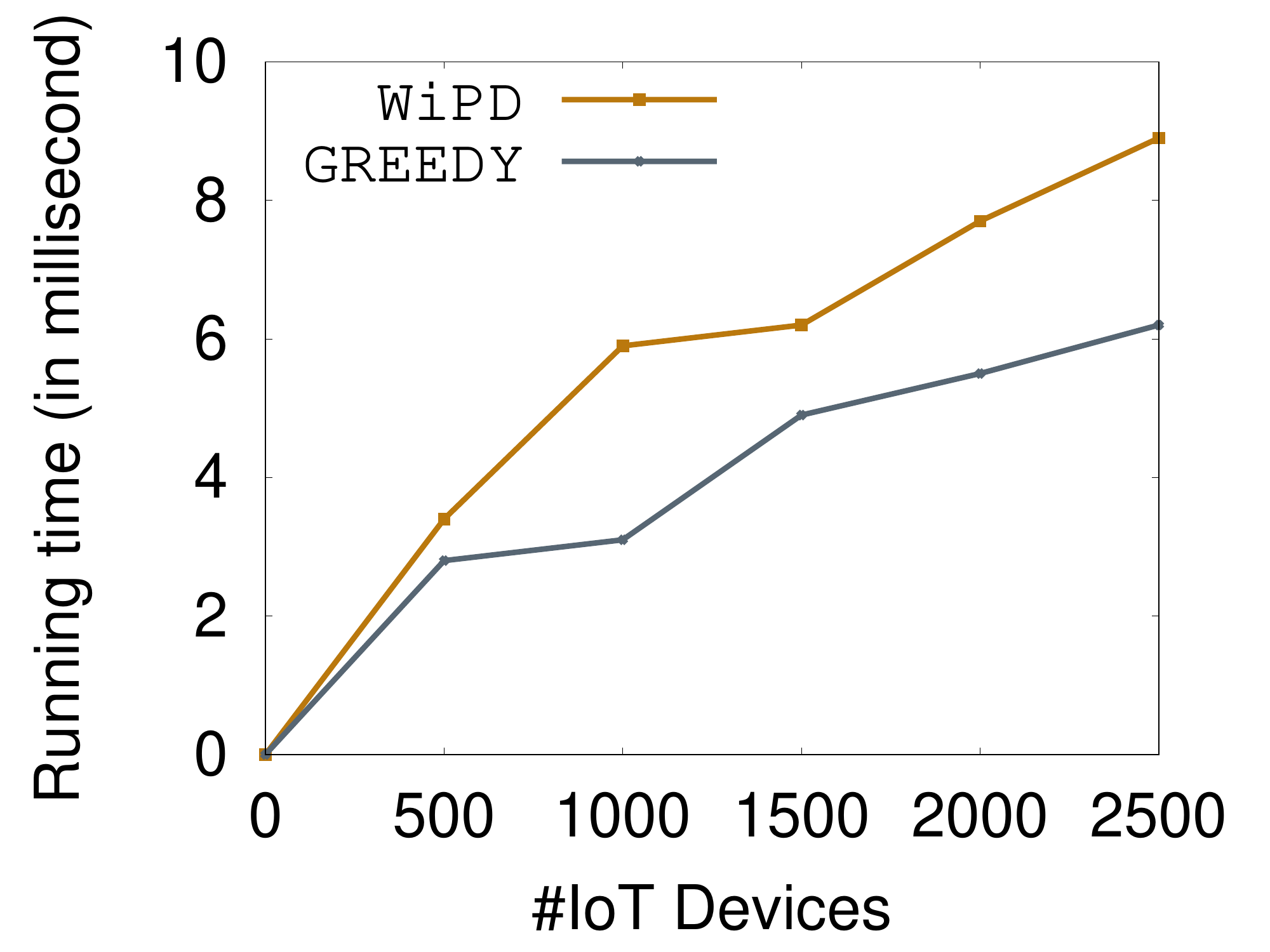}
        \caption{Running time analysis of WiPD and GREEDY}
         \label{fig:sim64b}
     \end{subfigure}
          \caption{Comparison of ECTAI and AVR, WiPD and GREEDY in terms of execution time.}
     \label{fig:sim64}
\end{figure}
In the case of WiPD and GREEDY, the comparison based on running time is shown in Figure \ref{fig:sim64b}. The x-axis of the graph shown in Figure \ref{fig:sim64b} represents the number of IoT devices and the y-axis represents running time in milliseconds. The simulation results show that WiPD takes more time than GREEDY. The reason is that in the case of WiPD deciding the set of tasks to be assigned to the IoT devices and the payment to be made to them takes more time. In the case of GREEDY the allocation rule takes time but the payment rule is simple and takes less time. The above-discussed scenario is depicted in Figure \ref{fig:sim64b}. From the simulation graphs, it can be seen that both WiPD and GREEDY are scalable.
\end{enumerate}

\section{Conclusion and Future Works}
\label{se:conc}
In this paper, one of the crowdsourcing scenarios is studied as a two-tiered process in \emph{strategic} setting. In the first tier of the proposed framework, the social connections of the IoT devices are utilized to make other IoT devices aware of the task execution event. For this purpose a \emph{truthful} mechanism namely TENM is proposed for identifying the initial notifiers such that the total payment made to the initial notifiers for their services is within the available budget. Once a substantial number of IoT devices got notified about the task execution process, in the second tier, the \emph{truthful} mechanisms namely ECTAI and WiPD are proposed that determine the quality of IoT devices and assign a subset of tasks to each of the quality IoT devices respectively. For the second tier of the proposed model, we assumed that the valuation function is \emph{gross substitute}.\\
\indent Through theoretical analysis, it is shown that the proposed mechanisms are \emph{correct}, \emph{computationally efficient}, and \emph{truthful}. Further, through probabilistic analysis, the estimate is done on the number of task executors that get notified about the task execution process. The simulation result shows that in the first tier, in the case of NTBFM the IoT devices are gaining by misreporting their costs, whereas in the case of TENM and PSM the IoT devices can maximize their utility only by reporting a true cost. The reason is that NTBFM is vulnerable to manipulation and TENM is not vulnerable to manipulation. Further, the results show that TENM, NTBFM, and PSM are budget feasible. With the increase in budget, the number of IoT devices selected as initial notifiers in a social network is increasing for all three mechanisms i.e. TENM, PSM, and NTBFM. Further, the comparison is done in terms of execution time, and found that TENM, NTBFM, and PSM are scalable. In the second tier, ECTAI and WiPD are compared to their respective benchmark mechanism on the ground of truthfulness and running time. In the case of ECTAI the IoT devices can maximize their utility only by reporting their true peak value. On the other hand in the case of AVR, the IoT devices gain by misreporting their true peak value. It is due to the reason that AVR is vulnerable to manipulation. In terms of running time, both ECTAI and AVR take almost the same time. In the case of WiPD the IoT devices can maximize their utility only by reporting their valuation. On the other hand in the case of GREEDY, the IoT devices gain by misreporting their true valuation. It is because of this reason that GREEDY is vulnerable to manipulation. In terms of running time, GREEDY outperforms WiPD but both are scalable.   \\
\indent In the future it will be interesting to see if the above-discussed set-up can be extended to the case where IoT devices have general valuations (not GS). It poses the challenge of designing a \emph{truthful} mechanism (right now the \emph{truthful} mechanism is guaranteed only when the IoT devices have GS valuations). Another direction could be, say, in addition to the above set-up each of the tasks has a start time and finish time associated with it. In such cases designing a time-bound truthful mechanism will be a challenging task. In our upcoming work, the focus will be on designing a truthful mechanism for the above-mentioned scenario that also takes care of the quality of the IoT devices and the completion of tasks within the given start and finish times.\\
\indent The results, the code, the real-world data, and the synthetic data that are utilized in the paper are made available at \url{https://github.com/chbhargavi/IoT_Elseiver}.

\section*{Acknowledgment}
The authors would like to thank the Centre of Excellence for the Internet of Things (CoE-IoT) of VIT-AP University, Amaravati, India for providing valuable suggestions and support.

\bibliographystyle{alpha}
\bibliography{phd}

\newcommand{\etalchar}[1]{$^{#1}$}
\begin{thebibliography}{WHW{\etalchar{+}}21}

\bibitem[AD22]{amour2022crowdsourcing}
Lamine Amour and Abdulhalim Dandoush.
\newblock Crowdsourcing based performance analysis of mobile user heterogeneous
  services.
\newblock {\em Electronics}, 11(7):1011, 2022.

\bibitem[AEK{\etalchar{+}}18]{abualsaud2018survey}
Khalid Abualsaud, Tarek~M Elfouly, Tamer Khattab, Elias Yaacoub, Loay~Sabry
  Ismail, Mohamed~Hossam Ahmed, and Mohsen Guizani.
\newblock A survey on mobile crowd-sensing and its applications in the iot era.
\newblock {\em Ieee access}, 7:3855--3881, 2018.

\bibitem[ASN22]{ang2022towards}
Kenneth Li~Minn Ang, Jasmine Kah~Phooi Seng, and Ericmoore Ngharamike.
\newblock Towards crowdsourcing internet of things (crowd-iot): Architectures,
  security and applications.
\newblock {\em Future Internet}, 14(2):49, 2022.

\bibitem[CGV21]{cricelli2021crowdsourcing}
Livio Cricelli, Michele Grimaldi, and Silvia Vermicelli.
\newblock Crowdsourcing and open innovation: a systematic literature review, an
  integrated framework and a research agenda.
\newblock {\em Review of Managerial Science}, pages 1--42, 2021.

\bibitem[CLRS09]{Coreman_2009}
T.~H. Cormen, C.~E. Leiserson, R.~L. Rivest, and C.~Stein.
\newblock {\em Introduction to algorithms}.
\newblock MIT press, 2009.

\bibitem[DKP{\etalchar{+}}20]{s20072055}
V.~S. Dasari, B.~Kantarci, M.~Pouryazdan, L.~Foschini, and M.~Girolami.
\newblock Game theory in mobile crowdsensing: A comprehensive survey.
\newblock {\em Sensors}, 20(7), 2020.

\bibitem[DTY{\etalchar{+}}17]{8031314}
Z.~Duan, L.~Tian, M.~Yan, Z.~Cai, Q.~Han, and G.~Yin.
\newblock Practical incentive mechanisms for iot-based mobile crowdsensing
  systems.
\newblock {\em IEEE Access}, 5:20383--20392, 2017.

\bibitem[EOS{\etalchar{+}}23]{elsokkary2023crowdsourced}
Nada Elsokkary, Hadi Otrok, Shakti Singh, Rabeb Mizouni, Hassan Barada, and
  Mohammed Omar.
\newblock Crowdsourced last mile delivery: Collaborative workforce assignment.
\newblock {\em Internet of Things}, page 100692, 2023.

\bibitem[FSS{\etalchar{+}}22]{fang2022selecting}
Xiu Fang, Suxin Si, Guohao Sun, Quan~Z Sheng, Wenjun Wu, Kang Wang, and Hang
  Lv.
\newblock Selecting workers wisely for crowdsourcing when copiers and domain
  experts co-exist.
\newblock {\em Future Internet}, 14(2):37, 2022.

\bibitem[Fuj20]{Fujihara2020PoWaPPO}
Akihiro Fujihara.
\newblock Powap: Proof of work at proximity for a crowdsensing system for
  collaborative traffic information gathering.
\newblock {\em Internet Things}, 10:100046, 2020.

\bibitem[GNS14]{DBLP:conf/hcomp/GoelNS14}
G.~Goel, A.~Nikzad, and A.~Singla.
\newblock Mechanism design for crowdsourcing markets with heterogeneous tasks.
\newblock In {\em Proceedings of the Second {AAAI} Conference on Human
  Computation and Crowdsourcing, {HCOMP} 2014, November 2-4, 2014, Pittsburgh,
  Pennsylvania, {USA}}, 2014.

\bibitem[GS20]{8667369}
Xiaowen Gong and Ness~B. Shroff.
\newblock Truthful data quality elicitation for quality-aware data
  crowdsourcing.
\newblock {\em IEEE Transactions on Control of Network Systems}, 7(1):326--337,
  2020.

\bibitem[HKG22]{10.1145/3494522}
Danula Hettiachchi, Vassilis Kostakos, and Jorge Goncalves.
\newblock A survey on task assignment in crowdsourcing.
\newblock {\em ACM Comput. Surv.}, 55(3), feb 2022.

\bibitem[JNX{\etalchar{+}}22]{9416787}
Lingyun Jiang, Xiaofu Niu, Jia Xu, Dejun Yang, and Lijie Xu.
\newblock Incentive mechanism design for truth discovery in crowdsourcing with
  copiers.
\newblock {\em IEEE Transactions on Services Computing}, 15(5):2838--2853,
  2022.

\bibitem[JVLL12]{jaimes2012location}
Luis~G Jaimes, Idalides Vergara-Laurens, and Miguel~A Labrador.
\newblock A location-based incentive mechanism for participatory sensing
  systems with budget constraints.
\newblock In {\em 2012 IEEE International Conference on Pervasive Computing and
  Communications}, pages 103--108. IEEE, 2012.

\bibitem[KEJ22]{kim2022privacy}
Jong~Wook Kim, Kennedy Edemacu, and Beakcheol Jang.
\newblock Privacy-preserving mechanisms for location privacy in mobile
  crowdsensing: A survey.
\newblock {\em Journal of Network and Computer Applications}, page 103315,
  2022.

\bibitem[KOS11]{6120180}
David~R. Karger, Sewoong Oh, and Devavrat Shah.
\newblock Budget-optimal crowdsourcing using low-rank matrix approximations.
\newblock In {\em 2011 49th Annual Allerton Conference on Communication,
  Control, and Computing (Allerton)}, pages 284--291, 2011.

\bibitem[LFW{\etalchar{+}}22]{liu2022budget}
Xiang Liu, Chenchen Fu, Weiwei Wu, Minming Li, Wanyuan Wang, Vincent Chau, and
  Junzhou Luo.
\newblock Budget-feasible mechanisms in two-sided crowdsensing markets:
  Truthfulness, fairness, and efficiency.
\newblock {\em IEEE Transactions on Mobile Computing}, 2022.

\bibitem[LK14a]{fb}
Jure Leskovec and Andrej Krevl.
\newblock {SNAP Datasets}: {Stanford} large network dataset collection.
\newblock \url{http://snap.stanford.edu/data/ego-Facebook.html}, June 2014.

\bibitem[LK14b]{twit}
Jure Leskovec and Andrej Krevl.
\newblock {SNAP Datasets}: {Stanford} large network dataset collection.
\newblock \url{http://snap.stanford.edu/data/ego-Twitter.html}, June 2014.

\bibitem[LK14c]{gplus}
Jure Leskovec and Andrej Krevl.
\newblock {SNAP Datasets}: {Stanford} large network dataset collection.
\newblock \url{http://snap.stanford.edu/data/ego-Gplus.html}, June 2014.

\bibitem[LKLJ15]{10.1145/2796314.2745871}
Donghyeon Lee, Joonyoung Kim, Hyunmin Lee, and Kyomin Jung.
\newblock Reliable multiple-choice iterative algorithm for crowdsourcing
  systems.
\newblock {\em SIGMETRICS Perform. Eval. Rev.}, 43(1):205–216, jun 2015.

\bibitem[LL17]{7892023}
Yang Liu and Mingyan Liu.
\newblock An online learning approach to improving the quality of
  crowd-sourcing.
\newblock {\em IEEE/ACM Transactions on Networking}, 25(4):2166--2179, 2017.

\bibitem[LLZ23]{electronics12040960}
Zhuo Li, Zecheng Li, and Wei Zhang.
\newblock Quality-aware task allocation for mobile crowd sensing based on edge
  computing.
\newblock {\em Electronics}, 12(4), 2023.

\bibitem[ML22]{10.1109/TCSS.2022.3149000}
T.~S.~H. Mak and A.~Y.~S. Lam.
\newblock Two-stage auction mechanism for long-term participation in
  crowdsourcing.
\newblock {\em Transactions on Computational Social Systems}, pages 1--14,
  2022.

\bibitem[Mou80]{10.2307/30023824}
H.~Moulin.
\newblock On strategy-proofness and single peakedness.
\newblock {\em Public Choice}, 35(4):437--455, 1980.

\bibitem[MSPK22]{Mukhopadhyay2021}
Jaya Mukhopadhyay, Vikash~Kumar Singh, Anita Pal, and Abhishek Kumar.
\newblock A truthful budget feasible mechanism for iot-based participatory
  sensing with incremental arrival of budget.
\newblock {\em Journal of Ambient Intelligence and Humanized Computing},
  13:1107--1124, Feb 2022.

\bibitem[N.N07]{NNisa_Pre_2007}
N.Nisan.
\newblock {\em Algorithmic Game Theory}, chapter Introduction to Mechanism
  Design, pages 209--242.
\newblock Cambridge University Press, New York, 2007.

\bibitem[PL18]{phuttharak2018review}
Jurairat Phuttharak and Seng~W Loke.
\newblock A review of mobile crowdsourcing architectures and challenges: Toward
  crowd-empowered internet-of-things.
\newblock {\em Ieee access}, 7:304--324, 2018.

\bibitem[QZH{\etalchar{+}}22]{qiao2022truthful}
Yu~Qiao, Jie Zhang, Qiangqiang He, Yi~Gu, Jun Wu, Lei Zhang, and Chongjun Wang.
\newblock Truthful profit maximization mechanisms for mobile crowdsourcing.
\newblock {\em Wireless Communications and Mobile Computing}, 2022, 2022.

\bibitem[Rou14]{T.roughgarden_20145}
T.~Roughgarden.
\newblock {CS}364{B}: Frontiers in mechanism design, ({S}tanford {U}niversity
  {C}ourse), {L}ecture \#5: he gross substitutes condition, January 2014.

\bibitem[Rou16a]{T.roughgarden_201618}
T.~Roughgarden.
\newblock {CS}261: A second course in algorithms ({S}tanford {U}niversity
  course), 2016.
\newblock Lecture \#18: Five Essential Tools for the Analysis of Randomized
  Algorithms.

\bibitem[Rou16b]{T.roughgarden_2016}
T.~Roughgarden.
\newblock {CS}269{I}: Incentives in computer science ({S}tanford {U}niversity
  course), 2016.
\newblock Lecture 3: Strategic Voting.

\bibitem[Rou16c]{T.roughgarden_20163}
T.~Roughgarden.
\newblock Cs269i: Incentives in computer science (stanford university lecture
  notes), 2016.
\newblock Lecture 3: Strategic Voting.

\bibitem[Sin10]{Singer_2016}
Y.~Singer.
\newblock Budget feasible mechanisms.
\newblock In {\em Proceedings of the 2010 IEEE $51^{st}$ Annual Symposium on
  Foundations of Computer Science}, FOCS '10, pages 765--774, Washington, DC,
  USA, 2010. IEEE Computer Society.

\bibitem[Sin12a]{Singer:2012:WFI:2124295.2124381}
Y.~Singer.
\newblock How to win friends and influence people, truthfully: Influence
  maximization mechanisms for social networks.
\newblock In {\em Proceedings of the $5^{th}$ International Conference on Web
  Search and Data Mining}, WSDM '12, pages 733--742, New York, NY, USA, 2012.
  ACM.

\bibitem[Sin12b]{singer2012win}
Yaron Singer.
\newblock How to win friends and influence people, truthfully: influence
  maximization mechanisms for social networks.
\newblock In {\em Proceedings of the fifth ACM international conference on Web
  search and data mining}, pages 733--742, 2012.

\bibitem[SJSM22]{singh2022quad}
Vikash~Kumar Singh, Anjani~Samhitha Jasti, Sunil~Kumar Singh, and Sanket
  Mishra.
\newblock Quad: A quality aware multi-unit double auction framework for
  iot-based mobile crowdsensing in strategic setting.
\newblock {\em arXiv preprint arXiv:2203.06647}, 2022.

\bibitem[SMXK20]{Singh_2020}
V.~K. Singh, S.~Mukhopadhyay, F.~Xhafa, and P.~Krause.
\newblock A quality-assuring, combinatorial auction based mechanism for
  {I}o{T}-based crowdsourcing.
\newblock In {\em Advances in Edge Computing: Massive Parallel Processing and
  Applications}, volume~35, pages 148--177. IOS Press, 2020.

\bibitem[SMXS20]{Singh2019}
Vikash~Kumar Singh, Sajal Mukhopadhyay, Fatos Xhafa, and Aniruddh Sharma.
\newblock A budget feasible peer graded mechanism for iot-based crowdsourcing.
\newblock {\em Journal of Ambient Intelligence and Humanized Computing},
  11(4):1531--1551, Jan 2020.

\bibitem[VNTS{\etalchar{+}}22]{vahdat2022survey}
Hamed Vahdat-Nejad, Tahereh Tamadon, Fatemeh Salmani, Sajedeh Abbasi,
  Fateme-Sadat Seyyedi, et~al.
\newblock A survey on crowdsourcing applications in smart cities.
\newblock {\em arXiv preprint arXiv:2204.05421}, 2022.

\bibitem[WHW{\etalchar{+}}21]{8999584}
Zhibo Wang, Yuting Huang, Xinkai Wang, Ju~Ren, Qian Wang, and Libing Wu.
\newblock Socialrecruiter: Dynamic incentive mechanism for mobile crowdsourcing
  worker recruitment with social networks.
\newblock {\em IEEE Transactions on Mobile Computing}, 20(5):2055--2066, 2021.

\bibitem[XCX{\etalchar{+}}17]{xiao2017mobile}
Liang Xiao, Tianhua Chen, Caixia Xie, Huaiyu Dai, and H~Vincent Poor.
\newblock Mobile crowdsensing games in vehicular networks.
\newblock {\em IEEE Transactions on Vehicular Technology}, 67(2):1535--1545,
  2017.

\bibitem[XLG{\etalchar{+}}22]{9741370}
Jia Xu, Zhuangye Luo, Chengcheng Guan, Dejun Yang, Linfeng Liu, and Yan Zhang.
\newblock Hiring a team from social network: Incentive mechanism design for
  two-tiered social mobile crowdsourcing.
\newblock {\em IEEE Transactions on Mobile Computing}, pages 1--1, 2022.

\bibitem[XZC{\etalchar{+}}21]{10.1145/3487580}
Jia Xu, Yuanhang Zhou, Gongyu Chen, Yuqing Ding, Dejun Yang, and Linfeng Liu.
\newblock Topic-aware incentive mechanism for task diffusion in mobile
  crowdsourcing through social network.
\newblock {\em ACM Trans. Internet Technol.}, 22(1), dec 2021.

\bibitem[YCS22]{9369249}
Akash Yadav, Joydeep Chandra, and Ashok~Singh Sairam.
\newblock A budget and deadline aware task assignment scheme for crowdsourcing
  environment.
\newblock {\em IEEE Transactions on Emerging Topics in Computing},
  10(2):1020--1034, 2022.

\end{thebibliography}

\newpage 
\appendix
\end{document}